\documentclass[a4paper,UKenglish,cleveref, thm-restate]{lipics-v2021}

\hideLIPIcs  


\bibliographystyle{plainurl}

\title{Towards exact structural thresholds for parameterized complexity} 

\author{Falko Hegerfeld}{Humboldt-Universit\"at zu Berlin, Germany}{hegerfeld@informatik.hu-berlin.de}{https://orcid.org/0000-0003-2125-5048}{Partially supported by DFG Emmy Noether-grant (KR 4286/1).}
\author{Stefan Kratsch}{Humboldt-Universit\"at zu Berlin, Germany}{kratsch@informatik.hu-berlin.de}{https://orcid.org/0000-0002-0193-7239}{}

\authorrunning{F. Hegerfeld and S. Kratsch} 
\Copyright{Falko Hegerfeld and Stefan Kratsch} 

\ccsdesc[500]{Theory of computation~Parameterized complexity and exact algorithms}
\ccsdesc[300]{Mathematics of computing~Graph algorithms} 

\keywords{Parameterized complexity, lower bound, vertex cover, odd cycle transversal, SETH, modulator, treedepth, cliquewidth} 

\category{} 





\nolinenumbers 

\EventEditors{Holger Dell and Jesper Nederlof}
\EventNoEds{2}
\EventLongTitle{17th International Symposium on Parameterized and Exact Computation (IPEC 2022)}
\EventShortTitle{IPEC 2022}
\EventAcronym{IPEC}
\EventYear{2022}
\EventDate{September 7--9, 2022}
\EventLocation{Potsdam, Germany}
\EventLogo{}
\SeriesVolume{249}
\ArticleNo{9}

\usepackage{multirow}
\usepackage{tabularx, environ}
\usepackage{xspace}

\theoremstyle{definition}
\newtheorem{dfn}{Definition}[section]

\theoremstyle{plain}
\newtheorem{mythm}[dfn]{Theorem}
\newtheorem{cor}[dfn]{Corollary}
\newtheorem{lem}[dfn]{Lemma}

\crefname{dfn}{Definition}{Definitions}
\crefname{thm}{Theorem}{Theorems}
\crefname{cor}{Corollary}{Corollaries}
\crefname{lem}{Lemma}{Lemmata}
\crefname{prop}{Proposition}{Propositions}
\crefname{rem}{Remark}{Remarks}
\crefname{algorithm}{Algorithm}{Algorithms}
\crefname{algocf}{Algorithm}{Algorithms}

\crefformat{equation}{(#2#1#3)}
\crefformat{figure}{#2Figure~#1#3}
\crefformat{section}{#2Section~#1#3}

\newcommand{\Oh}{\mathcal{O}} 
\newcommand{\NP}{\ensuremath{\textsf{NP}}\xspace}
\newcommand{\SETH}{SETH\xspace}

\newcommand{\NN}{\mathbb{N}}
\newcommand{\EF}{\mathcal{T}} 
\newcommand{\TT}{\mathcal{T}} 
\newcommand{\bag}{\mathbb{B}}
\newcommand{\powerset}{\mathcal{P}}
\newcommand{\tfa}{\text{ for all }}
\newcommand{\sep}{:}

\newcommand{\DS}{\textsc{Dominating Set}\xspace}
\newcommand{\TDS}{\textsc{Total Dominating Set}\xspace}
\newcommand{\VC}{\textsc{Vertex Cover}\xspace}
\newcommand{\OCT}{\textsc{Odd Cycle Transversal}\xspace}
\newcommand{\COL}{\textsc{Coloring}\xspace}
\newcommand{\SAT}{\textsc{Satisfiability}\xspace}
\newcommand{\HS}{\textsc{Hitting Set}\xspace}
\newcommand{\MC}{\textsc{Maximum Cut}\xspace}
\newcommand{\LC}{\textsc{List Coloring}\xspace}
\newcommand{\LHom}{\textsc{List Homomorphism}\xspace}
\newcommand{\CVC}{\textsc{Connected Vertex Cover}\xspace}
\newcommand{\ncols}{r}
\newcommand{\DTC}{\textsc{Deletion to $\ncols$-Colorable}\xspace}
\newcommand{\KFD}{\textsc{$K_\ncols$-free Deletion}\xspace}

\DeclareMathOperator{\lcw}{lin-cw} 
\DeclareMathOperator{\cw}{cw} 
\DeclareMathOperator{\tw}{tw} 
\DeclareMathOperator{\pw}{pw} 
\DeclareMathOperator{\td}{td} 
\DeclareMathOperator{\ctw}{ctw} 
\DeclareMathOperator{\tctw}{tc-tw} 
\DeclareMathOperator{\tcpw}{tc-pw} 
\DeclareMathOperator{\tcctw}{tc-ctw} 
\DeclareMathOperator{\tctd}{tc-td} 
\DeclareMathOperator{\nd}{nd} 
\newcommand{\modulator}{X}
\newcommand{\tcm}{\mathcal{X}}
\newcommand{\tcpartition}{\Pi_{tc}}

\newcommand{\intro}[1]{{#1}}
\newcommand{\union}{\cup}
\newcommand{\relab}[2]{\rho_{#1 \rightarrow #2}}
\newcommand{\join}[2]{\eta_{#1,#2}}

\newcommand{\lfct}{\mathtt{lab}}
\newcommand{\tree}{\mathcal{T}}
\newcommand{\cwexp}{\mu}

\newcommand{\DP}{A}
\newcommand{\psols}{\mathcal{Q}}


\newcommand{\cover}{Y}
\newcommand{\budget}{b}
\newcommand{\coloring}{\varphi}

\newcommand{\del}{\bot}
\newcommand{\delset}{{\coloring^{-1}(\del)}}
\newcommand{\algo}{\mathcal{A}}

\newcommand{\formula}{\sigma}
\newcommand{\tassign}{\tau}
\newcommand{\grpsize}{p}
\newcommand{\vgrpsize}{{p_0}}
\newcommand{\ngrps}{t}
\newcommand{\nvars}{n}
\newcommand{\nclss}{m}
\newcommand{\clss}{q}
\newcommand{\embedding}{\kappa}
\newcommand{\eps}{\varepsilon}


\newcommand{\guards}{Q}
\newcommand{\goodsets}{\mathcal{S}}
\newcommand{\states}{\mathbf{states}}
\newcommand{\one}{\mathbf{1}}
\newcommand{\two}{\mathbf{2}}
\newcommand{\three}{\mathbf{3}}
\newcommand{\four}{\mathbf{4}}

\newcommand{\crit}{\mathcal{H}}
\newcommand{\arrow}{A}
\newcommand{\decode}{B}
\newcommand{\ttcset}{\mathcal{U}}
\newcommand{\setfam}{\mathcal{S}}
\newcommand{\scolorings}{\Phi}
\newcommand{\packing}{\mathcal{P}}
\newcommand{\cost}{\mathrm{cost}}

\newcommand{\UU}{U}
\newcommand{\family}{\mathcal{F}}
\newcommand{\hsbudget}{t}

\usepackage{tikzit}

\tikzstyle{filled}=[fill=black, draw=black, shape=circle, inner sep=1pt, minimum size=5]
\tikzstyle{big empty}=[fill=white, draw=black, shape=rectangle, inner sep=2pt, minimum size=20pt]
\tikzstyle{small empty}=[fill=white, draw=black, shape=circle, inner sep=0pt, minimum size=20pt, line width=0.75pt]
\tikzstyle{rect}=[fill=black, draw=black, shape=rectangle, inner sep=1pt, minimum size=5pt]
\tikzstyle{big empty red}=[fill=white, draw={rgb,255: red,171; green,0; blue,60}, shape=rectangle, inner sep=2pt, minimum size=20pt]
\tikzstyle{mid empty}=[fill=white, draw=black, shape=rectangle, inner sep=2pt, minimum size=20pt]
\tikzstyle{mid empty red}=[fill=white, draw={rgb,255: red,171; green,0; blue,60}, shape=rectangle, inner sep=2pt, minimum size=20pt]
\tikzstyle{small empty red}=[fill={rgb,255: red,227; green,159; blue,178}, draw={rgb,255: red,171; green,0; blue,60}, shape=circle, inner sep=0pt, minimum size=20pt, line width=0.75pt]
\tikzstyle{red rect}=[fill={rgb,255: red,171; green,0; blue,60}, draw={rgb,255: red,171; green,0; blue,60}, shape=rectangle, minimum size=5pt]
\tikzstyle{big circle}=[fill=white, draw=black, shape=circle, line width=1pt, minimum size=25pt]
\tikzstyle{filled red}=[fill={rgb,255: red,171; green,0; blue,60}, draw={rgb,255: red,171; green,0; blue,60}, shape=circle, inner sep=1pt, minimum size=5pt]
\tikzstyle{state 0}=[fill=white, draw=black, shape=circle, line width=0.75pt, minimum size=10pt]
\tikzstyle{state L}=[fill={rgb,255: red,25; green,61; blue,182}, draw=black, shape=circle, line width=0.75pt, minimum size=10pt]
\tikzstyle{state R}=[fill={rgb,255: red,191; green,0; blue,64}, draw=black, shape=circle, line width=0.75pt, minimum size=10pt]
\tikzstyle{state unknown}=[fill=black, draw=black, shape=circle, line width=0.75pt, minimum size=10pt]
\tikzstyle{tiny empty}=[fill=white, draw=black, shape=circle, inner sep=0pt, minimum size=12pt, line width=0.75pt]
\tikzstyle{tiny filled}=[fill=black, draw=black, shape=circle, inner sep=0pt, minimum size=12pt, line width=0.75pt]
\tikzstyle{wide rectangle}=[fill=white, draw=black, shape=rectangle, inner sep=2 pt, minimum height=20pt, minimum width=30pt, line width=0.75pt]
\tikzstyle{empty 30pt}=[fill=white, draw=black, shape=circle, inner sep=0pt, minimum size=30pt, line width=0.75pt]
\tikzstyle{empty 20pt}=[fill=white, draw=black, shape=circle, inner sep=0pt, minimum size=20pt, line width=0.75pt]
\tikzstyle{filled 10pt}=[fill=black, draw=black, shape=circle, minimum size=10pt]
\tikzstyle{tiny rectangle}=[fill=white, draw=black, shape=rectangle, minimum width=5pt, minimum height=5pt, inner sep=0pt]
\tikzstyle{dash circle}=[fill=white, draw=black, shape=circle, dashed, inner sep=0.5pt]
\tikzstyle{tall}=[fill=white, draw=black, shape=rectangle, minimum width=1cm, minimum height=1.25cm, inner sep=0pt, line width=0.5pt]
\tikzstyle{filled R}=[fill={rgb,255: red,191; green,0; blue,64}, draw=black, shape=circle, minimum size=5pt, inner sep=1pt]
\tikzstyle{filled B}=[fill={rgb,255: red,0; green,0; blue,208}, draw=black, shape=circle, minimum size=5pt, inner sep=1pt]
\tikzstyle{filled G}=[fill={rgb,255: red,0; green,186; blue,0}, draw=black, shape=circle, minimum size=5pt, inner sep=1pt]
\tikzstyle{filled W}=[fill=white, draw=black, shape=circle, minimum size=5pt, inner sep=1pt]

\tikzstyle{thick}=[-, line width=1pt]
\tikzstyle{thick arrow}=[->, line width=1pt]
\tikzstyle{packing}=[-, line width=1pt, draw={rgb,255: red,171; green,0; blue,60}]
\tikzstyle{dotted}=[-, line width=1pt, dashed]
\tikzstyle{dashed cyan thick}=[-, dashed, line width=1pt, draw={rgb,255: red,2; green,200; blue,200}]
\tikzstyle{white}=[-, draw=white]
\tikzstyle{del}=[-, double, line width=1pt]
\tikzstyle{thin}=[-, line width=0.5pt]
\tikzstyle{red dashed}=[-, line width=1pt, draw={rgb,255: red,191; green,0; blue,64}, dashed]
\tikzstyle{dark green}=[-, line width=1pt, draw={rgb,255: red,44; green,127; blue,35}]
\tikzstyle{thin green}=[-, line width=0.5pt, draw={rgb,255: red,27; green,125; blue,52}]
\tikzstyle{thin red}=[-, line width=0.5pt, draw={rgb,255: red,181; green,0; blue,3}]
\tikzstyle{thin dashed}=[-, line width=0.5pt, dashed]

\begin{document}

\maketitle

\thispagestyle{empty}

\begin{abstract}
Parameterized complexity seeks to optimally use input structure to obtain faster algorithms for NP-hard problems. This has been most successful for graphs of low treewidth, i.e., graphs decomposable by small separators: Many problems admit fast algorithms relative to treewidth and many of them are optimal under the Strong Exponential-Time Hypothesis (SETH). Fewer such results are known for more general structure such as low clique-width (decomposition by  large and dense but structured separators) and more restrictive structure such as low deletion distance to some sparse graph class.

Despite these successes, such results remain ``islands'' within the realm of possible structure. Rather than adding more islands, we seek to determine the transitions between them, that is, we aim for structural thresholds where the complexity increases as input structure becomes more general. Going from deletion distance to treewidth, is a single deletion set to a graph with simple components enough to yield the same lower bound as for treewidth or does it take many disjoint separators? Going from treewidth to clique-width, how much more density entails the same complexity as clique-width? Conversely, what is the most restrictive structure that yields the same lower bound?

For treewidth, we obtain both refined and new lower bounds that apply already to graphs with a single separator $X$ such that $G-X$ has treewidth at most $r=\Oh(1)$, while $G$ has treewidth $|X|+\Oh(1)$. We rule out algorithms running in time $\Oh^*((r+1-\varepsilon)^{k})$ for \textsc{Deletion to $\ncols$-Colorable} parameterized by $k=|X|$; this implies the same lower bound relative to treedepth and (hence) also to treewidth. It specializes to $\Oh^*((3-\varepsilon)^{k})$ for \textsc{Odd Cycle Transversal} where $\tw(G-X)\leq r=2$ is best possible. For clique-width, an extended version of the above reduction rules out time $\Oh^*((4-\varepsilon)^k)$, where $X$ is allowed to be a possibly large separator consisting of $k$ (true) twinclasses, while the treewidth of $G - X$ remains $\ncols$; this is proved also for the more general \textsc{Deletion to $\ncols$-Colorable} and it implies the same lower bound relative to clique-width. Further results complement what is known for \textsc{Vertex Cover}, \textsc{Dominating Set} and \textsc{Maximum Cut}. All lower bounds are matched by existing and newly designed algorithms.
\end{abstract}

\newpage
\setcounter{page}{1}

\section{Introduction}

The goal of parameterized complexity is to leverage input structure to obtain faster algorithms than in the worst case and to identify algorithmically useful structure. The most prominent structural graph parameter \emph{treewidth} measures the size of \emph{separators} decomposing the graph. Many problems admit fast algorithms relative to treewidth and we can often certify their optimality assuming the \emph{Strong Exponential-Time Hypothesis} (\SETH) \cite{CurticapeanLN18, CyganKN18, CyganNPPRW22, LokshtanovMS18}. Such (conditional) \emph{optimality results} allow us to conduct a precise study of the impact of structure on the running time, whereas otherwise the currently best running time might be an artifact due to the momentary lack of algorithmic tools and not inherent to the structure. 

The structure captured by treewidth can be varied in several ways: In the \emph{sparse setting}, we may restrict the interplay of separators and/or allow additional connected components from some graph class $\mathcal{H}$; this yields notions such as treedepth as well as deletion resp.\ elimination distance to $\mathcal{H}$. In the \emph{dense setting}, we may allow large and dense but structured separators; this yields e.g.\ clique-width and rank-width. Conceptually, the difference between parameters may be quite large: if the complexity of a problem changes between two parameters, then it is difficult to pinpoint which structural feature has lead to the change in complexity.

We seek to delineate more \emph{exact structural thresholds} between these parameters. This can be done by designing algorithms relative to more permissible parameters or by establishing the same lower bounds relative to more restrictive parameters. We focus on the latter approach in a fine-grained setting, i.e., all considered problems can be solved in time $\Oh^*(c^k)$\footnote{The $\Oh^*$-notation suppresses factors that are polynomial in the input size.} for some constant $c$ and parameter $k$ and we determine the precise value of the base $c$. 

For parameters other than treewidth far fewer optimality results are known. In particular, to the best of our knowledge, the only known fine-grained optimality results for \NP-hard problems relative to a deletion distance are for $\ncols$-\COL~\cite{JaffkeJ23, LokshtanovMS18}, its generalization \LHom~\cite{PiecykR21}, and isolated results on \VC~\cite{JacobPRS22} and \CVC~\cite{CyganDLMNOPSW16}. The crux is that other lower bound proofs deal with more complex problems (e.g., deletion of vertices, packing of subgraphs, etc.) by copying the same (type of) partial solution over many \emph{noncrossing} separators; this addresses several obstacles but makes the approach unsuitable for deletion distance parameters (or even for treedepth). We show that a much broader range of problems may admit such improved lower bounds by giving the new tight lower bounds for \emph{vertex deletion problems} such as \VC and \OCT relative to deletion distance parameters, in both sparse and dense settings. 

\subparagraph*{Sparse Setting.} 
Our main problem of study is \DTC, i.e., delete as few vertices as possible so that an $\ncols$-colorable graph remains, which specializes to \VC for $\ncols = 1$ and to \OCT for $\ncols = 2$. The first parameterization which we study is the size $|\modulator|$ of a \emph{modulator} $\modulator \subseteq V(G)$, or deletion distance, to treewidth $\ncols$, i.e., $\tw(G - \modulator) \leq \ncols$. Our main result in the sparse setting is the following.

\begin{mythm}
  \label{thm:intro_lb_sparse}
  If there are $\ncols \geq 2$, $\eps > 0$ such that \DTC can be solved in time $\Oh^*((\ncols + 1 - \eps)^{|\modulator|})$, where $\modulator$ is a modulator to treewidth $\ncols$, then \SETH is false.\footnote{We assume that an appropriate decomposition is given, thus strengthening the lower bounds.}
\end{mythm}
The general construction for \DTC, $\ncols \geq 2$, does not work for the case $\ncols = 1$, i.e., \VC, and we fill this gap by providing a simple ad-hoc construction for \VC parameterized by a modulator to pathwidth 2.
\begin{mythm}
  \label{thm:intro_lb_vc}
  If there is an $\eps > 0$ such that \VC can be solved in time $\Oh^*((2 - \eps)^{|\modulator|})$, where $\modulator$ is a modulator to pathwidth 2, then \SETH is false.
\end{mythm} 
These results improve the known lower bounds for \VC and \OCT parameterized by pathwidth and provide new tight lower bounds for $\ncols \geq 3$ as a matching upper bound follows from generalizing the known algorithm for \OCT parameterized by treewidth. Note that in \cref{thm:intro_lb_sparse} the treewidth bound $\ncols$ is the same as the bound $\ncols$ on the number of colors. This treewidth bound, at least for $\ncols = 2$, and the pathwidth bound in \cref{thm:intro_lb_vc} cannot be improved due to upper bounds obtained by Lokshtanov et al.~\cite{LokshtanovNRRS14} for \VC and \OCT parameterized by an odd cycle transversal or a feedback vertex set. Lokshtanov et al.~\cite{LokshtanovMS18} asked if the complexity of problems, other than $\ncols$-\COL (where a modulator to a single path is already sufficient~\cite{JaffkeJ23}), relative to treewidth could already be explained with parameterization by feedback vertex set. As argued, this cannot be true for \VC and \OCT, so our results are essentially the next best explanation. 

Furthermore, the previous two theorems also imply the same lower bound for parameterization by \emph{treedepth}\footnote{\label{footnote:mod_tw_td}If $\tw(G - \modulator) \leq t$, then $\td(G) \leq |\modulator| + (t + 1) \log_2 |V|$, cf.\ Nešetřil and Ossona de Mendez~\cite{NesetrilO12}, and $\Oh^*(c^{\td(G)}) = \Oh^*(c^{|\modulator|}|V|^{(t+1)\log_2 c}) = \Oh^*(c^{|\modulator|})$.}, thus yielding the first tight lower bounds relative to treedepth for vertex selection problems and partially resolving a question of Jaffke and Jansen~\cite{JaffkeJ23} regarding the complexity relative to treedepth for problems studied by Lokshtanov et al.~\cite{LokshtanovMS18}.
\begin{cor}
  \label{thm:intro_lb_td}
  If there is an $\ncols \geq 1$ and an $\eps > 0$ such that \DTC can be solved in time $\Oh^*((\ncols + 1 - \eps)^{\td(G)})$, then \SETH is false.
\end{cor}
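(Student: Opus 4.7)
The plan is a direct reduction from the modulator-based lower bounds of \cref{thm:intro_lb_sparse,thm:intro_lb_vc} to the treedepth-based one, relying solely on the inequality sketched in footnote~\ref{footnote:mod_tw_td}. Split into two cases: $\ncols = 1$ is handled by \cref{thm:intro_lb_vc} (\VC with a modulator to pathwidth~$2$), and $\ncols \geq 2$ is handled by \cref{thm:intro_lb_sparse} (\DTC with a modulator to treewidth~$\ncols$). In both cases, the instance supplied by the invoked theorem comes with a modulator $\modulator \subseteq V(G)$ such that $\tw(G-\modulator) \leq t$ for some constant $t \in \{\ncols,2\}$.

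First I would construct a treedepth decomposition of $G$ from the given modulator in polynomial time: place all vertices of $\modulator$ into a single root-path of length $|\modulator|$, and attach under it, for each connected component $C$ of $G-\modulator$, a treedepth decomposition of $C$ of depth at most $(t+1)\lceil\log_2|V(G)|\rceil$ obtained via the standard bound $\td(H) \leq (\tw(H)+1)\lceil\log_2|V(H)|\rceil$ for a graph $H$ of treewidth at most $t$. This yields
\[
  \td(G) \;\leq\; |\modulator| + (t+1)\lceil\log_2|V(G)|\rceil.
\]
For constant $t$, such a decomposition can be computed in polynomial time (e.g., first compute a tree decomposition of $G-\modulator$ in polynomial time since $t$ is constant, then apply the classical recursive balanced-separator procedure).

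Next I would feed $G$, together with this treedepth decomposition, into the hypothetical algorithm running in time $\Oh^*((\ncols+1-\eps)^{\td(G)})$. Plugging the bound above into the running time gives, with $c = \ncols+1-\eps$,
\[
  \Oh^*\bigl(c^{\,|\modulator|} \cdot |V(G)|^{(t+1)\log_2 c}\bigr) \;=\; \Oh^*\bigl((\ncols+1-\eps)^{|\modulator|}\bigr),
\]
since $t$, $\ncols$, and $\eps$ are fixed constants, so the $|V(G)|$-factor is polynomial and absorbed by $\Oh^*$. For $\ncols \geq 2$ this contradicts \cref{thm:intro_lb_sparse}, and for $\ncols = 1$ the analogous calculation with $t = 2$ contradicts \cref{thm:intro_lb_vc}.

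There is no substantive obstacle; the proof is essentially the footnote computation turned into a formal reduction. The only point requiring minor care is ensuring that a suitable treedepth decomposition (not merely an upper bound on $\td(G)$) is produced in polynomial time so that the hypothesized algorithm can be invoked, which follows from standard constructive arguments for bounded-treewidth graphs.
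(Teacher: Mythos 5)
Your proof is correct and takes the same route as the paper: the paper obtains this corollary by combining the modulator-based lower bounds for \DTC (sparse setting) and \VC with \cref{thm:mod_to_tw_implies_td}, which is precisely the footnote computation turned into a constructive polynomial-time transformation of decompositions, exactly as you do, including the case split at $\ncols=1$ via the \VC bound (noting that a modulator to pathwidth $2$ is in particular a modulator to treewidth $2$).
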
 

\subparagraph*{Dense Setting.}

Our results on deletion distances can actually be lifted to the \emph{dense setting}. We do so by considering \emph{twinclasses}, which are arguably the simplest form of dense structure. A twinclass is an equivalence class of the \emph{twin}-relation, which says that two vertices $u$ and $v$ are twins if $N(u) \setminus \{v\} = N(v) \setminus \{u\}$, i.e., $u$ and $v$ have the same neighborhood outside of $\{u,v\}$. Given two distinct twinclasses, either all edges between them exist or none of them do. Contracting each twinclass yields the \emph{quotient graph} $G^q$ and we obtain \emph{twinclass-variants} of the usual graph parameters treedepth, cutwidth, pathwidth, and treewidth by measuring these parameters on the quotient graph $G^q$, e.g., the \emph{twinclass-pathwidth} of $G$ is $\tcpw(G) = \pw(G^q)$. The parameters twinclass-pathwidth and twinclass-treewidth have been studied before under the name \emph{modular pathwidth} and \emph{modular treewidth}~\cite{Lampis20, Mengel16, PaulusmaSS16}. Furthermore, we remark that the previously studied parameter \emph{neighborhood diversity} satisfies $\nd(G) = |V(G^q)|$ \cite{Lampis12}. Relationships between parameters transfer to their twinclass-variants and twinclass-pathwidth is more restrictive than \emph{linear-clique-width}. Similarly, we obtain \emph{twinclass-modulators}, but we measure the complexity of the remaining components on the level of the original graph, i.e., a twinclass-modulator (TCM) $\tcm$ to treewidth $\ncols$ is a family $\tcm$ of twinclasses such that $\tw(G - \bigcup_{\modulator \in \tcm} \modulator) \leq \ncols$. We can now state our second main result, which, similarly to the sparse setting, also carries over to twinclass-treedepth.%
\begin{mythm}
 \label{thm:intro_lb_dense}
 If there are $\ncols \geq 2$, $\eps > 0$ such that \DTC can be solved in time $\Oh^*((2^\ncols - \eps)^{|\tcm|})$, where $\tcm$ is a TCM to treewidth $\ncols$, then \SETH is false.

  Additionally, it follows that if there are $\ncols \geq 2$, $\eps > 0$ such that \DTC can be solved in time $\Oh^*((2^\ncols - \eps)^{\tctd(G)})$, then \SETH is false.
\end{mythm}

Due to the inequalities $\cw(G) \leq \tcpw(G) + 3$ and $\pw(G) \leq \td(G)$, cf.\ Lampis~\cite{Lampis20} and Nešetřil and Ossona de Mendez~\cite{NesetrilO12}, we see that $\cw(G) \leq \tctd(G) + 3$. Hence any $\Oh^*(c^{\cw(G)})$-time algorithm also implies a $\Oh^*(c^{\tctd(G)})$-time algorithm. Thus, the following result, relying on standard techniques for dynamic programming on graph decompositions such as the $(\min,+)$-\emph{cover product}, yields a tight upper bound complementing the previous lower bounds.
\begin{mythm}
 \label{thm:intro_algo_dense}
 Given a $k$-clique-expression $\cwexp$ for $G$, \DTC on $G$ can be solved in time $\Oh^*((2^\ncols)^k)$.\footnote{Jacob et al.~\cite{JacobBDP21} have simultaneously proven this upper and lower bound for the special case of \OCT, $\ncols = 2$, parameterized by clique-width. Their construction also proves the lower bound for linear-clique-width, but not for the more restrictive twinclass-treedepth or twinclass-modulator like our construction.}
\end{mythm}
There is no further lower bound result for \VC, since $\ncols + 1 = 2^\ncols$ for $\ncols = 1$ and hence \cref{thm:intro_lb_vc} already yields a tight lower bound for the clique-width-parameterization. 

Going into more detail, the twinclasses of the modulator in the construction for \cref{thm:intro_lb_dense} are \emph{true twinclasses}, i.e., each twinclass induces a clique, and moreover they are of size $\ncols$ (with a small exception). Intuitively, allowing for deletions, there are $2^\ncols$ possible sets of at most $\ncols$ colors that can be assigned to a clique of size $\ncols$, e.g., the empty set $\emptyset$ corresponds to deleting the clique completely. Hence, our results essentially show that it is necessary and optimal to go through all of these color sets for each twinclass in the modulator.

In contrast, consider the situation for $\ncols$-\COL where Lampis~\cite{Lampis20} has obtained tight running times of $\Oh^*\left(\binom{\ncols}{\lfloor \ncols / 2 \rfloor}^{\tctw(G)}\right)$ when parameterized by twinclass-treewidth and of time $\Oh^*((2^\ncols - 2)^{\cw(G)})$ when parameterized by clique-width. Whereas the complexities for $\ncols$-\COL vary between the twinclass-setting and clique-width, this is not the case for \DTC. The base $\binom{\ncols}{\lfloor \ncols / 2 \rfloor}$ is due to the fact that without deletions only color sets of the same size as the considered (true) twinclass can be attained and the most sets are possible when the size is $\lfloor \ncols / 2 \rfloor$. For clique-width, a \emph{label class} may induce more complicated graphs than cliques or independent sets and the interaction between two label classes may also be more intricate. Lampis~\cite{Lampis20} shows that the extremal cases of color sets $\emptyset$ and $[\ncols] = \{1, \ldots, \ncols\}$ can be handled separately, thus yielding the base $2^\ncols - 2$ for clique-width.

\subparagraph*{Additional results.}
As separate results, we obtain the following four results:
\begin{mythm}
  Assuming the \SETH, the following lower bounds hold:
  \begin{itemize}
    \item \DS cannot be solved in time $\Oh^*((4 - \eps)^{\tcctw(G)})$ for any $\eps > 0$.
    \item \TDS cannot be solved in time $\Oh^*((4 - \eps)^{\ctw(G)})$ for any $\eps > 0$.
    \item \MC cannot be solved in time $\Oh^*((2- \eps)^{|\modulator|})$ for any $\eps > 0$, where $\modulator$ is a modulator to treewidth at most $2$.
    \item \KFD cannot be solved in time $\Oh^*((2 - \eps)^{|\modulator|})$ for any $\eps > 0$ and $r \geq 3$, where $\modulator$ is a modulator to treewidth at most $r - 1$.
  \end{itemize}
\end{mythm}
The first result improves the parameterization of the tight lower bound for \DS obtained by Katsikarelis et al.~\cite{KatsikarelisLP19} from linear-clique-width to twinclass-cutwidth. We prove this by reducing \TDS parameterized by cutwidth to \DS parameterized by twinclass-cutwidth and providing a lower bound construction for \TDS parameterized by cutwidth.

Lastly, the lower bound for \VC, \cref{thm:intro_lb_vc}, also implies tight lower bounds for \MC and \KFD, which again imply the same lower bounds parameterized by treedepth. The former also partially answers a question of Jaffke and Jansen~\cite{JaffkeJ23}, by being another problem considered by Lokshtanov et al.~\cite{LokshtanovMS18} whose running time cannot be improved when parameterizing by treedepth instead of treewidth.

\subparagraph*{Technical contribution.}
We start by recalling the standard approach of Lokshtanov et al.~\cite{LokshtanovMS18} to proving tight lower bounds for problems parameterized by pathwidth at a high level. Given a \SAT instance $\formula$, the variables are partitioned into $\ngrps$ groups of constant size. For each variable group, a \emph{group gadget} is constructed that can encode all assignments of this variable group into partial solutions of the considered target problem. The group gadget usually consists of a bundle of long path-like gadgets inducing a sequence of \emph{disjoint} separators. Further gadgets attached to these separators decode the partial solutions and check whether the corresponding assignment satisfies some clause. Ideally, the path gadgets are designed so that a partial solution transitions through a well-defined sequence of states when viewed at consecutive separators. For most problems, the gadgets do not behave this nicely though. For example, in \OCT it is locally always preferable to delete a vertex instead of not deleting it. Such behavior leads to undesired state changes called \emph{cheats}, but for appropriate path gadgets there can only be a constant number of cheats on each path. By making the path gadgets long enough, one can then find a region containing no cheats where we can safely decode the partial solutions.

For problems such as $\ncols$-\COL, all states are equally constraining and such cheats do not occur, hence enabling us to prove the same lower bounds under more restrictive parameters such as feedback vertex set. But for \emph{vertex deletion problems}, like \OCT, these cheats do occur and pose a big issue when trying to compress the path gadgets into a single separator $\modulator$, since deletions in $\modulator$ are highly favorable. On a single separator $\modulator$ such behavior means that one partial solution is \emph{dominating} another and if we cannot control this behavior, then we lose the dominated partial solution for the purpose of encoding group assignments. Concretely, for \OCT we obtain dominating partial solutions by deleting further vertices in the single separator $\modulator$. The number of deletions is bounded from above by the \emph{budget constraint}, but if we limit the number of deletions in $\modulator$, then we do not have $3^{|\modulator|}$ partial solutions anymore and the construction may not be able to attain the desired base in the running time.

To resolve this issue we expand upon a technique of Cygan et al.~\cite{CyganDLMNOPSW16} and construct an instance with a slightly large parameter value, i.e., a slightly larger single separator $\modulator$. Thus, we can limit the number of deletions and are still able to encode sufficiently many group assignments. More precisely, we consider only partial solutions with the same number of deletions in $\modulator$, hence only pairwise non-dominating partial solutions remain. We construct a \emph{structure gadget} to enforce a lower bound on the number of deletions in $\modulator$. A positive side effect is that the remaining gadgets can also leverage the structure of the partial solutions.

In the dense setting and especially for a higher number $\ncols$ of colors, this issue is amplified. Here, we consider the states of twinclasses, instead of single vertices, in a partial solution. For a twinclass, there is a hierarchy of dominating states: any state that does not delete all vertices in the twinclass is dominated by a state that deletes further vertices in the twinclass. For \DTC, the maximum number of states is achieved on a true twinclass of size $\ncols$ and we can partition the states into \emph{levels} based on the number of deletions they induce. Within each level, the states are pairwise non-dominating. Consequently, we restrict the family of partial solutions so that for every level the number of twinclasses with that level is fixed. This requires a considerably more involved construction of the structure gadget which now has to distinguish states based on their level.

\subparagraph*{Related work.}
There is a long line of work relative to treewidth~\cite{BorradaileL16, CurticapeanLN18, CyganKN18, CyganNPPRW22, DubloisLP21a, DubloisLP22, EgriMR18, FockeMR22, KatsikarelisLP22, LokshtanovMS18, MarxSS21, MarxSS22, OkrasaPR20, OkrasaR21} and all of these lower bounds, except for the result by Egri et al.~\cite{EgriMR18}, already apply to pathwidth.
In the sparse setting, there is further work on the parameterization by \emph{cutwidth}~\cite{CurticapeanM16, GroenlandMNS22, JansenN19, MarxSS21, PiecykR21, GeffenJKM20} and by feedback vertex set~\cite{LokshtanovMS18, PiecykR21}. We remark that the works of van Geffen et al.~\cite{GeffenJKM20} and Piecyk and Rz{\k a}\.{z}ewski~\cite{PiecykR21} show that previous lower bounds relative to pathwidth already hold for more restrictive parameterizations. 
In the dense setting, there are some results~\cite{IwataY15, JacobBDP21, KatsikarelisLP19, Lampis20} on parameterization by clique-width and these lower bounds already apply to linear-clique-width, but not to the more restrictive parameters considered here. The work by Iwata and Yoshida~\cite{IwataY15} also provides equivalences between different lower bounds and works under a weaker assumption than \SETH, unfortunately their techniques blow up the modulator too much and are not applicable in our case. 
Finally, the complexity of $\ncols$-\COL and the more general homomorphism problems has been extensively studied~\cite{EgriMR18, FockeMR22, GanianHKOS22, JaffkeJ23, Lampis20, OkrasaPR20, OkrasaR21, PiecykR21}, with only two of these articles~\cite{GanianHKOS22, Lampis20} consider the dense setting. Jaffke and Jansen~\cite{JaffkeJ23} closely study the complexity of $\ncols$-\COL parameterized by the deletion distance to various graph classes $\family$; in particular, the base for treewidth can already be explained by deletion distance to a single path.

On the algorithmic side, the study of heterogeneous parameterizations has been gaining traction~\cite{BulianD16, BulianD17, EibenGHK21, EibenGS18b, EibenGS18a, HolsKP20, JansenKW21}, yielding the notions of $\mathcal{H}$-treewidth and $\mathcal{H}$-\emph{elimination distance}, which is a generalization of treedepth. Currently, only few of these works~\cite{EibenGHK21, JansenKW21} contain algorithmic results that are sufficiently optimized to apply to our fine-grained setting. Jansen et al.~\cite{JansenKW21} show that \VC can be solved in time $\Oh^*(2^k)$ and \OCT in time $\Oh^*(3^k)$ when parameterized by bipartite-treewidth. Eiben et al.~\cite{EibenGHK21} show that \MC can be solved in time $\Oh^*(2^k)$ when parameterized by $\mathcal{R}_w$-treewidth, where $\mathcal{R}_w$ denotes the graphs of \emph{rank-width} at most $w$.

Another line of work is on depth-parameters in the dense setting~\cite{ChenF20, DeVosKO20, GajarskyK20, GajarskyLO13, GanianHNOM19, HlinenyKOO16, KwonMOW21} such as \emph{shrub-depth} and \emph{sc-depth}. The algorithmic results relative to these parameters are largely concerned with meta-results so far~\cite{ChenF20, GanianHNOM19} and their relation to clique-width is not strong enough to preserve the complexity in our fine-grained setting.


\subparagraph*{Organization.}
We discuss the preliminaries and basic notation in \cref{sec:prelims}. In \cref{sec:outline}, we give an outline of our two main results: the lower bound for \DTC in the sparse setting and the dense setting. The relationships between the considered parameters are discussed in \cref{sec:parameters}. The algorithm for \DTC parameterized by clique-width is given in \cref{sec:cw_algo}. \cref{sec:dtc_lb} contains the detailed proofs of the main results. In \cref{sec:vc} and \cref{sec:ds_lb} we present the additional results, i.e., the lower bounds for \VC and \DS. We conclude in \cref{sec:conclusion}. \cref{sec:problems} contains the formal definitions of the considered problems.

\section{Preliminaries}
\label{sec:prelims}

For a positive integer $n$, we define $[n] = \{1, \ldots, n\}$. For a set $S$, we define the \emph{power set} $\powerset(S) = \{T \subseteq S\}$ and for $0 \leq k \leq |S|$, we define $\binom{S}{k} = \{T \subseteq S \sep |T| = k\}$ and $\binom{S}{\leq k} = \{T \subseteq S \sep |T| \leq k\}$ and $\binom{S}{\geq k}$ analogously. For $0 \leq k \leq n$, we define $\binom{n}{\leq k} = |\binom{[n]}{\leq k}|$ and similarly $\binom{n}{\geq k}$. For a set family $\setfam$, we define $\bigcup(\setfam) = \bigcup_{S \in \setfam} S$. For a function $f \colon A \rightarrow C$ and subset $B \subseteq A$, we denote by $f\big|_B$ the \emph{restriction} of $f$ to $B$. For two functions $f,g\colon A \rightarrow B$, we write $f \equiv g$ if $f(a) = g(a)$ for all $a \in A$. For a boolean predicate $p$, we let $[p]$ denote the \emph{Iverson bracket} of $p$, which is $1$ if $p$ is true and $0$ if $p$ is false.

We use common graph-theoretic notation and assume that the reader knows the essentials of parameterized complexity. Let $G = (V, E)$ be an undirected graph. For a vertex set $X \subseteq V$, we denote by $G[X]$ the subgraph of $G$ that is induced by $X$. The \emph{open neighborhood} of a vertex $v$ is given by $N(v) = \{u \in V \sep \{u,v\} \in E\}$, whereas the \emph{closed neighborhood} is given by $N[v] = N(v) \cup \{v\}$. For sets $X \subseteq V$ we define $N[X] = \bigcup_{v \in X} N[v]$ and $N(X) = N[X] \setminus X$. For two disjoint vertex subsets $A, B \subseteq V$, adding a \emph{join} between $A$ and $B$ means adding all edges between $A$ and $B$. For a vertex set $X \subseteq V$, we define $\delta(X) = \{ \{x, y\} \in E \sep x \in X, y \notin X\}$.

An \emph{$r$-coloring} of a graph $G = (V,E)$ is a function $\varphi \colon V \rightarrow [r]$ such that $\varphi(u) \neq \varphi(v)$ for all $\{u,v\} \in E$. We say that $G$ is \emph{$r$-colorable} if there is an $r$-coloring of $G$. The \emph{chromatic number} of $G$, denoted by $\chi(G)$, is the minimum $r$ such that $G$ is $r$-colorable. 

\subparagraph*{Quotients and twins.} Let $\Pi$ be a partition of $V(G)$. The \emph{quotient graph} $G / \Pi$ is given by $V(G / \Pi) = \Pi$ and $E(G / \Pi) = \{\{B_1, B_2\} \sep \exists u \in B_1, v \in B_2 \colon \{u,v\} \in E(G)\}$. We say that two vertices $u, v$ are \emph{twins} if $N(u) \setminus \{v\} = N(v) \setminus \{u\}$. The equivalence classes of this relation are called \emph{twinclasses}. More specifically, if $N(u) = N(v)$, then $u$ and $v$ are \emph{false twins} and if $N[u] = N[v]$, then $u$ and $v$ are \emph{true twins}. Every twinclass of size at least 2 consists of only false twins or only true twins. A false twinclass induces an independent set and a true twinclass induces a clique. Let $\tcpartition(G)$ be the partition of $V(G)$ into twinclasses. 

\subsection{Graph Parameters}

\begin{dfn}
 A \emph{tree/path decomposition} of a graph $G$ consists of a tree/path $\TT$ and bags $\bag_t \subseteq V(G)$ for all $t \in V(\TT)$, such that:
 \begin{itemize}
  \item $V(G) = \bigcup_{t \in V(\TT)} \bag_t$.
  \item For every edge $\{u, v\} \in E(G)$, there is a node $t \in V(\TT)$ such that $\{u, v\} \subseteq \bag_t$.
  \item For every $v \in V(G)$, the set $\{t \in V(\TT) \sep v \in \bag_t\}$ induces a connected subgraph of $\TT$.
 \end{itemize}
 The \emph{width} of a tree/path decomposition is equal to $\max_{t \in V(\TT)} |\bag_t| - 1$. The \emph{treewidth} of $G$, denoted $\tw(G)$, is the minimum possible width of a tree decomposition of $G$. The \emph{pathwidth} of $G$, denoted $\pw(G)$, is the minimum possible width of a path decomposition of $G$.
\end{dfn}

\paragraph*{Clique-Expressions and Clique-Width.}

A \emph{labeled graph} is a graph $G = (V,E)$ together with a \emph{label function} $\lfct \colon V \rightarrow \NN = \{1, 2, 3, \ldots\}$. We say that a labeled graph is \emph{$k$-labeled} if $\lfct(v) \leq k$ for all $v \in V$. We consider the following four operations on labeled graphs: 
\begin{itemize}
  \item the \emph{introduce}-operation $\intro{\ell}(v)$ which constructs a single-vertex graph whose unique vertex $v$ has label $\ell$,
  \item the \emph{union}-operation $G_1 \union G_2$ which constructs the disjoint union of two labeled graphs $G_1$ and $G_2$,
  \item the \emph{relabel}-operation $\relab{i}{j}(G)$ changes the label of all vertices in $G$ with label $i$ to label $j$, 
  \item the \emph{join}-operation $\join{i}{j}(G)$, $i \neq j$, which adds an edge between every vertex in $G$ with label $i$ and every vertex in $G$ with label $j$. 
\end{itemize}
A valid expression that only consists of introduce-, union-, relabel-, and join-operations is called a \emph{clique-expression}. The graph constructed by a clique-expression $\cwexp$ is denoted $G_\cwexp$ and the constructed label function is denoted $\lfct_\cwexp \colon V(G_\cwexp) \rightarrow \NN$.

We associate to a clique-expression $\cwexp$ the syntax tree $\tree_\cwexp$ in the natural way and to each node $t \in V(\tree_\cwexp)$ the corresponding operation. For any node $t \in V(\tree_\cwexp)$ the subtree rooted at $t$ induces a \emph{subexpression} $\cwexp_t$. When a clique-expression $\cwexp$ is fixed, we define $G_t = G_{\cwexp_t}$, $V_t = V(G_t)$, and $\lfct_t = \lfct_{\cwexp_t}$ for any $v \in V(\tree_\cwexp)$. Furthermore, we write $V_t^\ell = \lfct_t^{-1}(\ell)$ for the set of all vertices with label $\ell$ at node $t$. 

We say that a clique-expression $\cwexp$ is a \emph{$k$-clique-expression} or just \emph{$k$-expression} if $G_t$ is $k$-labeled for all $t \in V(\tree_\cwexp)$. The \emph{clique-width} of a graph $G$, denoted by $\cw(G)$, is the minimum $k$ such that there exists a $k$-expression $\cwexp$ with $G = G_\cwexp$. A clique-expression $\cwexp$ is \emph{linear} if in every union-operation the second graph consists only of a single vertex. Accordingly, we also define the \emph{linear-clique-width} of a graph $G$, denoted $\lcw(G)$, by only considering linear clique-expressions.

\paragraph*{Treedepth and Modulators.}

 We will construct graphs that have small treewidth except for one central part. This structure is captured by the concept of a \emph{(vertex) modulator}. We say that $\modulator \subseteq V(G)$ is a \emph{modulator to treewidth/pathwidth $\ncols$} for $G$ if $\tw(G - \modulator) \leq \ncols$ or $\pw(G - \modulator) \leq \ncols$, respectively. 

\begin{dfn}
 An \emph{elimination forest} of an undirected graph $G = (V, E)$ is a rooted forest $\EF = (V, E_\EF)$ so that for every edge $\{u, v\} \in E$ either $u$ is an ancestor or descendant of $v$ in $\EF$. The \emph{depth} of a rooted forest is the largest number of nodes on a path from a root to a leaf. The \emph{treedepth} of $G$, denoted $\td(G)$, is the minimum depth over all elimination forests of $G$.
\end{dfn}

\paragraph*{Cutwidth.}

A \emph{linear layout} of a graph $G = (V, E)$ with $n$ vertices is a linear ordering of its vertices given by a bijection $\pi \colon V \rightarrow [n]$. The \emph{cutwidth} of $G$ with respect to $\pi$ is 
\begin{equation*}
  \ctw_\pi(G) = \max_{i \in [n]} |\{\{u, v\} \in E \sep \pi(u) \leq i \wedge \pi(v) > i\}|.
\end{equation*}
The \emph{cutwidth} $\ctw(G)$ of $G$ is the minimum cutwidth over all linear layouts of $G$. 

\paragraph*{Lifting to Twinclasses.}

We define the \emph{twinclass-treewidth}, \emph{twinclass-pathwidth}, \emph{twinclass-treedepth}, and \emph{twinclass-cutwidth} of $G$ by $\tctw(G) = \tw(G/\tcpartition(G))$, $\tcpw(G) = \pw(G/\tcpartition(G))$, $\tctd(G) = \td(G/\tcpartition(G))$, and $\tcctw(G) = \ctw(G/\tcpartition(G))$, respectively. The parameters twinclass-treewidth and twinclass-pathwidth have been considered before under the name modular treewidth and modular pathwidth~\cite{Lampis20, Mengel16, PaulusmaSS16}. We prefer to use the prefix twinclass instead of modular to distinguish from the case where one works with the quotient graph arising from the \emph{modular partition} of $G$. 

\begin{dfn}
  Let $G = (V,E)$ be a graph. A \emph{twinclass-modulator (TCM)} $\tcm \subseteq \tcpartition(G)$ of $G$ to treewidth $\ncols$ is a set of twinclasses of $G$ such that $\tw(G - \bigcup(\tcm)) \leq \ncols$. The \emph{size} of a twinclass-modulator $\tcm$ is $|\tcm|$, i.e., the number of twinclasses $\tcm$ contains.
\end{dfn}

\subsection{Strong Exponential-Time Hypothesis}

For our lower bounds, we assume the \emph{Strong Exponential-Time Hypothesis} (\SETH) \cite{ImpagliazzoPZ01} which concerns the complexity of $\clss$-\SAT, i.e., \SAT where all clauses contain at most $\clss$ literals. Let $c_\clss = \inf \{\delta \sep \clss\text{-\SAT can be solved in time } \Oh(2^{\delta \nvars}) \}$ for all $\clss \geq 3$. The weaker \emph{Exponential-Time Hypothesis} (ETH) of Impagliazzo and Paturi~\cite{ImpagliazzoP01} posits that $c_3 > 0$, whereas the Strong Exponential-Time Hypothesis states that $\lim_{q \rightarrow \infty} c_q = 1$, i.e., for large enough clause size there is essentially no better algorithm than brute force.

When proving lower bounds based on \SETH, we make use of the following equivalent formulations.

\begin{mythm}[\cite{CyganDLMNOPSW16}]
 \label{thm:seth_hitting_set}
 The following statements are equivalent to \SETH:
 \begin{enumerate}
  \item For all $\delta < 1$, there is a clause size $\clss$ such that $\clss$-\SAT cannot be solved in time $\Oh(2^{\delta \nvars})$, where $\nvars$ is the number of variables.
  \item For all $\delta < 1$, there is a set size $\clss$ such that $\clss$-\textsc{Hitting Set}, i.e., all sets contain at most $\clss$ elements, cannot be solved in time $\Oh(2^{\delta \nvars})$, where $\nvars$ is the universe size.
 \end{enumerate}
\end{mythm}

\section{Outline of Main Result}
\label{sec:outline}

We outline our two main results, i.e., tight lower bounds for \DTC parameterized by a (twinclass-)modulator to treewidth $\ncols$. Conceptually, the constructions for the sparse setting and for the dense setting are similar. The most significant change is in the \emph{structure gadget}, since we have to enforce a considerably more involved structure in the dense setting. We give an overview of both settings and go into more detail for the dense case.

We fix the number of colors $\ncols \geq 2$. \emph{Solutions} are functions $\coloring \colon V(G) \rightarrow [\ncols] \cup \{\del\}$ so that for every edge $\{u,v\} \in E(G)$ either $\coloring(u) = \coloring(v) = \del$ or $\coloring(u) \neq \coloring(v)$. Hence, $\delset$ is the set of deleted vertices, whereas $\coloring\big|_{V(G) \setminus \delset}$ is an $\ncols$-coloring of the remaining graph.

In both settings we want to simulate a \emph{logical OR} constraint. For \OCT, i.e.\ $\ncols = 2$, we can use \emph{odd cycles}. For $\ncols \geq 3$, \cref{thm:critical_family_informal} provides an analogue, where a graph $H$ is $(\ncols + 1)$-\emph{critical} if $\chi(H) = \ncols + 1$ and $\chi(H - v) = \ncols$ for all $v \in V(H)$.
\begin{mythm}[see \cref{sec:critical_graphs}]
  \label{thm:critical_family_informal}
  There exists a family $\crit^\ncols$ of $(\ncols + 1)$-critical graphs with treewidth $\ncols$ such that for every $s \in \NN$, there exists a graph $H \in \crit^\ncols$ with $s \leq |V(H)| \leq s + \ncols$.
\end{mythm}

\subparagraph*{Setup.} 
Given a $\clss$-\SAT instance $\formula$ with $\nvars$ variables and $\nclss$ clauses, we start with the following standard step~\cite{LokshtanovMS18}: we partition the variables into $t = \lceil \nvars / \vgrpsize \rceil$ groups of size $\vgrpsize$, where $\vgrpsize$ only depends on the running time base that we want to rule out. Furthermore, we pick an integer $\grpsize$ depending on $\vgrpsize$ that represents the size of the groups in the graph.

\subsection{Sparse Setting}

\subparagraph*{Central vertices and solution structure.} We construct a graph $G$ that has a solution $\coloring$ for \DTC with cost $|\delset| \leq \budget$ if and only if $\formula$ is satisfiable. Converting from base $\ncols + 1$ to base $2$ implies that $G$ should admit a modulator $\modulator$ to treewidth $\ncols$ of size roughly $\nvars \log_{\ncols + 1}(2)$. Like Cygan et al.~\cite{CyganDLMNOPSW16}, we make the modulator slightly larger, thus picking a larger $\grpsize$. The modulator $\modulator $ consists of $\ngrps + 1$ vertex groups: the first $\ngrps$ groups $U_i$, $i \in [\ngrps]$, are independent sets of size $\grpsize$ each and correspond to the variable groups; the last group $F$ is a clique of size $\ncols$ which simulates \LC constraints.

On each group $U_i$, we consider the set of partial solutions $\scolorings_i = \{\coloring \colon U_i \rightarrow [\ncols] \cup \{\del\} \sep |\delset| = \grpsize / (\ncols + 1) \}$. By picking $\grpsize$ large enough, $\scolorings_i$ is sufficiently large to encode all assignments of the $i$-th variable group. Defining $\scolorings_i$ in this way achieves two things: first, the solutions in $\scolorings_i$ are pairwise non-dominating; secondly, this fixes the budget used on the modulator. The second point is important, because by also fixing the budget on the remaining graph via a vertex-disjoint packing $\packing$ of $(\ncols + 1)$-critical graphs, no vertex of $F$ can be deleted, which allows us to simulate \LC constraints with the clique $F$.

\subparagraph*{Structure gadgets.} The next step is to enforce that only the solutions in $\scolorings_i$ can be attained on group $U_i$. By choosing the budget $\budget$ appropriately, we obtain an upper bound on the number of deletions in $U_i$. To obtain a lower bound, we construct the \emph{structure gadgets}. These are built by combining $(\ncols + 1)$-critical graphs with the \emph{arrow} gadget of Lokshtanov et al.~\cite{LokshtanovMS18}. A (thin) arrow simply propagates a deletion from a vertex $u$ to another vertex $v$; else if $u$ is not deleted, then $v$ is not deleted and the arrow does not affect the remaining graph.

The structure gadget works as follows: if $\coloring$ deletes less than $\grpsize / (\ncols + 1)$ vertices in group $U_i$, then there is a subset $S \subseteq U_i$ of size $|S| = (|U_i| - \grpsize / (\ncols + 1)) + 1$ that avoids all deletions in $U_i$. For every subset of this size, $G$ contains a $(\ncols + 1)$-critical graph $L_{i,S}$ with an arrow from every $u \in S$ to a private vertex $v$ in $L_{i,S}$, hence simulating an OR on the vertices in $S$. Since $S$ avoids all deletions of $\coloring$, no deletion is propagated to $L_{i,S}$ and $\coloring$ must pay extra to resolve $L_{i,S}$. By copying each $L_{i,S}$ sufficiently often, we can ensure that the existence of a deletion-avoiding $S$ implies that $\coloring$ must exceed our budget constraint.

\subparagraph*{Decode and verify.} The remaining construction decodes the partial solution on the modulator $\modulator$ and verifies if the corresponding truth assignment satisfies all clauses of $\formula$. One could generalize the gadgets of Lokshtanov et al.~\cite{LokshtanovMS18} to higher $\ncols$, but this leads to an involved construction with a worse bound on the treewidth of the remainder: for \OCT the construction of Lokshtanov et al.\ has treewidth 4, whereas the simpler construction we use has only treewidth 2. More details will be presented in the dense case.

\subsection{Dense Setting}

We now have a twinclass-modulator $\tcm$ to treewidth $\ncols$ instead of a basic modulator and this changes the possible states as follows. Whereas $\coloring$ could assume $\ncols + 1$ different states on a single vertex $u$, i.e., one of the $\ncols$ colors or deleting the vertex, there are $2^{\ncols}$ possible states on a true twinclass $U$ of size $\ncols$; each corresponds to a possible value of $\coloring(U) \setminus \{\del\} \subseteq [\ncols]$. Since $U$ is a true twinclass, no color is used multiple times and the exact mapping $\coloring\big|_U$ is irrelevant.

\subparagraph*{Central twinclasses and setup.} The twinclass-modulator $\tcm$ of the constructed graph $G$ consists of $\ngrps + 1$ groups and each group is a family of twinclasses. The first $\ngrps$ groups $\ttcset_i$, $i \in [\ngrps]$, correspond to the variable groups and each consists of $\grpsize$ true twinclasses of size $\ncols$ that are pairwise non-adjacent. The last group contains the clique $F$.

\subparagraph*{Solution structure.} Our family $\scolorings_i$ of considered partial solutions on group $\ttcset_i$ should achieve the same two things as before. First, consider the structure of states of $\coloring$ on a twinclass $U \in \ttcset_i$ precisely: fix a state $C = \coloring(U) \setminus \{\del\}$ and note that all states $C' \subsetneq C$ \emph{dominate} $C$ if we disregard the budget constraint, i.e., $\coloring$ remains a solution if we replace $C$ by $C'$. After arranging the states into \emph{levels} according to the number $\ell$ of deleted vertices, there is no domination between states on the same level. This motivates the following definition.
\begin{dfn}[informal]
 Given rationals $0 < c_\ell < 1$, $\ell \in \{0\} \cup [\ncols]$, with $\sum_{\ell = 0}^\ncols c_\ell = 1$, the set $\scolorings_i$ consists of solutions $\coloring$ on the family of twinclasses $\ttcset_i$ such that for every $\ell \in \{0\} \cup [\ncols]$ there are exactly $c_\ell \cdot |\ttcset_i|$ twinclasses $U \in \ttcset_i$ where $\coloring$ deletes exactly $\ell$ vertices in $U$.
\end{dfn}
Essentially, we are only restricting how the deletions can be distributed inside the modulator; there are no restrictions on the used colors. This again fixes the budget used on the modulator, allowing us to simulate \LC constraints with the clique $F$. By picking $c_\ell = \binom{\ncols}{\ell}2^{-\ncols}$, $\ell \in \{0\} \cup [\ncols]$, we ensure that $\scolorings_i$ contains the solutions on $\ttcset_i$ where all $2^\ncols$ states appear the same number of times. This enables us to choose $\grpsize$ small enough so that the time calculations work out and simultaneously large enough so that an injective mapping $\embedding_i \colon \{0,1\}^\vgrpsize \rightarrow \scolorings_i$, mapping truth assignments of the $i$-th variable group to solutions in $\scolorings_i$, exists.

\subparagraph*{Thick arrows and structure gadgets.} To enforce the structure of $\scolorings_i$, we need a gadget to distinguish different number of deletions inside a twinclass. We can construct such a gadget $\arrow_\ell(U,v)$, $\ell \in [\ncols]$, also called \emph{thick $\ell$-arrow}. See \cref{thm:arrow_behavior_informal} for the gadget's behavior.
\begin{lem}[informal]
	\label{thm:arrow_behavior_informal}
  Let $U$ be a set of $\ncols$ true twins and $v$ be a vertex that is not adjacent to $U$ and $\ell \in [\ncols]$. There is a gadget $\arrow = \arrow_\ell(U,v)$ of treewidth $\ncols$ with the following properties:
  \begin{itemize}
   \item Any solution $\coloring$ must delete at least $\ell$ vertices in $\arrow - U$.
   \item If a solution $\coloring$ deletes exactly $\ell$ vertices in $\arrow - U$, then $\coloring$ can only delete $v$ if $\coloring$ deletes at least $\ell$ vertices in $U$.
  \end{itemize}
\end{lem}
We proceed by constructing the \emph{structure gadgets} which enforce that the partial solution on $\ttcset_i$ belongs to $\scolorings_i$. Let $c_{< \ell} = c_0 + \cdots + c_{\ell - 1}$ for all $\ell \in \{0\} \cup [\ncols]$. For every group $i \in [\ngrps]$, number of deletions $\ell \in [\ncols]$, set of twinclasses $\setfam \subseteq \ttcset_i$ with $|\setfam| = c_{< \ell} \cdot \grpsize + 1$, we add an $(\ncols + 1)$-critical graph $L_{i, \ell, \setfam} \in \crit^\ncols$ consisting of at least $|\setfam|$ vertices. For every $U \in \setfam$, we pick a private vertex $v$ in $L_{i, \ell, \setfam}$ and add the thick $\ell$-arrow $\arrow_\ell(U,v)$. We create a large number of copies of each $L_{i, \ell, \setfam}$ and the incident thick arrows. This concludes the construction of the structure gadget.

The number of deletions in the central vertices is already bounded from above by the budget constraint. If too few deletions occur in the twinclasses of $\ttcset_i$, then we can find an $\ell$ and an $\setfam \subseteq \ttcset_i$ with $|\setfam| = c_{< \ell} \cdot \grpsize + 1$ such that less than $\ell$ vertices are deleted in each $U \in \setfam$. Hence, all thick $\ell$-arrows leading to $L_{i, \ell, \setfam}$ and its copies cannot propagate deletions. To resolve all these $(\ncols + 1)$-critical graphs, one extra vertex per copy must be deleted. Due to the large number of copies, this implies that we must violate our budget constraint. 

Hence, for any $\setfam \subseteq \ttcset_i$ with $|\setfam| = c_{< \ell} \cdot \grpsize + 1$ and any solution $\coloring$ obeying the budget constraint there is at least one twinclass $U \in \setfam$ in which $\coloring$ deletes at least $\ell$ vertices. Therefore, there are at least $(1 - c_{< \ell}) \grpsize$ twinclasses in $\ttcset_i$ where $\coloring$ deletes at least $\ell$ vertices. Since this holds for all $\ell \in \{0\} \cup [\ncols]$ and the budget $\budget$ is chosen appropriately, all inequalities have to be tight and the deletions inside $\ttcset_i$ follow the distribution imposed by $\scolorings_i$. The upcoming decoding gadgets will make use of this structure.

\subparagraph*{Color-set-gadgets and decoding gadgets.} Next, we discuss the decoding part of the construction. Since gadgets cannot read the color of single vertices but only of a whole twinclass, we need \emph{color-set-gadgets} to detect the colors used on a twinclass, cf.\ \cref{thm:decode_behavior_informal}.
\begin{lem}[informal]
  \label{thm:decode_behavior_informal}
  Let $U$ be a set consisting of $\ncols$ true twins and $v$ be a vertex that is not adjacent to $U$ and let $C \subsetneq [\ncols]$. There is a gadget $\decode = \decode_C(U,v)$ of treewidth $\ncols$ such that:
  \begin{itemize}
   \item Any solution $\coloring$ deletes at least $(\ncols - |C|) + 1$ vertices in $\decode - U$.
   \item If $\coloring$ deletes exactly $(\ncols - |C|) + 1$ vertices in $\decode - U$, then $\coloring(v) = \del$ only if $\coloring(U) \setminus \{\del\} \subseteq C$.
  \end{itemize}
\end{lem}
To construct the color-set-gadgets we rely on the \LC constraints that are simulated with the central clique $F$. Note that the color-set-gadgets only check for set inclusion and not set equality. Using the structure of solutions in $\scolorings_i$ however, the color-set-gadgets will still be sufficient to distinguish the solutions in $\scolorings_i$ from each other. 

By using a complete $(\ncols + 1)$-partite graph with all sets of the partition being singletons except for one large independent set, we can simulate a logical AND, see \cref{thm:and_gadget_informal}.
\begin{lem}[informal]
 \label{thm:and_gadget_informal}
 Let $n_Y$ be a positive integer. There is a gadget $Y$ of treewidth $\ncols$ with a set of input vertices $V' \subseteq V(Y)$, $|V'| = n_Y$, and a vertex $\hat{y} \in V(Y) \setminus V'$ such that:
 \begin{itemize}
  \item Any solution $\coloring$ has to delete at least one vertex in $Y - V'$.
  \item If $\coloring$ deletes exactly one vertex in $Y - V'$, then $\coloring(\hat{y}) = \del$ only if $\coloring(V') = \{\del\}$.
 \end{itemize}
\end{lem}
For the $j$-th clause, variable group $i \in [\ngrps]$, solution $\coloring_i \in \scolorings_i$, we invoke \cref{thm:and_gadget_informal} to create a gadget $Y^j_{i, \coloring_i}$ for $n_Y = (1 - c_r) \grpsize = (1 - 2^{-\ncols})\grpsize$ input vertices and with distinguished vertex $\hat{y}^j_{i, \coloring_i}$. For every twinclass $U \in \ttcset_i$ with $\coloring_i(U) \neq [\ncols]$, we pick a private input vertex $v$ of $Y^j_{i, \coloring_i}$ and add the color-set-gadget $B_{\coloring_i(U) \setminus \{\del\}}(U,v)$. By \cref{thm:and_gadget_informal}, the vertex $\hat{y}^j_{i, \coloring_i}$ can only be deleted if all input vertices of $Y^j_{i, \coloring_i}$ are deleted. Due to \cref{thm:decode_behavior_informal} and the structure of $\scolorings_i$, this will only be the case if $\coloring_i$ is the partial solution on $\ttcset_i$. 

\subparagraph*{Clause gadgets.} For the $j$-th clause, we add an $(\ncols + 1)$-critical graph $Z^j \in \crit^\ncols$ consisting of at least $\clss 2^{\vgrpsize}$ vertices. For every group $i \in [\ngrps]$ and solution $\coloring_i \in \scolorings_i$ such that $\embedding_i^{-1}(\coloring_i)$ is a partial truth assignment satisfying the $j$-th clause, we pick a private vertex $v$ in $Z^j$ and add a thin arrow from $\hat{y}^j_{i, \coloring_i}$ to $v$. The budget constraint will ensure that the only way to delete a vertex in $Z^j$ is by propagating a deletion via a thin arrow from some $\hat{y}^j_{i, \coloring_i}$. By construction of the decoding and clause gadgets this is only possible if the partial solution on $\ttcset_i$ corresponds to a satisfying assignment of the $j$-th clause. This concludes the construction, cf.\ \cref{fig:dtc_overview_outline}.

\begin{figure}
  \centering
  \tikzfig{pictures/dtc_overview_wide}
  \caption{An overview of the construction for the dense setting in case of $\ncols = 2$. The arrows point in the direction that deletions are propagated by the corresponding gadget.}
  \label{fig:dtc_overview_outline}
\end{figure}

\subparagraph*{Budget and packing.} The budget $\budget = \budget_0 + \cost_\packing$ of the constructed instance $(G, \budget)$ consists of two parts; $\budget_0 = \ngrps \ncols \grpsize / 2$ is allocated to the central twinclasses and matches the number of deletions incurred by picking a partial solution $\coloring_i \in \scolorings_i$ on $\ttcset_i$ for each group $i \in [\ngrps]$; the second part $\cost_\packing$ is due to a vertex-disjoint packing $\packing$ which we describe next. A part of each thin arrow in $G$ is added to $\packing$ and for every thick arrow, color-set-gadget, or decoding gadget, we add the appropriate parts to $\packing$ given by Lemmas 3.3, 3.4, 3.5, respectively. Summing up the implied costs yields $\cost_\packing$. Hence, we know how the deletions are distributed throughout the various gadgets. In particular, this ensures that no vertex of the central clique $F$ is deleted.

\cref{thm:intro_lb_dense} follows by using these ideas and working out the remaining technical details.

\section{Relations between Parameters}
\label{sec:parameters}

We discuss the relationships between the considered parameters and provide an equivalent definition for treewidth via a \emph{search game} in this section.

\begin{lem}[\cite{Bodlaender98}, Chapter 6 of \cite{NesetrilO12}]
  \label{thm:td_tw}
  For any graph $G$, we have that $\tw(G) \leq \pw(G) \leq \td(G) - 1$, $\td(G) \leq (\tw(G) + 1) \log_2 |V(G)|$, $\tw(G) \leq \pw(G) \leq \ctw(G)$, and $\td(G) \leq \td(G - v) + 1$ for any vertex $v \in V(G)$. These inequalities come with algorithms that can transform the appropriate decomposition in polynomial time.
\end{lem}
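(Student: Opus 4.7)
The first inequality $\tw(G) \leq \pw(G)$ is immediate, since any path decomposition is also a tree decomposition (and preserves the width). So the real work is the other two bounds.

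For $\pw(G) \leq \td(G)-1$, I would start from an elimination forest $\EF$ of depth $d = \td(G)$ and build a path decomposition of width $d-1$ by walking through the leaves of $\EF$ in DFS order. For each leaf $v$, create a bag $\bag_v$ consisting of $v$ together with all of its ancestors in $\EF$; this bag has size at most $d$. The three path-decomposition axioms need to be checked: every vertex $v$ (whether leaf or internal) appears in the bag of any leaf that is a descendant of $v$, so vertex coverage holds; for an edge $\{u,v\} \in E(G)$ the elimination-forest property forces $u$ and $v$ to lie on a common root-to-leaf path, so both appear in some leaf's bag; and the contiguity condition holds precisely because DFS order visits all leaves in the subtree rooted at $v$ consecutively, so the bags containing $v$ form a contiguous subpath. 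This transformation is clearly polynomial time.

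For $\td(G) \leq (\tw(G)+1)\log_2 |V(G)|$, I would argue by induction on $|V(G)|$ using balanced separators derived from a tree decomposition. The key ingredient is the standard fact that in any tree decomposition of $G$ with width $w = \tw(G)$ there is a bag $\bag_t$ whose removal splits $G$ into parts of size at most $|V(G)|/2$; this bag can be found in polynomial time by finding a centroid-like node in the decomposition tree (the node such that every subtree hanging off it covers at most half of $V(G) \setminus \bag_t$). Set $B = \bag_t$, place the $|B| \leq w+1$ vertices of $B$ as a top chain in the elimination forest, and recursively construct elimination forests for each connected component of $G - B$ (of size at most $|V(G)|/2$) and attach their roots as children at the bottom of the chain. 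Each recursive call adds at most $w+1$ to the depth and halves the vertex count, so the total depth is bounded by $(w+1)\lceil \log_2 |V(G)|\rceil$; the recursion tree gives the polynomial-time algorithm.

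The main subtlety I expect is the bookkeeping for the path-decomposition construction: one must verify carefully that the DFS ordering genuinely produces contiguous occurrences for \emph{internal} forest nodes (not just leaves), which relies on the fact that a vertex $v$ appears in $\bag_{v'}$ iff $v'$ is a leaf descendant of $v$ in $\EF$, and descendants of $v$ form a DFS-contiguous interval. For the treedepth bound, the technical point is arguing that removing one bag of a tree decomposition yields a separator whose sides correspond to unions of subtrees of the decomposition tree, which is what lets the centroid argument go through cleanly.
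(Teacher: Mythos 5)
The paper gives no proof of this lemma --- it is stated with a citation to Chapter~6 of \cite{NesetrilO12} --- so there is no internal argument to compare against. Your reconstruction is the standard textbook argument found in that reference: for $\pw(G)\le\td(G)-1$ you take leaf-plus-ancestors bags ordered by a DFS traversal of the elimination forest, and the contiguity you flag as the subtle point goes through exactly because the leaf descendants of any node occupy a consecutive interval in DFS order; for $\td(G)\le(\tw(G)+1)\log_2|V(G)|$ you recurse on a balanced separator obtained from a centroid bag of a width-$\tw(G)$ tree decomposition. Both directions are correct, and the transformations are plainly polynomial-time as you state. The only cosmetic gap is that your recursion $D(1)=1$, $D(n)\le(w+1)+D(\lfloor n/2\rfloor)$ yields $D(n)\le(w+1)\lceil\log_2 n\rceil+1$ rather than the tighter $(w+1)\log_2 n$; one recovers the stated constant by truncating the recursion once at most $w+1$ vertices remain and placing them on a single chain, but this does not affect the approach or the asymptotics.
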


\begin{cor}
  \label{thm:mod_to_tw_implies_td}
  For any graph $G = (V, E)$ and $c, \ncols \in \NN$, if there is a modulator $\modulator \subseteq V$ to treewidth $\ncols$, i.e., $\tw(G - \modulator) \leq \ncols$, then we have that $\td(G) \leq |\modulator| + (\ncols + 1) \log_2 |V|$. In particular, we have that $\Oh^*(c^{\td(G)}) \leq \Oh^*(c^{|\modulator|})$ for all $c \geq 1$. The decompositions can be transformed in polynomial time.
\end{cor}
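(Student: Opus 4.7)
The plan is to reduce directly to \cref{thm:td_tw} applied to $G - \modulator$, then stack the modulator on top of the resulting elimination forest.

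First, I would invoke \cref{thm:td_tw} on $G - \modulator$: starting from a tree decomposition of $G - \modulator$ of width at most $\ncols$, obtain in polynomial time an elimination forest $\EF'$ of $G - \modulator$ of depth at most $(\ncols + 1)\log_2 |V(G-\modulator)| \leq (\ncols + 1)\log_2 |V|$. Next, I build an elimination forest $\EF$ of $G$ by prepending $\modulator$ as a single ancestor chain: enumerate $\modulator = \{x_1, \ldots, x_{|\modulator|}\}$ in any order, create a rooted path $x_1 \to x_2 \to \cdots \to x_{|\modulator|}$, and attach every root of $\EF'$ as a child of $x_{|\modulator|}$. The result is a rooted tree on $V(G)$ of depth at most $|\modulator| + (\ncols + 1)\log_2 |V|$.

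To verify that $\EF$ is a valid elimination forest of $G$, consider any edge $\{u,v\} \in E(G)$. If both $u, v \in \modulator$, they lie on the chain $x_1, \ldots, x_{|\modulator|}$, so one is an ancestor of the other. If $u \in \modulator$ and $v \notin \modulator$, then $v$ appears strictly below $x_{|\modulator|}$ in $\EF$, so $u$ is an ancestor of $v$. Finally, if $u, v \notin \modulator$, the ancestor-descendant relation is inherited from $\EF'$, which was valid by construction. For the running-time claim, plug the depth bound into $c^{\td(G)}$:
\[
c^{\td(G)} \leq c^{|\modulator| + (\ncols + 1)\log_2 |V|} = c^{|\modulator|} \cdot |V|^{(\ncols + 1)\log_2 c},
\]
which is $c^{|\modulator|}$ times a polynomial in $|V|$, i.e., $\Oh^*(c^{\td(G)}) \leq \Oh^*(c^{|\modulator|})$. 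All transformations are polynomial-time, inheriting this property from \cref{thm:td_tw}.

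There is no real obstacle: the corollary is a clean packaging of \cref{thm:td_tw}. The only mildly subtle point is that $G - \modulator$ may be disconnected, so \cref{thm:td_tw} produces a forest rather than a single tree; making every root of that forest a child of $x_{|\modulator|}$ handles this seamlessly without increasing the depth beyond the claimed bound.
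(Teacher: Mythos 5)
Your proposal is correct and follows essentially the same approach as the paper: apply \cref{thm:td_tw} to $G - \modulator$, stack $\modulator$ as an ancestor path above the resulting elimination forest, and bound the depth and running time. The paper's proof is terser but identical in substance.
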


\begin{proof}
  Let $\modulator$ be a modulator to treewidth $\ncols$ for $G$. By \cref{thm:td_tw}, we see that $\td(G - \modulator) \leq (\ncols + 1) \log_2 |V|$ for $G - \modulator$. By repeatedly invoking the inequality $\td(G) \leq \td(G - v) + 1$ for $v \in X$, we obtain $\td(G) \leq |\modulator| + (\ncols + 1) \log_2 |V|$.
  To see the claim regarding the $\Oh^*$-notation, we compute $\Oh^*(c^{\td(G)}) = \Oh^*(c^{|\modulator|}|V|^{(\ncols+1)\log_2 c}) = \Oh^*(c^{|\modulator|})$.
\end{proof}

\begin{mythm}
  \label{thm:hierarchy_cliquewidth}
  Let $G = (V, E)$ be a graph. We have the following two chains of inequalities:
  $$\arraycolsep=1.4pt
  \begin{array}{lllll}
    \cw(G) & \leq \lcw(G) & \leq \tcpw(G) + 3 & \leq \tctd(G)  + 2 & \leq \td(G)  + 2, \\
    \cw(G) & \leq \lcw(G) & \leq \tcpw(G) + 3 & \leq \tcctw(G) + 3 & \leq \ctw(G) + 3.
  \end{array}
  $$
\end{mythm}

\begin{proof} 
  Follows from \cite[Lemma 2.1]{Lampis20}, \cref{thm:td_tw} and the last inequalities in both rows follow from the fact that $G / \tcpartition(G)$ is a subgraph of $G$ and that treedepth and cutwidth are subgraph-monotone.
\end{proof}

\begin{lem}
  \label{thm:tcm_tctd}
  Suppose that $G$ admits a TCM $\tcm$ to treewidth $\ncols$, then $\tctd(G) \leq |\tcm| + (\ncols + 1) \log_2 |V(G/\tcpartition(G))|$. In particular, we have for any $c \geq 1$ that $\Oh^*(c^{\tctd(G)}) \leq \Oh^*(c^{|\tcm|})$. The decompositions can be transformed in polynomial time.
\end{lem}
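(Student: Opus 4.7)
The plan is to mimic the strategy of \cref{thm:mod_to_tw_implies_td}, but carried out one level up in the quotient graph $Q = G/\tcpartition(G)$, since $\tctd(G) = \td(Q)$ by definition. The twinclass-modulator $\tcm$ is a set of twinclasses of $G$, hence corresponds naturally to a set of vertices in $Q$; I will treat it as such throughout.

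The key step is to show that $\tw(Q - \tcm) \leq \ncols$. For this, I would observe that $Q - \tcm$ can be obtained from $G - \bigcup(\tcm)$ by contracting each remaining twinclass into a single vertex: every vertex of $Q - \tcm$ is a twinclass $B \notin \tcm$ (so $B \subseteq V(G) \setminus \bigcup(\tcm)$), and two such twinclasses $B_1, B_2$ are adjacent in $Q - \tcm$ exactly when some edge of $G$ — hence of $G - \bigcup(\tcm)$ — runs between them. Therefore $Q - \tcm$ is a minor of $G - \bigcup(\tcm)$, and by minor-monotonicity of treewidth we get $\tw(Q - \tcm) \leq \tw(G - \bigcup(\tcm)) \leq \ncols$.

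From here the argument is exactly parallel to \cref{thm:mod_to_tw_implies_td}. By \cref{thm:td_tw} applied to $Q - \tcm$, there is an elimination forest of $Q - \tcm$ of depth at most $(\ncols + 1)\log_2 |V(Q)|$, computable in polynomial time. I would then form an elimination forest of $Q$ by laying out the vertices of $\tcm$ on a single root-to-bottom path and attaching the elimination forest of $Q - \tcm$ underneath; this is a valid elimination forest of $Q$ because any edge of $Q$ either has an endpoint in $\tcm$ (and is thus covered by the ancestor relation on the path) or lies inside $Q - \tcm$. Its depth is at most $|\tcm| + (\ncols + 1)\log_2 |V(Q)|$, giving the claimed bound on $\tctd(G)$. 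The $\Oh^*$ inequality then follows by the same calculation as in \cref{footnote:mod_tw_td}, since $|V(Q)| \leq |V(G)|$ is polynomial in the input size and $c^{(\ncols+1)\log_2 |V(Q)|} = |V(Q)|^{(\ncols+1)\log_2 c}$ is therefore absorbed by $\Oh^*$.

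I do not expect any real obstacle: the twinclass-modulator is defined precisely to make this reduction transparent, and computing $Q$, identifying $\tcm$ inside $Q$, and invoking the polynomial-time algorithms from \cref{thm:td_tw} are all routine. The one point to state carefully is the minor argument for $\tw(Q - \tcm) \leq \ncols$, since it is the only place where the definition of TCM (measuring the remainder on the level of the original graph $G$, not of $Q$) is actually used.
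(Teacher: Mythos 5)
Your proof follows the same overall structure as the paper's (and the same structure as \cref{thm:mod_to_tw_implies_td}): bound the treewidth of the quotient minus $\tcm$, invoke \cref{thm:td_tw}, prepend a path on the $\tcm$-vertices, and appeal to the footnote for the $\Oh^*$-bound. The one place where you diverge is the key step $\tw(Q - \tcm) \leq \ncols$, and there your phrasing has a subtle flaw: you say $Q - \tcm$ is a minor of $G - \bigcup(\tcm)$ obtained by ``contracting each remaining twinclass into a single vertex.'' For a \emph{false} twinclass this is an independent set, and contracting (identifying) a set of pairwise non-adjacent vertices is not a minor operation; in general, identifying non-adjacent vertices can increase treewidth (e.g.\ identifying the two endpoints of a long path yields a cycle). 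What saves you is that twinclasses are modules, so the identification coincides with simply deleting all but one representative per twinclass, which \emph{is} a minor operation and yields an isomorphic graph. The paper phrases this more directly: $Q - \tcm$ is an induced subgraph of $G - \bigcup(\tcm)$ (pick one representative per remaining twinclass), hence $\tw(Q - \tcm) \leq \tw(G - \bigcup(\tcm)) \leq \ncols$. I'd recommend adopting the induced-subgraph/representative formulation, or at least justifying why the contraction here behaves like a deletion; otherwise the step is imprecise even though your conclusion is correct.
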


\begin{proof}
  Since $G/\tcpartition(G) - \tcm$ is an induced subgraph of $G - \bigcup(\tcm)$, we see that $\tw(G/\tcpartition(G) - \tcm) \leq \tw(G - \bigcup(\tcm)) \leq \ncols$. The remainder of the proof is analogous to \cref{thm:mod_to_tw_implies_td} by working on the quotient graph $G/\tcpartition(G)$.
\end{proof}

\subsection*{Search Games}

To prove that the graphs in our lower bound constructions have small treewidth, it is easier to use a \emph{search game} characterization instead of directly constructing a tree decomposition.

\subparagraph*{Cops and Robbers.} The search game corresponding to treewidth is the \emph{omniscient-cops-and-robber-game}. In this game $k + 1$ cops try to capture a robber on a graph $G$. The cops can occupy vertices of $G$ and the robber can run at infinite speed along edges through vertices that are not occupied by cops. The cops and the robber are omniscient in that they know each others position at all times. At the start, the cops are placed on some vertices and the robber chooses a starting vertex for the escape. In every round some cops get into their helicopters and declare where they will land next. Before these cops land again, the robber may move to any other vertex that is reachable by a path that does not contain any non-airborne cop. Afterwards the cops land and the next round begins. The cops \emph{win} if they have a \emph{strategy} to capture the robber by landing on the robber's location after a finite number of rounds; otherwise the robber wins. 

\begin{mythm}[\cite{SeymourT93}]
  \label{thm:cops_robber}
  A graph $G$ has treewidth at most $k$ if and only if $k + 1$ cops can win the omniscient-cops-and-robber-game on $G$.
\end{mythm}

\section{Algorithm for Deletion to $\ncols$-Colorable}
\label{sec:cw_algo}

In this section we describe how to solve \DTC in time $\Oh^*((2^\ncols)^k)$ given a $k$-expression $\cwexp$ for $G$. We perform bottom-up dynamic programming along the syntax tree $\tree_\cwexp$. We again view \emph{solutions} to \DTC as functions $\coloring \colon V(G) \rightarrow [\ncols] \cup \{\del\}$ with the property discussed in the outline, cf.~\cref{sec:outline}.

\begin{mythm}
 \label{thm:dtc_algo}
 Given a $k$-expression $\cwexp$ for $G$, \DTC on $G$ can be solved in time $\Oh^*((2^\ncols)^k)$.
\end{mythm}

\begin{proof}
 Let $(G,\budget)$ be a \DTC instance and $\cwexp$ a $k$-expression for $G$. We can without of loss of generality assume that $\cwexp$ consists of $\Oh(|V(G)|)$ union-operations and $\Oh(|V(G)|k^2)$ unary operations~\cite{CourcelleO00}. For every node $t \in V(\tree_\cwexp)$ and label $\ell$, we store the set of colors used on $V^\ell_t$. After deleting the appropriate vertices, the remaining graph should be $\ncols$-colorable, hence the possible color sets are precisely the subsets of $[\ncols]$, where $\emptyset$ indicates that all vertices are deleted. Since we use at most $k$ labels at every node, this results in $(2^\ncols)^k$ possible types of partial solutions at each node. If the work for each type is only polynomial, then the claimed running time immediately follows, since there are only a polynomial number of nodes in $V(\tree_\cwexp)$. 

 For every $t \in V(\tree_\cwexp)$, a function $f \colon [k] \rightarrow \powerset([\ncols])$ is called a \emph{signature} and the set of \emph{partial solutions at $t$ respecting $f$} is defined by
\begin{align*}
	\psols_t(f) = \{ \coloring \colon V_t \rightarrow [\ncols] \cup \{\del\} \sep \,\,& \text{$\coloring$ induces an $\ncols$-coloring of } G_t - \delset \text{ and} \\ & \coloring(V_t^\ell) \setminus \{\del\} = f(\ell) \tfa \ell \in [k] \}.
\end{align*}
We want to compute the quantity $\DP_t(f) = \min \{|\delset| \sep \coloring \in \psols_t(f)\}$, where $\DP_t(f) = \infty$ if $\psols_t(f) = \emptyset$. Let $t_0$ be the root node of the $k$-expression $\cwexp$. The algorithm returns true if there is a $t_0$-signature $f$ such that $\DP_{t_0}(f) \leq \budget$ and otherwise the algorithm returns false.

Note that $f(\ell) = \emptyset$ implies $\coloring(V_t^\ell) = \{\bot\}$ for all $\coloring \in \psols_t(f)$, i.e., all vertices with label $\ell$ are deleted. Furthermore, the definition of $\psols_t(f)$ implies that $\psols_t(f) = \emptyset$ and $\DP_t(f) = \infty$ whenever $|f(\ell)| > |V^\ell_t|$ for some $\ell \in [k]$; we will not explicitly mention this edge case again in what follows and assume that the considered $f$ satisfy $|f(\ell)| \leq |V^\ell_t|$ for all $\ell \in [k]$. We proceed by presenting the recurrences to compute $\DP_t(f)$ for all nodes $t$ and signatures $f$ and afterwards show the correctness of these recurrences.

\paragraph*{Base case.} 
If $t = \intro{\ell}(v)$ for some $\ell \in [k]$, then $\DP_t(f) = [f(\ell) = \emptyset]$, because the solution cost is 1 if $v$ is deleted and $0$ otherwise.

\paragraph*{Relabel case.} 
If $t = \relab{i}{j}(G_{t'})$ for some $i \neq j \in [k]$ and where $t'$ is the child of $t$, then
\begin{equation*} 
	\DP_t(f) = \min \{ \DP_{t'}(f') \sep f'(\ell) = f(\ell) \tfa \ell \in [k] \setminus \{i,j\} \text{ and } f'(i) \cup f'(j) = f(j) \}. 
\end{equation*}
By assumption, $f$ will always satisfy $f(i) = \emptyset$ here, since there are no vertices with label $i$ in $G_{t}$. This recurrence goes over all ways how the colors $f(j)$ used for vertices with label $j$ in $G_{t}$ can be split among the vertices with label $i$ and $j$ in the previous graph $G_{t'}$.
Observe that we are taking the minimum over at most $(2^\ncols)^2 = \Oh(1)$ numbers on the right-hand side, hence this recurrence can be computed in polynomial time.

\paragraph*{Join case.} 
If $t = \join{i}{j}(G_{t'})$ for some $i \neq j \in [k]$, where $t'$ is the child of $t$, and assuming without loss of generality that $V^i_{t'} \neq \emptyset$ and $V^j_{t'} \neq \emptyset$, then 
\begin{equation*}
	\DP_t(f) = 
	\begin{cases} \DP_{t'}(f) & \text{if } f(i) \cap f(j) = \emptyset, \\
									  \infty & \text{else.} 
  \end{cases}
\end{equation*}
This recurrence filters out all partial solutions where the coloring properties are not satisfied at some newly added edge. This happens precisely when $f(i) \cap f(j) \neq \emptyset$, because then there exists an edge in the join between label $i$ and $j$ whose endpoints get the same color.

\paragraph*{Union case.} 
If $t = G_{t_1} \union G_{t_2}$ where $t_1$ and $t_2$ are the children of $t$, then
\begin{equation*}
	\DP_t(f) = \min \{\DP_{t_1}(f_1) + \DP_{t_2}(f_2) \sep f_1(\ell) \cup f_2(\ell) = f(\ell) \tfa \ell \in [k]\}.
\end{equation*}
Here, we assume that $\infty + x = x + \infty = \infty + \infty = \infty$ for all $x \in \NN$. 
This recurrence goes for each label $\ell \in [k]$ over all ways how the color set $f(\ell)$ can be split among the vertices with label $\ell$ in the first graph $G_{t_1}$ and in the second graph $G_{t_2}$. 

To compute this recurrence, we make use of the following fast cover product algorithm.
\begin{lem}[\cite{CyganFKLMPPS15}]\label{thm:fast_cover_product_small}
	For two functions $f,g \colon \powerset(U) \rightarrow \{-M, \ldots, M\}$, given all $2^{|U|}$ values of $f$ and $g$ in the input, all the $2^{|U|}$ values of the \emph{cover product} $f *_c g$ defined by
	\begin{equation*}
		(f *_c g)(X) = \min_{Y \cup Z = X} f(Y) + g(Z),
	\end{equation*}
	can be computed in time $2^{|U|}|U|^{\Oh(1)} \cdot \Oh(|M| \log |M| \log \log |M|)$.
\end{lem}
\cref{thm:fast_cover_product_small} allows us to compute the recurrence for all $f$ simultaneously in time $\Oh^*((2^\ncols)^k)$: We interpret the signatures $f \colon [k] \rightarrow \powerset([\ncols])$ as subsets of $[k] \times [\ncols]$ in the following way: $S(f) = \{(i,c) \sep i \in [k], c \in f(i)\}$. Observe that $f_1(\ell) \cup f_2(\ell) = f(\ell) \tfa \ell \in [k]$ is equivalent to $S(f_1) \cup S(f_2) = S(f)$. Now, $\DP_t$ can be considered as a function $\powerset([k] \times [\ncols]) \rightarrow [n+1]$ by replacing $\infty$ with $n + 1$, since we have $\DP_t(f) \leq n$ whenever $\psols_t(f) \neq \emptyset$. The recurrence of the union case can then be considered as the $(\min, +)$-cover product of $\DP_{t_1}$ and $\DP_{t_2}$. By \cref{thm:fast_cover_product_small} with $U = [k] \times [\ncols]$ and $M = n+1$, we can compute all values of $\DP_t$ in time $2^{k\ncols} (k\ncols)^{\Oh(1)} \cdot \Oh(n \log n \log \log n) = \Oh^*((2^\ncols)^k)$. 

\paragraph*{Correctness.} 
We prove the correctness by bottom-up induction along the syntax tree $\tree_\cwexp$. In the base case $G_t$ only consists of the single vertex $v$ and we can either delete $v$ or assign some color to $v$. Together with the edge case handling, this is implemented by the formula for the base case.

For the relabel case, notice that $G_t = G_{t'}$, $V^i_t = \emptyset$, $V^j_t = V^i_{t'} \cup V^j_{t'}$, and $V^\ell_t = V^\ell_{t'}$ for all $\ell \in [k] \setminus \{i,j\}$. Let $f'$ be a candidate in the recurrence of $\DP_t(f)$ and $\coloring' \in \psols_{t'}(f')$ be a minimizer in the definition of $\DP_{t'}(f')$, then we also have that $\coloring' \in \psols_t(f)$ since $\coloring'(V^j_t) \setminus \{\del\} = (\coloring'(V^i_{t'}) \setminus \{\del\}) \cup (\coloring'(V^j_{t'}) \setminus \{\del\}) = f'(i) \cup f'(j) = f(j)$. Hence, the recurrence is an upper bound on $\DP_t(f)$. 

In the other direction, let $\coloring$ be a minimizer in the definition of $\DP_t(f)$ and consider $f'$ with $f'(\ell) = \coloring(V^\ell_{t'}) \setminus \{\del\}$ for all $\ell \in [k]$. Then $f'$ satisfies $f'(i) \cup f'(j) = f(j)$ and $\coloring \in Q_{t'}(f')$, so $f'$ is also considered in the recurrence and the recurrence is a lower bound on $\DP_t(f)$.

For the join case, notice that for $\coloring \in \psols_{t'}(f) \supseteq \psols_t(f)$ it holds that $\coloring \in \psols_t(f)$ if and only if $\coloring(V^i_{t'}) \cap \coloring(V^j_{t'}) \subseteq \{\del\}$ as otherwise $\coloring$ cannot induce a coloring of $G_t - \delset$.

For the union case, a feasible solution $\coloring$ of $G_t$ induces feasible solutions $\coloring_1 \equiv \coloring\big|_{V_{t_1}}$ of $G_{t_1}$ and $\coloring_2 \equiv \coloring\big|_{V_{t_2}}$ of $G_{t_2}$ such that $\coloring_1(V^\ell_t) \cup \coloring_2(V^\ell_t) = \coloring(V^\ell_t)$ for all $\ell \in [k]$ and vice versa.
\end{proof}

This algorithm has a straightforward extension that can also handle polynomially large vertex costs in running time $\Oh^*((2^\ncols)^k)$. For even larger costs it is not clear how to compute the table entries for the union nodes quickly enough.

\section{Lower Bound for Deletion to $\ncols$-colorable}
\label{sec:dtc_lb}

In this section, we prove two tight parameterized lower bounds for \DTC. Compared to the outline, \cref{sec:outline}, we present the lower bounds in reverse order, i.e., we first present the dense setting in \cref{sec:dtc_lb_cw} and then the sparse setting in \cref{sec:dtc_lb_td}. Since the constructions are so similar, this allows us to give the more complicated proof for the dense setting in full detail, while only discussing the changes needed for the sparse setting. Before presenting the lower bounds, we construct a family of $(\ncols + 1)$-critical graphs in \cref{sec:critical_graphs} that will serve as important gadgets in both lower bound constructions.

\subsection{Construction of Critical Graphs}
\label{sec:critical_graphs}

We say that a graph $G$ is \emph{$t$-critical} if $\chi(G) = t$ and $\chi(G - v) = t - 1$ for all $v \in V(G)$.  The odd cycles form a family of $3$-critical graphs of treewidth $2$ that contains graphs of arbitrarily large size. To obtain lower bounds for \DTC with $\ncols > 2$ instead of \OCT, we instead construct a family of $(\ncols + 1)$-critical graphs of treewidth $\ncols$ that contains graphs of arbitrarily large size.

\begin{mythm}
  \label{thm:critical_family}
  Let $\ncols \geq 2$. There exists a family $\crit^\ncols$ of graphs such that 
  \begin{itemize}
    \item $H$ is $(\ncols + 1)$-critical for all $H \in \crit^\ncols$,
    \item $\pw(H) = \tw(H) = \ncols$ for all $H \in \crit^\ncols$,
    \item for every $s \in \NN$, there exists a graph $H \in \crit^\ncols$ with $s \leq |V(H)| \leq s + \ncols$.
  \end{itemize}
\end{mythm}

\begin{dfn}[\cite{Diestel12, Hajos61}]
  Given two graphs $G$ and $H$ and edges $\{v,w\} \in E(G)$ and $\{x, y\} \in E(H)$. We obtain a graph $G' = (V', E')$ by applying \emph{Hajós' construction}, where $V' = (V(G) \cup V(H) \cup \{s\}) \setminus \{v, x\}$ and $E' = E(G - v) \cup E(H - x) \cup \{\{s,u\} \sep \{v,w\} \neq \{v,u\} \in E(G)\} \cup \{\{s,u\} \sep \{x,y\} \neq \{x,u\} \in E(H)\} \cup \{w, y\}$. That is, we remove the edges $\{v,w\}$ and $\{x,y\}$, identify $v$ and $x$ into a single vertex called $s$, and add the edge $\{w,y\}$.
\end{dfn}

\begin{dfn}[\cite{Diestel12, Hajos61}]
  Let $t \geq 3$. A graph $G = (V, E)$ is \emph{$t$-constructible} if it can be constructed from the following operations, starting with the complete graph $K_t$:
  \begin{itemize}
    \item Let $G$ and $H$ be $t$-constructible graphs, then the graph obtained by applying Hajós' construction to $G$ and $H$ (wrt.\ edges $\{v,w\} \in E(G)$ and $\{x,y\} \in E(H)$ and identifying $v$ and $x$) is $t$-constructible.
    \item Let $G$ be a $t$-constructible graph and $u$ and $v$ two non-adjacent vertices in $G$. The graph formed by adding the edge $\{u,v\}$ to $G$ and then contracting $\{u, v\}$ is $t$-constructible. 
  \end{itemize}
\end{dfn}

\begin{lem}[\cite{Diestel12, Hajos61}]
  \label{thm:constructible_chromatic}
  Every $t$-constructible graph $G$ requires at least $t$ colors, i.e. $\chi(G) \geq t$.
\end{lem}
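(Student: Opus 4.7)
The plan is to proceed by induction on the number of construction steps witnessing that $G$ is $t$-constructible. The base case $G = K_t$ is immediate since $\chi(K_t) = t$. For the inductive step there are two rules to handle, and in each case I would argue contrapositively: I start from a hypothetical proper $(t-1)$-coloring of the newly constructed graph and pull it back to a proper $(t-1)$-coloring of one of the predecessors, contradicting the inductive hypothesis.

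The contraction rule is the easy one. If $G'$ is formed from a $t$-constructible $G$ by adding a non-edge $\{u,v\}$ and contracting it into a vertex $s$, then from any proper $(t-1)$-coloring $c'$ of $G'$ I would color both $u$ and $v$ in $G$ with $c'(s)$ and leave all other vertices with their $c'$-colors. Because $\{u,v\} \notin E(G)$ and every other edge of $G$ survives as an edge of $G'$ (after identification), no conflict is created, which contradicts $\chi(G) \geq t$.

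The Hajós step is where the actual content lies. Suppose $G'$ results from applying the construction to $t$-constructible graphs $G$ and $H$ with distinguished edges $\{v,w\} \in E(G)$ and $\{x,y\} \in E(H)$, with $v$ and $x$ identified into a new vertex $s$, and suppose for contradiction that $c'$ is a proper $(t-1)$-coloring of $G'$. Since $\{w,y\} \in E(G')$ we have $c'(w) \neq c'(y)$, so $c'(s)$ differs from at least one of them; by symmetry assume $c'(s) \neq c'(w)$. I would then define a coloring $c$ of $G$ by $c(v) := c'(s)$ and $c(u) := c'(u)$ for all other $u \in V(G)$. The edges of $G$ avoiding $v$ are unaffected; every edge $\{v,u\}$ with $u \neq w$ is safe because $\{s,u\} \in E(G')$ forces $c'(u) \neq c'(s) = c(v)$; and $\{v,w\}$ is safe by the case assumption $c'(s) \neq c'(w)$. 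This yields a proper $(t-1)$-coloring of $G$, contradicting the inductive hypothesis.

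The only subtle point---and hence the main obstacle---is keeping the bookkeeping of which edges and vertices of $G$ and $H$ survive into $G'$ straight, and noticing that the single edge $\{w,y\}$ added by Hajós' construction is exactly what guarantees $c'(s)$ cannot agree with both $c'(w)$ and $c'(y)$, so the pullback succeeds on at least one of $G$ or $H$. No further computation beyond this case analysis is needed.
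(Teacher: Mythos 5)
The paper itself does not reprove this lemma; it cites it as a classical result of Hajós (see also Diestel's treatment). Your argument is exactly the standard proof of that direction of Hajós' theorem, and it is correct: the induction on construction steps, the trivial base case $\chi(K_t)=t$, the pullback of a hypothetical $(t-1)$-coloring through a contraction step, and, for the Hajós step, the observation that the new edge $\{w,y\}$ forces $c'(s)$ to disagree with $c'(w)$ or $c'(y)$, so the restriction to one of the two constituents becomes a proper $(t-1)$-coloring after setting $c(v):=c'(s)$ (resp.\ $c(x):=c'(s)$). All edge cases check out against the paper's definition of the construction (every edge of $G$ other than the removed $\{v,w\}$ is present in $G'$ either directly or with $v$ replaced by $s$). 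Nothing is missing, and the approach coincides with the standard one that the paper is implicitly relying on.
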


\begin{figure}[h]
  \centering
  \tikzfig{pictures/critical}
  \caption{The graphs $H^4_2, H^4_3, H^5_2, H^5_3$ and the vertex labels of $H^5_3$.}
  \label{fig:critical_graphs}
\end{figure}

We set $H_1^t = K_t$ with vertex labels $V(H_1^t) = \{a_1, a'_1, b_1, c_{1, 1}, c_{1, 2}, \ldots, c_{1, t - 3}\}$. For $\gamma \geq 2$, we obtain $H_\gamma^t$ by performing Hajós' construction on $H_{\gamma - 1}^t$ and $K_t$, where we label the vertices of $K_t$ by $a_\gamma, a'_\gamma, b_\gamma, c_{\gamma, 1}, c_{\gamma, 2}, \ldots, c_{\gamma, t - 3}$, with respect to the edges $\{b_{\gamma - 1}, a'_{\gamma - 1}\} \in E(H_{\gamma - 1}^t)$ and $\{a_\gamma, b_\gamma\} \in E(K_t)$ and identifying $a'_{\gamma - 1}$ with $a_\gamma$. Having finished the construction, we will not require the $a'$-labels anymore, in $H_\gamma^t$ we  set $a_{\gamma + 1} = a'_\gamma$, see \cref{fig:critical_graphs}. 

We define $\crit^{t - 1} = \{H^t_\gamma \sep \gamma \in \NN\}$ for $t \geq 3$. Observe that for $t = 3$ we obtain exactly the odd cycles. We establish the following properties of the graphs $H^t_\gamma$ which directly implies \cref{thm:critical_family}.

\begin{lem}
  \label{thm:critical_properties}
  The graphs $H^t_\gamma$, $t \geq 3$, $\gamma \in \NN$, have the following properties:
  \begin{align*}
    \textbf{1. } |V(H^t_\gamma)| = (t - 1) \gamma + 1. && \textbf{2. } H^t_\gamma \text{ is $t$-critical.} && \textbf{3. } \pw(H^t_\gamma) = \tw(H^t_\gamma) = t - 1.
  \end{align*}
\end{lem}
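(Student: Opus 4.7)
The plan is to verify the three properties in turn, relying on a structural decomposition of $H^t_\gamma$ into a spine plus pendant cliques, which makes all three claims transparent.

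First, Property 1 follows by a one-line induction on $\gamma$: the base case $|V(H^t_1)| = |V(K_t)| = t = (t-1)\cdot 1 + 1$ is immediate, and each Hajós step combines $H^t_{\gamma-1}$ with a fresh $K_t$ by identifying a single vertex, contributing exactly $t-1$ new vertices. The lower bound $\chi(H^t_\gamma) \geq t$ of Property 2 is then immediate from \cref{thm:constructible_chromatic}, since $H^t_\gamma$ is $t$-constructible by definition.

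The main work lies in the remaining parts of Property 2. The structural observation is that $H^t_\gamma$ decomposes into (i) a spine cycle
\[
  C \colon \; a_1 - a_2 - \cdots - a_{\gamma+1} - b_\gamma - b_{\gamma-1} - \cdots - b_1 - a_1
\]
of length $2\gamma + 1$, and (ii) for each chunk $\gamma' \in [\gamma]$, a pendant clique $\{c_{\gamma', 1}, \ldots, c_{\gamma', t-3}\}$ completely joined to the three spine vertices $a_{\gamma'}, a_{\gamma'+1}, b_{\gamma'}$ of that chunk. To see this, I would inspect directly which edges of the constituent $K_t$'s survive the Hajós removals; the surviving spine edges are exactly $\{a_{\gamma'}, a_{\gamma'+1}\}$ for $\gamma' \in [\gamma]$, $\{a_1, b_1\}$, $\{a_{\gamma+1}, b_\gamma\}$, and the bridging edges $\{b_{\gamma'}, b_{\gamma'+1}\}$. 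A $t$-coloring of $H^t_\gamma$ is then obtained by coloring the spine alternately with $\{1, 2\}$, setting a single break vertex $b_\gamma$ to color $t$ (which absorbs the parity obstruction of the odd cycle), and coloring $c_{\gamma', i}$ with color $i + 2 \in \{3, \ldots, t-1\}$; adjacencies are verified using only that spine and pendants use disjoint color palettes modulo the single break. For $(t-1)$-colorability of $H^t_\gamma - v$, I split into two cases. If $v$ is a spine vertex, $C - v$ is an ordinary path and hence $2$-colorable with $\{1,2\}$; combining with the $c$-palette $\{3, \ldots, t-1\}$ yields a $(t-1)$-coloring. If $v = c_{\gamma_0, i_0}$, then color $i_0 + 2$ is freed inside chunk $\gamma_0$: reassign $b_{\gamma_0}$ to color $i_0 + 2$, which removes the parity obstruction from the spine without creating any new conflict (the spine-neighbors of $b_{\gamma_0}$ carry colors in $\{1,2\}$, while its remaining chunk-neighbors carry colors in $\{3, \ldots, t-1\} \setminus \{i_0 + 2\}$), and $2$-color the path $C - b_{\gamma_0}$ with $\{1, 2\}$. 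The main obstacle will be confirming that the $c$-vertex case works uniformly across the boundary chunks $\gamma_0 \in \{1, \gamma\}$, where $b_{\gamma_0}$ acquires an extra spine-neighbor among the $a$-vertices; but this is settled by the explicit spine edge list above, since those extra neighbors also carry colors in $\{1,2\}$.

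Finally, Property 3 splits into a matching lower and upper bound. The lower bound $\tw(H^t_\gamma) \geq t - 1$ is immediate from the general inequality $\chi(G) \leq \tw(G) + 1$ together with Property 2. For the upper bound $\pw(H^t_\gamma) \leq t - 1$, I will exhibit an explicit path decomposition. For each chunk $\gamma' \in [\gamma]$ use a main bag $M_{\gamma'} = \{a_{\gamma'}, a_{\gamma'}', b_{\gamma'}, c_{\gamma', 1}, \ldots, c_{\gamma', t-3}\}$ of $t$ vertices (covering all intra-chunk edges), and between consecutive main bags insert a transition bag $T_{\gamma'} = \{a_{\gamma'+1}, b_{\gamma'}, b_{\gamma'+1}\}$ of $3$ vertices (covering the bridging edge $\{b_{\gamma'}, b_{\gamma'+1}\}$). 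The sequence $M_1, T_1, M_2, T_2, \ldots, T_{\gamma-1}, M_\gamma$ is then a valid path decomposition: every edge lies in some bag, and each vertex occupies a contiguous subsequence thanks to the identification $a_{\gamma'}' = a_{\gamma'+1}$ and the observation that each $b_{\gamma'}$ appears precisely in $T_{\gamma'-1}, M_{\gamma'}, T_{\gamma'}$. The maximum bag size is $\max(t, 3) = t$ for $t \geq 3$, hence the width is $t - 1$, as required.
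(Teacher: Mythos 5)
Your proof is correct and ends up at the same explicit path decomposition the paper uses, but the route to the criticality claim is organized around a structural observation that the paper does not make explicit. You identify that the $a$- and $b$-vertices together with the surviving block edges and bridge edges form an odd spine cycle $C$ of length $2\gamma+1$, with each clique $\{c_{\gamma',1},\dots,c_{\gamma',t-3}\}$ pendant on the triple $a_{\gamma'},a_{\gamma'+1},b_{\gamma'}$. This lets you collapse the paper's three-case analysis ($v$ an $a$-vertex, $b$-vertex, or $c$-vertex, each with its own hand-built coloring) to two: if $v$ lies on $C$ the parity obstruction disappears and you $2$-color the path $C-v$; if $v=c_{\gamma_0,i_0}$ you move $b_{\gamma_0}$ to the freed pendant color $i_0+2$ and $2$-color $C-b_{\gamma_0}$. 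Your verification of the edge set of $C$ (in particular that $\{a_1,b_1\}$ and $\{a_{\gamma+1},b_\gamma\}$ survive while the other $\{a_\ell,b_\ell\}$ and $\{b_\ell,a_{\ell+1}\}$ do not) is accurate, and the boundary cases $\gamma_0\in\{1,\gamma\}$ are handled correctly. For the treewidth lower bound you invoke $\chi(G)\leq\tw(G)+1$ instead of the paper's contraction to a $K_t$ minor; both are standard and equally short. The conceptual framing via the spine odd cycle is a genuine simplification that also exposes why the $t=3$ specialization recovers the odd cycles directly; the paper's proof is more mechanical but arrives at the same colorings and the same bags.
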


\begin{proof}
  Property 1 can be easily shown by induction over $\gamma$. Now, we can prove that $H^t_\gamma$ is $t$-critical. By \cref{thm:constructible_chromatic} we have that $\chi(H^t_\gamma) \geq t$. Let $v \in V(H^t_\gamma)$ be arbitrary. We have to show that $\chi(H^t_\gamma - v) \leq t - 1$. We distinguish whether $v$ is labeled by $a$, $b$, or $c$, see \cref{fig:critical_coloring}.
  
  \begin{figure}[h]
    \centering
    \tikzfig{pictures/critical_coloring}
    \caption{The three coloring cases in the proof of \cref{thm:critical_properties} for the graph $H^5_4$.}
    \label{fig:critical_coloring}
  \end{figure}
  
  \textbf{Case $v = a_i$, $i \in [\gamma + 1]$:} We construct a proper $(t-1)$-coloring $\varphi \colon V(H^t_\gamma - v) \rightarrow [t-1]$. For $\ell \in [\gamma]$, set $\varphi(b_\ell) = (\ell \mod 2) + 1$. For $\ell \in [\gamma + 1]$, set $\varphi(a_\ell) = (\ell + 1 \mod 2) + 1$ if $\ell < i$ and set $\varphi(a_\ell) = (\ell \mod 2) + 1$ if $\ell > i$. For $\ell \in [\gamma]$ and $k \in [t - 3]$, set $\varphi(c_{\ell,k}) = k + 2$. Now, for all $\ell \in [\gamma]$, the $(t-1)$-clique induced by $a_\ell, a_{\ell+1}, c_{\ell,1}, \ldots, c_{\ell,t-3}$ is colored properly and the graph induced by $N[b_\ell]$ is colored properly. Since these cover all edges, $\varphi$ must be a proper coloring.
  
  \textbf{Case $v = b_i$, $i \in [\gamma]$:} Again, we construct a proper $(t-1)$-coloring $\varphi$. For $\ell \in [\gamma]$, set $\varphi(b_\ell) = (\ell \mod 2) + 1$ if $\ell < i$ and set $\varphi(b_\ell) = (\ell + 1 \mod 2) + 1$ if $\ell > i$. For $\ell \in [\gamma + 1]$, set $\varphi(a_\ell) = (\ell + 1 \mod 2) + 1$. For $\ell \in [\gamma]$ and $k \in [t-3]$, set $\varphi(c_{\ell, k}) = k + 2$. Similarly to the previous case, $\varphi$ must be a proper coloring.
  
  \textbf{Case $v = c_{i,k}$, $i \in [\gamma]$, $k \in [t-3]$:} We construct a proper $(t-1)$-coloring $\varphi$ as follows. We set $\varphi(b_i) = k + 2$. For $\ell \in [\gamma] \setminus \{i\}$, set $\varphi(b_\ell) = (\ell \mod 2) + 1$ if $\ell < i$ and set $\varphi(b_\ell) = (\ell + 1 \mod 2) + 1$ if $\ell > i$. For $\ell \in [\gamma + 1]$, set $\varphi(a_\ell) = (\ell + 1 \mod 2) + 1$. For $(\ell, k') \in ([\gamma] \times [t - 3]) \setminus \{(i,k)\}$, set $\varphi(c_{\ell,k'}) = k' + 2$. Again, like in the previous cases, $\varphi$ must be a proper coloring.
  
  This shows that $H^t_\gamma$ is $t$-critical. We proceed by showing that $\pw(H^t_\gamma) = \tw(H^t_\gamma) = t - 1$. Note that by contracting all vertices in $V(H^t_\gamma) \setminus \{a_1, a_2, b_1, c_{1, 1}, \ldots, c_{1, t-3}\}$ into $b_1$, we obtain $K_t$ and hence $\pw(G) \geq \tw(G) \geq \tw(K_t) = t - 1$, as treewidth is a minor-monotone parameter. For the other direction, we construct a path decomposition $\TT$ of width $t - 1$. Let $\TT$ be the path on  $2\gamma - 1$ vertices denoted by $w_1, \ldots, w_{2\gamma - 1}$. For $\ell \in [\gamma]$, we define the bag $\bag_{w_{2\ell - 1}} = \{a_\ell, a_{\ell + 1}, b_\ell, c_{\ell, 1}, c_{\ell, 2}, \ldots, c_{\ell, t - 3}\}$ and for $\ell \in [\gamma - 1]$, we define the bag $\bag_{w_{2\ell}} = \{b_\ell, b_{\ell + 1}, a_{\ell + 1}\}$. This yields a path decomposition of width $t - 1$ and hence establishes property 3.
\end{proof}

\subsection{Dense Setting}
\label{sec:dtc_lb_cw}

We again view \emph{solutions} to \DTC as functions $\coloring \colon V(G) \rightarrow [\ncols] \cup \{\del\}$ satisfying $\coloring(u) = \bot$ or $\coloring(v) = \bot$ or $\coloring(u) \neq \coloring(v)$ for every edge $\{u,v\} \in E(G)$. 

We will prove a lower bound for \DTC parameterized by the size of a twinclass-modulator (TCM) to treewidth $\ncols$. This implies a lower bound for parameterization by twinclass-treedepth due to \cref{thm:tcm_tctd}. By \cref{thm:hierarchy_cliquewidth}, this further extends to lower bounds for parameterization by twinclass-pathwidth and clique-width. Hence, we will see that the running time of the algorithm from \cref{thm:dtc_algo} is tight, unless \SETH is false.

\begin{mythm}
  \label{thm:dtc_lower_bound}
  If \DTC can be solved in time $\Oh^*((2^{\ncols} - \eps)^{|\tcm|})$ for some $\ncols \geq 2$ and $\eps > 0$, where $\tcm$ is a TCM to treewidth $\ncols$, then \SETH is false.
\end{mythm}

Assume that we can solve \DTC in time $\Oh^*((2^\ncols - \eps)^{|\tcm|})$ for some $\ncols \geq 2$ and $\eps > 0$. For all clause sizes $\clss$, we provide a reduction from $\clss$-\SAT with $\nvars$ variables to $\DTC$ with a TCM to treewidth $\ncols$ of size approximately $\nvars \log_{2^\ncols}(2) = \nvars / \ncols$ to convert from base $2^\ncols$ to base $2$. Combining this reduction with the assumed faster algorithm will imply a faster algorithm for $\clss$-\SAT, thus violating \SETH. From now on, we consider $\ncols$ to be fixed.

Before we begin with the actual construction, we describe several gadgets that will be important.

\begin{figure}
 \centering
 \tikzfig{pictures/dtc_deletion_edge}
 \caption{A deletion edge between $u$ and $v$ for $\ncols = 2$ and $\ncols = 3$.}
 \label{fig:dtc_del_edge}
\end{figure}

\subparagraph*{Deletion edges.} Let $u$ and $v$ be two vertices. By adding a \emph{deletion edge} between $u$ and $v$, we mean adding $\ncols - 1$ vertices $w_1, \ldots, w_{\ncols - 1}$ and edges so that $\{u,v,w_1,w_2, \ldots, w_{\ncols - 1}\}$ is a clique of size $\ncols + 1$, see \cref{fig:dtc_del_edge}. The vertices $w_1, \ldots, w_{\ncols - 1}$ will not receive any further incident edges. Therefore, any solution to \DTC has to delete at least one vertex in this clique and we can assume that the deleted vertex is $u$ or $v$.

\subparagraph*{Arrows.} Our construction relies on being able to propagate deletions throughout the graph. This is done via so-called \emph{arrows}. We distinguish between two types of arrows, namely \emph{thin arrows} and \emph{thick arrows}. Thin arrows arise already in the sparse setting of the lower bound construction of Lokshtanov et al.~\cite{LokshtanovMS18} for the restricted case of $\ncols = 2$, i.e., \OCT parameterized by pathwidth. A thin arrow simply propagates a deletion from a single vertex to another vertex. Whereas the newly introduced thick arrows propagate a deletion to a single vertex only if sufficiently many vertices have been deleted in some true twinclass of size $\ncols$. The construction of thick arrows is described in the proof of \cref{thm:arrow_behavior}.

Let us proceed with the construction of thin arrows. Given two single vertices $u$ and $v$, \emph{adding a thin arrow} from $u$ to $v$ means adding a new vertex $w$ and a deletion edge between $u$ and $w$ and a deletion edge between $w$ and $v$. We remark that, compared to the construction of Lokshtanov et al.~\cite{LokshtanovMS18}, we have shortened the thin arrows. The construction of thin arrows is symmetric, but the direction will be important later for the description of a packing that witnesses a lower bound on the required budget. The idea is that when $u$ is not deleted, we delete $w$ to resolve both incident deletion edges; when $u$ is deleted, the first deletion edge is resolved and we can afford to delete $v$ to resolve the second deletion edge. The former is called the \emph{passive} solution of the thin arrow and the latter is the \emph{active} solution. Using exchange arguments, one can see that it is sufficient to only consider solutions that on each thin arrow use either the passive solution or active solution.

\begin{lem}
  \label{thm:arrow_behavior}
  Let $U$ be a set of $\ncols$ true twins and $v$ be a vertex that is not adjacent to $U$ and $\ell \in [\ncols]$. There is a graph $\arrow = \arrow_\ell(U,v)$ with the following properties:
  \begin{itemize}
   \item Any solution $\coloring$ satisfies $|V(\arrow - U) \cap \delset| \geq \ell$.
   \item If $|V(\arrow - U) \cap \delset| = \ell$, then $\coloring(v) = \del$ implies that $|U \cap \delset| \geq \ell$.
   \item Any solution $\tilde{\coloring}$ for $\tilde{G} = G[(V \setminus V(\arrow)) \cup U \cup \{v\}]$ with $\tilde{\coloring}(v) \neq \del$ or $|U \cap \tilde{\coloring}^{-1}(\del)| \geq \ell$ can be extended to a solution $\coloring$ for $G$ with $|V(\arrow - U) \cap \delset| = \ell$.
   \item $\tw(\arrow - U) \leq \ncols$ via a winning cops-and-robber-strategy starting with a cop on $v$.
  \end{itemize}
\end{lem}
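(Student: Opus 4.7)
The plan is to build the gadget $\arrow = \arrow_\ell(U,v)$ by combining $\ell$ copies of a $(\ncols+1)$-critical graph from the family $\crit^\ncols$ of \cref{thm:critical_family} with deletion edges and thin arrows, arranged so that the $\ell$ unavoidable internal deletions can be ``displaced'' into $U$ only when at least $\ell$ vertices of $U$ are deleted. The base case $\ell = 1$ is essentially the thin arrow from any single vertex of $U$ to $v$: its two deletion edges force at least one internal deletion (Property~1), and if $v$ is deleted under a tight budget then the remaining deletion edge must be resolved by deleting its endpoint in $U$ (Property~2); the passive/active solutions of a thin arrow witness Property~3, and Property~4 is trivial for such a small structure.

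For $\ell \geq 2$, I plan to take $\ell$ vertex-disjoint $(\ncols+1)$-critical graphs $H_1, \ldots, H_\ell$ from $\crit^\ncols$ with distinguished vertices $w_j \in V(H_j)$, form the clique $K = \{w_1, \ldots, w_\ell\}$, and add all edges between $K$ and $U$ so that $K \cup U$ is a clique of size $\ell + \ncols$. This clique forces $|K \cap \delset| + |U \cap \delset| \geq \ell$, while the criticality of each $H_j$ contributes one further required deletion that is shared with the clique bound exactly when $w_j$ is deleted. Thus the minimum internal count is exactly $\ell$, yielding Property~1. Finally, $v$ is attached via a small sub-gadget built from thin arrows and additional critical components, designed so that deleting $v$ costs one extra internal deletion unless none of the $w_j$ is deleted; combined with the clique constraint this forces $|U \cap \delset| \geq \ell$, establishing Property~2.

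Property~3 will be checked by explicitly exhibiting extending solutions of cost exactly $\ell$ internal deletions in each of the two allowed cases, using the passive/active templates of the thin arrows together with the $\ncols$-colorings of $H_j - w_j$ guaranteed by \cref{thm:critical_family}. Property~4 will be proved by describing an explicit cops-and-robber strategy (\cref{thm:cops_robber}): starting with the prescribed cop on $v$, guard $K$ and the $v$-sub-gadget with a bounded number of additional cops, then process each $H_j$ in turn, reusing cops across the disjoint $H_j$'s; each $H_j$ alone needs only $\ncols$ cops by \cref{thm:critical_family}, so the total stays at $\ncols + 1$.

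The hardest part is the precise design of the $v$-sub-gadget. It must simultaneously (i) preserve the baseline lower bound of $\ell$, (ii) couple $v$'s deletion to \emph{all} $\ell$ layers of the baseline so that absorbing $v$ forces $\ell$ disjoint $U$-deletions rather than just one, and (iii) still admit the two extensions of Property~3 at cost exactly $\ell$. Striking this balance will likely require interlacing thin arrows with additional critical components whose deletions span all $\ell$ baseline layers; a naive coupling to a single $H_j$ lets $v$'s absorption be paid by a single $U$-deletion and breaks Property~2, and the accounting must be tuned so that the sub-gadget does not secretly add a cheaper solution that violates Property~1.
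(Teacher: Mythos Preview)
Your base case is already broken. A thin arrow from a single fixed $u \in U$ to $v$ does not satisfy Property~3: if $\tilde\coloring$ deletes some $u' \in U$ with $u' \neq u$ and sets $\tilde\coloring(v) = \del$, then the deletion edge at $u$ (which is \emph{not} deleted) still forces one further internal deletion beyond $v$, so every extension uses at least two internal deletions rather than $\ell = 1$. The gadget must react to a deletion \emph{anywhere} in $U$, so it has to be wired to all of $U$, not to a single representative.

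The general plan has the same structural obstruction, and you have essentially diagnosed it yourself without resolving it. Your lower bound of $\ell$ comes from $\ell$ pairwise disjoint $(\ncols+1)$-critical graphs $H_1,\dots,H_\ell$, so under a tight budget \emph{every} one of the $\ell$ internal deletions is pinned inside some $H_j$. There is then no way for an internal deletion to land on $v$ without either (i) identifying $v$ with a vertex of a single $H_j$, which as you note yields only $|U \cap \delset|\geq 1$, or (ii) abandoning the requirement that each $H_j$ receive its own deletion, which kills Property~1. Concretely, in the intended active extension for Property~3 (at least $\ell$ deletions in $U$, $v$ deleted, exactly $\ell$ internal deletions), you would need each $H_j$ to be $\ncols$-colorable with no deletion inside it, contradicting criticality. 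Disjoint critical subgraphs are too rigid a source for the lower bound; they do not leave the slack that Properties~2 and~3 require.

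The paper avoids critical graphs in $\arrow_\ell(U,v)$ altogether. It takes a clique $K_\ell$ joined to all of $U$ (so $U \cup K_\ell$ is a $K_{\ncols+\ell}$), joins $K_\ell$ to $v$ and to a further clique $K_{\ncols-\ell}$ (so $K_\ell \cup K_{\ncols-\ell} \cup \{v\}$ is a $K_{\ncols+1}$), and adds an independent set $I_{\ell-1}$ with a deletion edge from each vertex of $I_{\ell-1}$ to each vertex of $K_\ell$. The lower bound of $\ell$ then comes from a case split on $d := |K_\ell \cap \delset|$: if $d = 0$ one needs $\ell-1$ deletions on disjoint deletion edges plus one in $K_{\ncols-\ell} \cup \{v\}$; if $d \geq 1$ one needs $d$ plus $\ell - d$ surviving disjoint deletion edges. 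This bound is \emph{flexible}: in the active extension one deletes $I_{\ell-1}$ and $v$ (total $\ell$), leaves $K_\ell$ intact, and colors $K_\ell$ with colors freed by the $\geq \ell$ deletions in $U$; in the passive extension one simply deletes $K_\ell$ (total $\ell$). Both hit exactly $\ell$, and Property~2 drops out of the same case analysis. The treewidth bound is then a short cops-and-robber argument on this small explicit graph.
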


\begin{figure}
 \centering
 \tikzfig{pictures/dtc_arrow}
 \caption{The construction of a thick $\ell$-arrow $\arrow_\ell(U,v)$. The double-lined edges represent deletion edges. An edge attached to a circle or ellipse represents a \emph{join}, i.e., a bundle of edges.}
 \label{fig:dtc_arrow}
\end{figure}

\begin{proof}
  The graph $\arrow = \arrow_\ell(U,v)$ is constructed by adding a clique $K_\ell$ on $\ell$ vertices, a clique $K_{\ncols - \ell}$ on $\ncols - \ell$ vertices, an independent set $I_{\ell - 1}$ on $\ell - 1$ vertices and adding all edges between $U$ and $K_\ell$, all edges between $K_\ell$ and $K_{\ncols - \ell}$, all edges between $K_\ell$ and $v$, all edges between $K_{\ncols - \ell}$ and $v$, and a deletion edge between each vertex of $K_\ell$ and each vertex of $I_{\ell - 1}$, cf.\ \cref{fig:dtc_arrow}. Observe that the $K_\ell$, $K_{\ncols - \ell}$, and $v$ together form a clique of size $\ncols + 1$. Only $U$ and $v$ can have further incident edges. 
  
  We first show that any solution $\coloring$ has to delete at least $\ell$ vertices in $\arrow - U$. Suppose that no deletion occurs in $K_\ell$, then $\ell - 1$ disjoint deletion edges between $K_\ell$ and $I_{\ell - 1}$ remain and at least one deletion has to occur in $K_{\ncols - \ell} \cup \{v\}$. On the other hand, suppose that $1 \leq d \leq \ell$ deletions occur in $K_\ell$, then $\ell - d$ disjoint deletion edges between $K_\ell$ and $I_{\ell - 1}$ remain. In either case, we must delete at least $\ell$ vertices.
  
  Next, we argue that a solution $\coloring$ deleting exactly $\ell$ vertices in $\arrow - U$ cannot delete $v$ if less than $\ell$ deletions occur in $U$. Suppose that $d < \ell$ deletions occur in $U$, then at least $\ell - d \geq 1$ deletions must occur in $K_\ell$, because $U$ and $K_\ell$ together form a clique of size $\ncols + \ell$. Let $d' \geq 1$ be the number of deletions in $K_\ell$, then like in the previous paragraph $\ell - d'$ disjoint deletion edges between $K_\ell$ and $I_{\ell - 1}$ remain. Hence, we must already perform a total of $\ell$ deletions in $K_\ell$ and the deletion edges, so $v$ cannot be deleted.
  
  Consider any solution $\tilde{\coloring}$ for $\tilde{G}$. If $\tilde{\coloring}(v) \neq \del$, then we can extend $\tilde{\coloring}$ to a solution $\coloring$ for $G$ by deleting $K_\ell$ completely. This deletes all edges between $U$ and $A$ and one endpoint of each deletion edge. The $K_{\ncols - \ell}$ is only attached to $v$ and can always be properly colored. We give all vertices in $I_{\ell - 1}$ the same color and the cliques attached to $I_{\ell - 1}$ due to the deletion edges are all disjoint and of size $\ncols - 1$ each, so they can be properly colored as well. Hence, $\coloring$ is a valid solution for $G$. 
  
  If $|U \cap \tilde{\coloring}^{-1}(\del)| \geq \ell$ and $\tilde{\coloring}(v) = \del$, then we can extend $\tilde{\coloring}$ to a solution $\coloring$ for $G$ by deleting $I_{\ell - 1}$ completely. To $K_\ell$ we assign a subset of the colors $[\ncols] \setminus \tilde{\coloring}(U)$, this is possible since at least $\ell$ deletions occur in $U$. The $K_\ell$ and $K_{\ncols - \ell}$ together form a clique of size $\ncols$ that we can properly color. The cliques of the deletion edges are all disjoint as before and only attached to $K_\ell$ and have size $\ncols - 1$ each, hence they can also be properly colored. Hence, $\coloring$ is a valid solution for $G$.
  
  Finally, we argue that $\tw(\arrow - U) \leq \ncols$ using the omniscient cops-and-robber-game. We begin by placing $\ncols + 1$ cops on $K_\ell \cup K_{\ncols - \ell} \cup \{v\}$, then we remove the $\ncols - \ell + 1$ cops on $K_{\ncols - \ell} \cup \{v\}$. Now, all remaining connected components correspond to a vertex of $I_{\ell - 1}$. Fix the component the robber escaped to and place one cop on the corresponding vertex of $I_{\ell - 1}$. Now, the robber must have escaped to the inside of some deletion edge between $K_\ell$ and $I_{\ell - 1}$. We remove the $\ell - 1$ cops from $K_\ell$ that are not on the endpoint of this deletion edge. Finally, we place $\ncols - 1$ cops on the remaining vertices of the deletion edge and capture the robber. Throughout the strategy, we have never placed more than $\ncols + 1$ cops simultaneously and hence the claimed treewidth bound holds by \cref{thm:cops_robber}.
\end{proof}

Similarly to the thin arrows, we say that a thick $\ell$-arrow $\arrow_\ell(U,v)$ is \emph{active} if the considered solution $\coloring$ satisfies $|U \cap \delset| \geq \ell$ and hence \cref{thm:arrow_behavior} allows us to assume that $\coloring(v) = \del$; otherwise, we say that $\arrow_\ell(U,v)$ is \emph{passive}.

\subparagraph*{Construction setup.} We will now begin with the construction of the \DTC instance. Consider a $\clss$-\SAT instance $\formula$ with $\nvars$ variables and $\nclss$ clauses. We enumerate the clauses and simply refer to them by their number. Depending only on $\eps$ and $\ncols$, we will choose an integer $\vgrpsize$; we will describe how to choose $\vgrpsize$ later. We partition the variables of $\formula$ into groups of size at most $\vgrpsize$, resulting in $\ngrps = \lceil \nvars / \vgrpsize \rceil$ groups which will be indexed by $i$. Next, we choose the smallest integer $\grpsize$ such that $\grpsize$ is divisible by $2^{\ncols}$ and $(2^{\ncols})^{\grpsize} \frac{(2^{\ncols} - 1)!}{2^{(2^\ncols)}} \grpsize^{-(2^\ncols)} \geq 2^\vgrpsize$. 

Usually, a simple base conversion from a base that is a power of $2$ to base $2$ allows for a construction where the handling of groups is a lot less technical. Unfortunately, since we require the trick of Cygan et al.~\cite{CyganDLMNOPSW16} that enlarges the groups, this is not possible here. We will now describe the construction of the \DTC instance $G = G(\formula, \ncols, \vgrpsize)$.

\subparagraph*{Central twinclasses.} The central vertices of $G$ will form the TCM to treewidth $\ncols$. For each variable group $i \in [\ngrps]$, we create a set $\ttcset_i \subseteq \tcpartition(G)$ consisting of $\grpsize$ true twinclasses of size $\ncols$ each; there are no edges between the different twinclasses.  The partial solution induced by a solution $\coloring$ of \DTC on the twinclasses in $\ttcset_i$ will correspond to a truth assignment for the $i$-th variable group.

\subparagraph*{Central clique.} We create a clique $F = \{f_1, \ldots, f_\ncols\}$ on $\ncols$ vertices which also belongs to the central part. The vertices of $F$ will not be twins; each vertex of $F$ has its own twinclass.  This concludes the construction of the central part which will form the modulator $\tcm$.

Our construction will ensure that no vertex of $F$ can be deleted; we will prove so later. This allows us to use $F$ to simulate \LC constraints, i.e., we can forbid specific colors at a vertex. To do so, we normalize the considered solutions by assuming that $\coloring(f_s) = s$ for all $s \in [\ncols]$ and then no vertex $v$ adjacent to some $f_s$ can receive color $s$. We say that a solution $\coloring$ is \emph{normalized} if $F \cap \delset = \emptyset$ and $\coloring(f_s) = s$ for all $s \in [\ncols]$.

We mention here that in principle a twinclass-modulator could contain arbitrarily large twinclasses. In our case, every twinclass has size at most $\ncols = \Oh(1)$, so the number $|\bigcup(\tcm)|$ of vertices in $\tcm$ is linear in the size $|\tcm|$ of $\tcm$.

\subparagraph*{Budget.} The budget $\budget = \cost_\packing + \ngrps \ncols \grpsize / 2$ for the \DTC instance is split into two parts. The first part $\cost_\packing$ is due to a vertex-disjoint packing $\packing$ of subgraphs that will be described later and the second part $\ngrps \ncols \grpsize / 2$ is allocated to the vertices in the twinclass-modulator.

\subparagraph*{Partial solution structure.}
We want to ensure that under our budget restriction no vertex of $F$ can be deleted, as otherwise the simulation of \LC constraints will not work. This entails that all considered solutions should perform the same number of deletions. On the gadgets that will be attached to the central vertices, this does not pose a big issue. However, it does on the central vertices: to obtain the desired lower bound, for each true twinclass $U \in \ttcset$ and solution $\coloring$, it should in principle be possible for $\coloring(U)$ to attain every subset of $[\ncols]$, so that we have $2^\ncols$ states per twinclass. However, the number of deletions in $U$ depends on $|\coloring(U)|$. We can partition the possible states into different \emph{levels} depending on the number $\ell$ of deletions they incur in the true twinclass. A state can incur any number of deletions between $0$ and $\ncols$.

Using slightly larger groups like Cygan et al.~\cite{CyganDLMNOPSW16} allows us to overcome these issues. By using slightly more twinclasses per variable group $i \in [\ngrps]$ than necessitated by the base conversion, we can consider solutions with a special structure on the central vertices. It is designed in such a way that all solutions obeying this structure incur the same number of deletions on the central vertices while still allowing all states to appear the same number of times. Due to the slight increase in the number of twinclasses, we can still encode sufficiently many truth assignments into such structured solutions. 

For every $i \in [\ngrps]$, $U \in \ttcset_i$, $C \subseteq [\ncols]$, we fix some representative solution $\coloring_{i,U,C} \colon U \rightarrow [\ncols] \cup \{\del\}$ to \DTC on $U$ with $\coloring_{i,U,C}(U) \setminus \{\del\} = C$. The point of this is that we only distinguish between different solutions $\coloring$ based on the sets $\coloring(U)$, $U \in \ttcset_i$, and not based on the actual mappings $\coloring\big|_U$, $U \in \ttcset_i$. Since there are exactly $\binom{\ncols}{\ncols - \ell} = \binom{\ncols}{\ell}$ states that incur exactly $\ell$ deletions each and we want to allow for the possibility of all states appearing the same number of times, we arrive at the following definition for each group $i \in [\ngrps]$:
\begin{align*}
  \scolorings_i = \{\coloring & \colon \bigcup(\ttcset_i) \rightarrow [\ncols] \cup \{\del\} \sep \coloring \text{ is solution of \DTC on $\bigcup(\ttcset_i)$} \\
  & \left|\left\{U \in \ttcset_i \sep \coloring(U) \setminus \{\del\} \in \binom{[\ncols]}{\ell}\right\}\right| = \binom{\ncols}{\ell} \frac{\grpsize}{2^{\ncols}} \text{ for all } \ell = 0, \ldots, \ncols \text{ and } \\  
& \text{ with } \coloring\big|_U \equiv \coloring_{i,U,\coloring(U) \setminus \{\del\}} \text{ for all } U \in \ttcset_i \}.
\end{align*}
Informally, this means that we only consider partial solutions that pick a representative solution on each twinclass in $\ttcset_i$ and for every level $\ell \in \{0\} \cup [\ncols]$, i.e., a possible number of deletions, there is a fixed number $\binom{\ncols}{\ell} \frac{\grpsize}{2^\ncols}$ of twinclasses in $\ttcset_i$ where the partial solution attains a state of level $\ell$. \cref{thm:scoloring_deletions} gives the total number of deletions for a partial solution $\coloring_i \in \scolorings_i$ and \cref{thm:scolorings_bound} studies the size of $\scolorings_i$.

\begin{lem}
  \label{thm:scoloring_deletions}
  If $\coloring \in \scolorings_i$, then $|\delset| = \ncols \grpsize / 2$.
\end{lem}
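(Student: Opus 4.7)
The plan is a direct counting argument using the prescribed level distribution in the definition of $\scolorings_i$. The key observation is that the only deletions performed by $\coloring$ on $\bigcup(\ttcset_i)$ are inside the twinclasses, and the number of such deletions per twinclass $U$ is fully determined by $|\coloring(U) \setminus \{\del\}|$: if $\ell$ colors are used on $U$, then exactly $\ncols - \ell$ vertices of $U$ are deleted, regardless of which representative $\coloring_{i,U,C}$ is chosen. This is simply because $|U| = \ncols$ and $\coloring\big|_U \equiv \coloring_{i,U,\coloring(U)\setminus\{\del\}}$.

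Using the defining property of $\scolorings_i$, for each level $\ell \in \{0,\ldots,\ncols\}$ there are exactly $\binom{\ncols}{\ell}\frac{\grpsize}{2^\ncols}$ twinclasses $U \in \ttcset_i$ with $|\coloring(U) \setminus \{\del\}| = \ell$. Each such twinclass contributes $\ncols - \ell$ deletions. Summing over all levels gives
\begin{equation*}
|\delset| \;=\; \sum_{\ell=0}^{\ncols} (\ncols-\ell)\binom{\ncols}{\ell}\frac{\grpsize}{2^\ncols}.
\end{equation*}

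To finish, I invoke the elementary identity $\sum_{\ell=0}^{\ncols} \ell \binom{\ncols}{\ell} = \ncols \cdot 2^{\ncols-1}$ (which follows by differentiating the binomial theorem or by a bijection with pointed subsets). Combined with $\sum_{\ell=0}^{\ncols}\binom{\ncols}{\ell} = 2^\ncols$, the sum on the right equals $\ncols \cdot 2^\ncols - \ncols \cdot 2^{\ncols-1} = \ncols \cdot 2^{\ncols-1}$, and therefore
\begin{equation*}
|\delset| \;=\; \ncols \cdot 2^{\ncols-1} \cdot \frac{\grpsize}{2^\ncols} \;=\; \frac{\ncols\grpsize}{2}.
\end{equation*}

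There is no real obstacle: everything is fixed by the definition of $\scolorings_i$, which was engineered precisely so that the total deletion count is independent of the particular choice of $\coloring \in \scolorings_i$. This invariance is the whole point of introducing the level-balanced family, as it is what enables us to later pin the budget on the central vertices to a fixed value and protect the clique $F$ from being deleted.
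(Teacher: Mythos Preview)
Your proof is correct and follows essentially the same approach as the paper: both set up the sum $\sum_{\ell=0}^{\ncols}(\ncols-\ell)\binom{\ncols}{\ell}\frac{\grpsize}{2^{\ncols}}$ from the level distribution in the definition of $\scolorings_i$ and evaluate it to $\ncols\grpsize/2$. The only cosmetic difference is that the paper uses the absorption identity $(\ncols-\ell)\binom{\ncols}{\ell}=\ncols\binom{\ncols-1}{\ell}$ to collapse the sum, whereas you split it via $\sum_\ell \ell\binom{\ncols}{\ell}=\ncols 2^{\ncols-1}$; both are equivalent one-line computations.
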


\begin{proof}
  \begin{equation*}
    |\delset| = \sum_{\ell = 0}^\ncols (\ncols - \ell) \binom{r}{\ell} \frac{\grpsize}{2^\ncols} 
    = \frac{\grpsize}{2^\ncols} \ncols \sum_{\ell = 0}^{\ncols - 1} \binom{\ncols - 1}{\ell} 
    = \frac{\grpsize}{2^\ncols} \ncols 2^{\ncols - 1}
    = \ncols \grpsize / 2. \qedhere
  \end{equation*}
\end{proof}

\begin{lem}
 \label{thm:scolorings_bound}
  If $\grpsize \geq 2^{\ncols}$, then we have that $|\scolorings_i| \geq (2^{\ncols})^{\grpsize} \frac{(2^{\ncols} - 1)!}{2^{(2^\ncols)}} \grpsize^{-(2^\ncols)}$.
\end{lem}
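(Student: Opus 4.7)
The plan is to lower-bound $|\scolorings_i|$ by restricting attention to a natural ``fully balanced'' subset. Setting $N := 2^{\ncols}$, observe that a partial solution $\coloring \in \scolorings_i$ is completely determined by the choice, for each twinclass $U \in \ttcset_i$, of the subset $C_U := \coloring(U) \setminus \{\del\} \subseteq [\ncols]$, since the restriction $\coloring|_U \equiv \coloring_{i,U,C_U}$ is then fixed to the chosen representative. Hence $|\scolorings_i|$ counts assignments $U \mapsto C_U$ such that for every $\ell \in \{0\} \cup [\ncols]$ exactly $\binom{\ncols}{\ell}\,\grpsize/N$ twinclasses receive a set of size $\ell$. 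If we further restrict to assignments in which every subset $C \subseteq [\ncols]$ appears exactly $\grpsize/N$ times (an integer since $N \mid \grpsize$ by construction), the level constraints are automatically satisfied. The number of such fully-balanced assignments equals the multinomial $M := \binom{\grpsize}{\grpsize/N,\ldots,\grpsize/N}$ with $N$ equal parts, so $|\scolorings_i| \geq M$.

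Next I would lower-bound $M$ by a standard averaging argument. Among all nonnegative integer compositions $(a_1,\ldots,a_N)$ of $\grpsize$, the multinomial $\binom{\grpsize}{a_1,\ldots,a_N}$ is maximized at the most balanced composition, which is the all-equal one here; this follows from the swap inequality showing that replacing $(a_i,a_j)$ by $(a_i-1,a_j+1)$ whenever $a_i > a_j+1$ strictly increases the coefficient. Since the sum $\sum_{(a_1,\ldots,a_N)} \binom{\grpsize}{a_1,\ldots,a_N}$ equals $N^{\grpsize}$ and the number of compositions is $\binom{\grpsize+N-1}{N-1} \leq (\grpsize+N-1)^{N-1}/(N-1)!$, this yields
\[
M \;\geq\; \frac{N^{\grpsize}\,(N-1)!}{(\grpsize+N-1)^{N-1}}.
\]

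Finally, the hypothesis $\grpsize \geq N$ gives $\grpsize+N-1 \leq 2\grpsize$, so $M \geq N^{\grpsize}(N-1)!/(2^{N-1}\grpsize^{N-1})$, and a further trivial estimate $2\grpsize \geq 1$ bounds this below by $N^{\grpsize}(N-1)!/(2^{N}\grpsize^{N})$, which is exactly the stated bound $(2^{\ncols})^{\grpsize}\,(2^{\ncols}-1)!/2^{(2^{\ncols})} \cdot \grpsize^{-(2^{\ncols})}$. The approach leaves considerable slack: an honest Stirling analysis would give $M = \Theta(N^{\grpsize}\,\grpsize^{-(N-1)/2})$, much stronger than the target $\grpsize^{-N}$. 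So there is no real obstacle, only routine bookkeeping to verify that the fully-balanced assignments do land inside $\scolorings_i$ and that the averaging inequality is applied to the correct set of compositions.
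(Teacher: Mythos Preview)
Your proposal is correct and follows essentially the same route as the paper's proof: both lower-bound $|\scolorings_i|$ by the central multinomial coefficient $\binom{\grpsize}{\grpsize/N,\ldots,\grpsize/N}$ (counting fully balanced assignments of subsets to twinclasses), invoke its maximality among multinomial coefficients together with the multinomial theorem to get an averaging bound, and then estimate the number of weak compositions crudely using $\grpsize \geq 2^{\ncols}$. The only cosmetic difference is in how the binomial $\binom{\grpsize+N-1}{N-1}$ is upper-bounded, but both arrive at the same final inequality.
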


\begin{proof}
  Observe that 
	\begin{equation*}  
  \left|\left\{\coloring \in \scolorings_i \sep \left|\left\{U \in \ttcset_i \sep \coloring(U) \setminus \{\del\} = C\right\}\right| = \frac{\grpsize}{2^{\ncols}} \text{ for all } C \subseteq [\ncols]\right\}\right| = \binom{\grpsize}{\frac{\grpsize}{2^{\ncols}}, \ldots, \frac{\grpsize}{2^{\ncols}}} = x,
	\end{equation*}  
where $x$ is the central multinomial coefficient. The central multinomial coefficient $x$ is a maximum of the function $(c_1, \ldots, c_{2^{\ncols}}) \mapsto \binom{\grpsize}{c_1, \ldots, c_{2^{\ncols}}}$. Hence, $x$ is a maximum term in the sum of the multinomial theorem, i.e., $(2^\ncols)^\grpsize = \sum_{c_1 + \cdots + c_{2^\ncols} = \grpsize} \binom{\grpsize}{c_1, \ldots, c_\ncols}$. The number of terms in the sum of the multinomial theorem is the number of weak compositions of $\grpsize$ into $2^{\ncols}$ parts which is $\binom{\grpsize + 2^{\ncols} - 1}{\grpsize}$. Bounding this term from above using $\grpsize \geq 2^{\ncols}$, we obtain
  \begin{equation*}
    \binom{\grpsize + 2^{\ncols} - 1}{\grpsize} 
    = \frac{1}{(2^{\ncols} - 1)!} (\grpsize + 1) \cdots (\grpsize + 2^{\ncols} - 1) 
    \leq \frac{1}{(2^{\ncols} - 1)!} (2\grpsize)^{(2^{\ncols})} = \frac{2^{2^\ncols}}{(2^{\ncols} - 1)!} \grpsize^{(2^\ncols)}.
  \end{equation*}
By the multinomial theorem, we hence obtain that 
  \begin{equation*}
    |\scolorings_i| \geq x \geq (2^{\ncols})^{\grpsize} \frac{(2^{\ncols} - 1)!}{2^{(2^\ncols)}} \grpsize^{-(2^\ncols)}. \qedhere
  \end{equation*}
\end{proof}
 
Hence, if we choose $\grpsize$ as discussed previously, then we can pick for each group $i \in [\ngrps]$ an efficiently computable injective mapping $\embedding_i \colon \{0,1\}^{\vgrpsize} \rightarrow \scolorings_i$ that maps truth assignments of the $i$-th variable group to partial solutions on $\ttcset_i$ with the desired structure.

\begin{figure}
  \centering
  \tikzfig{pictures/dtc_structure_gadget_2colors}
  \caption{The construction of structure gadgets $L_{i,1,S}$ and $L_{i,2,S'}$ for the case of $\ncols = 2$, i.e., \OCT, and $\grpsize = 8$ and its connection to the central twinclasses. The large ellipses represent twinclasses and the arrows represent thick $1$-arrows or thick $2$-arrows. The depicted solution $\coloring_i$ belongs to $\scolorings_i$ and propagates a deletion to every attached structure gadget. Note that $\coloring_i$ does not use each color set the same number of times.}
  \label{fig:dtc_structure_gadget}
\end{figure}

\subparagraph*{Structure gadgets.} We proceed by constructing the \emph{structure gadgets} to enforce that the partial solution on $\ttcset_i$ belongs to $\scolorings_i$. For every group $i \in [\ngrps]$, number of deletions $\ell \in [\ncols]$, set of twinclasses $\setfam \subseteq \ttcset_i$ with $|\setfam| = \binom{\ncols}{\geq (\ncols - \ell + 1)}\frac{\grpsize}{2^{\ncols}} + 1$, we add an $(\ncols + 1)$-critical graph $L_{i, \ell, \setfam} \in \crit^\ncols$ consisting of at least $|\setfam|$ vertices. For every $U \in \setfam$, we pick a private vertex $v$ in $L_{i, \ell, \setfam}$ and add the thick $\ell$-arrow $\arrow_\ell(U,v)$. We create $1 + \ngrps \ncols \grpsize / 2 = 1 + (\budget - \cost_\packing)$ copies of $L_{i, \ell, \setfam}$ and the incident thick arrows. This concludes the construction of the structure gadget, cf.\ \cref{fig:dtc_structure_gadget}.

Due to our budget constraint, no solution will be able to perform too many deletions in the central vertices. The idea of the structure gadget is that if a solution $\coloring$ performs too few deletions in the twinclasses of $\ttcset_i$, then we can find an $\ell$ and an $\setfam \subseteq \ttcset_i$ with $|\setfam| = \binom{\ncols}{\geq (\ncols - \ell + 1)}\frac{\grpsize}{2^{\ncols}} + 1$ such that $\coloring$ deletes less than $\ell$ vertices in each $U \in \setfam$. Hence, all thick $\ell$-arrows leading to $L_{i, \ell, \setfam}$ and its copies have to be passive. To resolve all these $(\ncols + 1)$-critical graphs, we have to spend one extra unit of budget per copy which is not accounted for by the packing. Due to the large number of copies, this implies that we must violate our budget constraint.

\subparagraph{Color-set-gadgets.} The next step is to construct gadgets that can distinguish between the different partial solutions in $\scolorings_i$. To do so, these gadgets must be able to react not only based on the number of deletions inside a twinclass, but also based on the set of colors used for a twinclass. In \cref{thm:decode_behavior}, we construct \emph{color-set-gadgets} $\decode_C(U,v)$ using \LC constraints simulated by the central clique $F$. Given a solution $\coloring$ and a set of colors $C \subsetneq [\ncols]$, the color-set-gadget $\decode_C(U,v)$ propagates a deletion to the vertex $v$ when the set of colors $\coloring(U) \setminus \{\del\}$ used on a set of true twins $U$ is a subset of $C$, i.e., $\coloring(U) \setminus \{\del\} \subseteq C$. We say that $\decode_C(U,v)$ is \emph{active} if a deletion is propagated to $v$ and \emph{passive} otherwise.

\newcommand{\ovc}{{\overline{c}}}

\begin{figure}
 \centering
 \tikzfig{pictures/dtc_decode}
 \caption{The construction of a color-set-gadget $\decode_C(U,v)$ with $[\ncols] \setminus C = \{\ovc_1, \ldots, \ovc_\ell\}$. The double-lined edges represent deletion edges. An edge attached to a circle or ellipse represents a \emph{join}, i.e., a bundle of edges. The sets depict the allowed states at a vertex.}
 \label{fig:dtc_color_set}
\end{figure}

\begin{lem}
  \label{thm:decode_behavior}
  Let $U$ be a set consisting of at most $\ncols$ true twins and $v$ be a vertex that is not adjacent to $U$ and let $C \subsetneq [\ncols]$ with $|C| \leq |U|$. There is a graph $\decode = \decode_C(U,v)$ with the following properties:
  \begin{itemize}
   \item Any solution $\coloring$, also unnormalized ones, satisfies $|V(\decode - U) \cap \delset| \geq (\ncols - |C|) + 1$.
   \item If $\coloring$ is a normalized solution with $|V(\decode - U) \cap \delset| = (\ncols - |C|) + 1$, then $\coloring(v) = \del$ implies that $\coloring(U) \setminus \{\del\} \subseteq C$.
   \item Any normalized solution $\tilde{\coloring}$ for $\tilde{G} = G[(V \setminus V(\decode)) \cup U \cup \{v\}]$ with $\tilde{\coloring}(v) \neq \del$ or $\tilde{\coloring}(U) \setminus \{\del\} \subseteq C$ can be extended to a normalized solution $\coloring$ for $G$ with $|V(\decode - U) \cap \delset| = (\ncols - |C|) + 1$.
   \item $\tw(\decode - U) \leq \ncols$ via a winning cops-and-robber-strategy starting with a cop on $v$.
  \end{itemize}
\end{lem}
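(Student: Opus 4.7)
The plan is to construct $\decode_C(U,v)$ as a list-coloring refinement of the thick $(\ell+1)$-arrow, where $\ell = \ncols - |C|$. Writing $L = [\ncols] \setminus C = \{\ovc_1, \ldots, \ovc_\ell\}$, I would introduce ``indicator'' vertices $w_1, \ldots, w_\ell$ whose list is forced to $\{\ovc_i, \del\}$ by joining them with $F \setminus \{f_{\ovc_i}\}$, and then join $W = \{w_1, \ldots, w_\ell\}$ with $U$ so that $\ovc_i \in \coloring(U)$ compels the deletion of $w_i$. Pallet vertices $y_1, \ldots, y_{|C|}$ with lists $\{c_j, \del\}$ complete the main clique $W \cup Y \cup \{v\}$ of size $\ncols + 1$, contributing one mandatory deletion. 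The remaining $\ell$ mandatory deletions come from an independent set $I = \{x_1, \ldots, x_\ell\}$ connected to $W$ (and, when $|C| \geq 1$, also to $y_1$) through deletion edges, chosen so that any surviving $x_j$ would force the deletion of all of $W$ together with $y_1$. The degenerate case $|C| = 0$ is handled separately by combining $\arrow_\ncols(U,v)$ with an auxiliary $(\ncols+1)$-critical graph from $\crit^\ncols$ that contributes the additional forced deletion.

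For Property~1 I would analyze whether every vertex of $I$ is deleted: if yes, we have $\ell$ deletions in $I$ and at least one in the main clique; if some $x_j$ survives, the $\ell + 1$ deletion edges incident to $x_j$ force $d_W + d_Y \geq \ell + 1$. For Property~2, tight cost with $v$ deleted rules out the survivor case, so $d_I = \ell$ and $d_W = d_Y = 0$; then every $w_i$ takes its forced color $\ovc_i$, and the join with $U$ forbids $\ovc_i \in \coloring(U)$, yielding $\coloring(U) \subseteq C$. For Property~3, if $\tilde{\coloring}(v) = \del$ and $\tilde{\coloring}(U) \subseteq C$ I extend by deleting $v$ and all of $I$ and setting $w_i \mapsto \ovc_i$, $y_j \mapsto c_j$; if $\tilde{\coloring}(v) \neq \del$ I delete the unique member of $W \cup Y$ whose forced color equals $\tilde{\coloring}(v)$ along with all of $I$, taking care to absorb any additional $w_i$-deletions forced by $\tilde{\coloring}(U) \cap L$ so that the total stays at exactly $\ell + 1$. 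For Property~4, a cops-and-robber strategy with $\ncols + 1$ cops keeps the initial cop on $v$ and places cops on all of $W \cup Y$, thereby trapping the robber in a component consisting of a single $x_j$ together with its incident deletion-edge internals grouped into $(\ell + 1)$ cliques sharing $x_j$; the robber is then captured by withdrawing most cops on $W \cup Y$ and reinserting them onto the $\ncols - 1$ internal vertices of the correct deletion-edge clique, never exceeding $\ncols + 1$ cops simultaneously.

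The hardest part will be verifying Property~3 uniformly when $\tilde{\coloring}(v) \neq \del$ and $\tilde{\coloring}(U)$ contains several colors from $L$: these force multiple indicator-vertex deletions beyond the one matching $\tilde{\coloring}(v)$, and the deletion-edge pattern must be laid out so that every such extra $w_i$-deletion simultaneously resolves the deletion edges incident to some $x_i$, letting that $x_i$ stay alive and replace one scheduled $I$-deletion; otherwise the tight count of $\ell + 1$ would be violated. Coupling this accounting with the lower bound from Property~1 may well require refining the first naive deletion-edge layout, for instance by pairing $(w_i, x_i)$ one-to-one while adding backup edges from $y_1$ (or more generally a $Y$-to-$I$ apparatus for $|C| \geq 2$) so that both the lower bound and the tight extension coexist; checking the treewidth bound after such a refinement is routine but must be redone to confirm that the cops-and-robber strategy still succeeds.
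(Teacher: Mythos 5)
Your construction is genuinely different from the paper's, and the tension you flag under ``the hardest part'' is a real gap rather than a matter of careful bookkeeping. The paper never puts the $U$-indicators into a common clique with $v$. Instead, for each $\ovc_i \in [\ncols]\setminus C$ it introduces a \emph{pair} $(w_{2i-1},w_{2i})$ joined by a deletion edge, where $w_{2i-1}$ has list $\{\ovc_i,\del\}$ and is joined to $U$ while $w_{2i}$ has list $\{1,\del\}$; all $w_{2i}$ are then adjacent to a single hub $w_{2\ell+1}$ (also with list $\{1,\del\}$) which carries a deletion edge to $v$. The key invariant is that each pair spends exactly one deletion \emph{regardless} of whether $\ovc_i\in\coloring(U)$ --- the indicator merely decides which partner is deleted --- and only that choice cascades (via the color-$1$ clash at the hub) into whether the $(w_{2\ell+1},v)$ deletion edge can afford to spend its budget on $v$. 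This uniform one-deletion-per-pair accounting, plus one on the hub's edge, gives exactly $\ell+1$ in every case, which is what makes Properties 1, 2 and 3 simultaneously achievable.

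Your flat design --- a $K_{\ncols+1}$ on $W\cup Y\cup\{v\}$ plus an independent set $I$ of $\ell$ vertices absorbing the remaining budget via deletion edges --- cannot be rescued by tuning the $I$-to-$(W\cup Y)$ incidences. With each $x_j$ having deletion edges to all of $W\cup\{y_1\}$, Properties 1 and 2 hold but Property 3 fails already at $|C|=2$: take $\ell=2$, $\tilde{\coloring}(v)=c_2$ and $\ovc_1\in\tilde{\coloring}(U)$; then $y_2$ (clique conflict with $v$) and $w_1$ ($U$-join) are both forced deleted, and the four remaining deletion edges $(x_1,w_2),(x_1,y_1),(x_2,w_2),(x_2,y_1)$ live on a disjoint vertex set, so no single further deletion resolves them and every extension costs at least $\ell+2$. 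Conversely, the one-to-one pairing $(x_i,w_i)$ fixes that case but breaks Property 1: for $\ell=2$, deleting $\{x_1,w_2\}$ already resolves all of $I$'s deletion edges and the main clique with only $2<\ell+1$ deletions. A single ``$y_1$-backup'' does not help because the $Y$-deletion forced by $v$'s color may land on $y_j$ with $j\ge 2$, which has no useful incidence to $I$. This is exactly the hole your last paragraph circles around; what the paper's two-step propagation buys is that the one deletion edge touching $v$ is isolated at the hub and never interferes with the per-$\ovc_i$ accounting. Incidentally, $|C|=0$ needs no special case in the paper's gadget: it is just $\ell=\ncols$ with the same uniform construction, so no auxiliary critical graph is needed.
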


\begin{proof}
  Let $[\ncols] \setminus C = \{\ovc_1, \ldots, \ovc_\ell\}$. We add $2\ell + 1$ vertices $w_1, \ldots, w_{2\ell + 1}$ and add all edges between $w_{2i - 1}$, $i \in [\ell]$, and $U$, all edges between $w_{2i}$, $i \in [\ell]$, and $w_{2\ell + 1}$, a deletion edge between $w_{2\ell + 1}$ and $v$ and a deletion edge between each pair $w_{2i - 1}$ and $w_{2i}$ for $i \in [\ell]$. We also add all edges between $\{w_{2i} \sep i \in [\ell]\} \cup \{w_{2\ell + 1}\}$ and $F \setminus \{f_1\}$. Finally, for each $i \in [\ell]$, vertex $w_{2i - 1}$ is adjacent to all vertices in $F \setminus \{f_{\ovc_i}\}$. Hence, considering a normalized solution, the vertices in $\{w_{2i} \sep i \in [\ell]\} \cup \{w_{2\ell + 1}\}$ may only receive color $1$ or be deleted and vertex $w_{2i - 1}$, $i \in [\ell]$, may only receive color $\ovc_i$ or be deleted. See \cref{fig:dtc_color_set} for a depiction of the construction.
  
  The lower bound on the number of deletions for possibly unnormalized solutions follows by noticing that $\decode - U$ contains $\ell + 1 = (\ncols - |C|) + 1$ disjoint deletion edges.
  
  Suppose that $\coloring$ is a normalized solution with $|V(\decode - U) \cap \delset| = \ell + 1$ and $\coloring(U) \setminus \{\del\} \not\subseteq C$, then there exists some $\ovc_i \in ([\ncols] \setminus C) \cap \coloring(U)$. Due to the constraints enforced by $F$, we must have that $\coloring(w_{2i - 1}) = \del$ and $\coloring(w_{2i}) = 1$ by the bound on the number of deletions. This in turn forces $\coloring(w_{2\ell + 1}) = \del$ and hence $v$ cannot be deleted without violating the deletion bound.
  
  Suppose that $\tilde{\coloring}$ is a normalized solution of $\tilde{G}$ with $\tilde{\coloring}(v) \neq \del$, then we can extend to a normalized solution $\coloring$ of $G$ by deleting all vertices in $\{w_{2i - 1} \sep i \in [\ell]\} \cup \{w_{2\ell + 1}\}$. It is easy to see that the remainder can be colored correctly. 
  
  Suppose that $\tilde{\coloring}$ is a normalized solution of $\tilde{G}$ with $\tilde{\coloring}(U) \setminus \{\del\} \subseteq C$ and $\tilde{\coloring}(v) = \del$. Since $([\ncols] \setminus C) \cap \tilde{\coloring}(U) = \emptyset$, we can extend $\tilde{\coloring}$ to a normalized solution $\coloring$ of $G$ by setting $\coloring(w_{2i - 1}) = \ovc_i$, $\coloring(w_{2i}) = \del$ for all $i \in [\ell]$, and $\coloring(w_{2\ell + 1}) = 1$. It is easy to see that the remainder can be colored correctly.
  
  We argue that $\tw(\decode - U) \leq \ncols$ using the omniscient cops-and-robber-game. We begin by placing cops on $v$ and $w_{2\ell + 1}$. This splits the graph into $\ell + 1$ connected components, one per deletion edge. We place two cops on the endpoints of the deletion edge the robber escaped to. If $v$ or respectively $w_{2\ell + 1}$ is not an endpoint of the considered deletion edge, then we remove the cop from $v$ or respectively $w_{2\ell + 1}$. Finally, we place $\ncols - 1$ cops on the vertices inside the deletion edge and capture the robber. This proves the treewidth bound by \cref{thm:cops_robber}.
\end{proof}

In the case of $C = \emptyset$ one could also use a thick $\ncols$-arrow from \cref{thm:arrow_behavior} instead. We will only invoke \cref{thm:decode_behavior} with $|U| = \ncols$ in the dense setting of the \DTC lower bound; we use the case $|U| = 1$ for the sparse version of the lower bound. 

\begin{figure}[h]
 \centering
 \tikzfig{pictures/dtc_decode_detailed}
 \caption{The decoding gadget $Y^j_{i, \coloring_i}$ and its connections to the central twinclasses and the clause gadget for the case $\ncols = 3$ and $\grpsize = 8$. The large circles represent twinclasses and the arrows represent color-set-gadgets or thin arrows. The solution chosen on the central twinclasses is $\coloring_i$. Note that it is allowed that $\coloring_i$ uses the same color set on several twinclasses in $\ttcset_i$.}
 \label{fig:dtc_decode}
\end{figure}

\subparagraph*{Decoding gadgets.} The color-set-gadgets allow us to construct \emph{decoding gadgets} that can distinguish between the different partial solutions in $\scolorings_i$. While the color-set-gadgets only check for inclusion and not equality, this is nonetheless sufficient to distinguish solutions in $\scolorings_i$ due to their structure.

We begin with the construction now. For the $j$-th clause, group $i \in [\ngrps]$, solution $\coloring_i \in \scolorings_i$, we construct a gadget $Y^j_{i, \coloring_i}$ as follows. The gadget $Y^j_{i, \coloring_i}$ consists of a large independent set joined to a $K_\ncols$, i.e., adding all edges between both sets. In other words, $Y^j_{i, \coloring_i}$ is a complete $(\ncols + 1)$-partite graph with one large independent set and all other independent sets in the partition are singletons. More precisely, the large independent set consists of $(1 - 2^{-\ncols})\grpsize + 1$ vertices (recall that $2^\ncols$ divides $\grpsize$). One of these vertices is distinguished and denoted by $\hat{y}^j_{i, \coloring_i}$. This concludes the construction, see \cref{fig:dtc_decode} for a depiction of the construction.

In particular, the $K_\ncols$ and $\hat{y}^j_{i, \coloring_i}$ induce a complete graph of size $\ncols + 1$. Hence, any solution must delete at least one vertex in this complete graph and due to the budget constraint exactly one vertex has to be deleted. The distinguished vertex $\hat{y}^j_{i, \coloring_i}$ can only be deleted if also all other vertices in the large independent set are deleted. By appropriately adding color-set-gadgets, we will ensure that this can only be achieved if $\coloring_i$ is chosen on $\ttcset_i$.

For the $j$-th clause, group $i \in [\ngrps]$, solution $\coloring_i \in \scolorings_i$, twinclass $U \in \ttcset_i$ with $\coloring_i(U) \neq [\ncols]$, we pick a private vertex $v \neq \hat{y}^j_{i, \coloring_i}$ in the large independent set of $Y^j_{i, \coloring_i}$ and add the color-set-gadget $\decode_{\coloring_i(U) \setminus \{\del\}}(U,v)$ and denote this instance of the color-set-gadget by $W^j_{i, \coloring_i, U}$. We will see, using the properties of solutions in $\scolorings_i$, that if the solution on $\ttcset_i$ diverges from $\coloring_i$, then at least one $W^j_{i,\coloring_i,v}$ will be passive. Otherwise, all $W^j_{i,\coloring_i,v}$ will be active, allowing us to delete the vertex $\hat{y}^j_{i, \coloring_i}$ in $Y^j_{i,\coloring_i}$.

\subparagraph{Clause gadgets.} For the $j$-th clause, we add an $(\ncols + 1)$-critical graph, denoted $Z^j \in \crit^\ncols$, consisting of at least $\clss 2^{\vgrpsize}$ vertices. For every group $i \in [\ngrps]$ and solution $\coloring_i \in \scolorings_i$ such that $\embedding_i^{-1}(\coloring_i)$ is a partial truth assignment satisfying the $j$-th clause, we pick a private vertex $v$ in $Z^j$ and add a thin arrow from $\hat{y}^j_{i, \coloring_i}$ to $v$. Since every clause consists of at most $\clss$ literals, at most $\clss$ groups can contain a literal of the $j$-th clause and every such group has at most $2^{\vgrpsize}$ assignments that satisfy the $j$-th clause, hence $Z^j$ contains enough vertices so that we can always pick a private one. This concludes the construction of the clause gadget.

The idea of the clause gadget is that we can only afford to delete a vertex in $Z^j$ if we have picked a partial solution on some $\ttcset_i$ that corresponds to a satisfying truth assignment of the $j$-th clause, which will propagate a deletion to $Z^j$ via the decoding gadgets.

\begin{figure}[h]
  \centering
  \tikzfig{pictures/dtc_overview}
  \caption{An overview of the construction of the graph $G(\formula, \ncols, \vgrpsize)$ for the case $\ncols = 2$.}
  \label{fig:dtc_overview_dense}
\end{figure}

\subparagraph*{Packing.} We will now construct a packing $\packing$ of vertex-disjoint graphs that will fully explain the budget outside of the central vertices. For every thin arrow from $u$ to $v$ in the construction, we add the $K_{\ncols + 1}$ induced by the deletion edge incident to $v$ to the packing $\packing$. For every thick $\ell$-arrow $\arrow_\ell(U,v)$ in the construction, we add $\arrow_\ell(U,v) - U$ to the packing $\packing$ and by \cref{thm:arrow_behavior} these graphs require at least $\ell$ deletions each. For every color-set-gadget $\decode_{C}(U,v)$ in the construction, we add $\decode_{C}(U,v) - U$ to the packing $\packing$ and by \cref{thm:decode_behavior} these graphs require at least $(\ncols - |C|) + 1$ deletions each. Finally, for every $Y^j_{i, \coloring_i}$, we add the $K_{\ncols + 1}$ induced by $\hat{y}^j_{i, \coloring_i}$ and the $K_\ncols$ of $Y^j_{i, \coloring_i}$ to the packing $\packing$. Let $\cost_\packing$ denote the cost of the packing $\packing$.

Observe that no vertex will be the head of several arrows or color-set-gadgets in our construction, hence the graphs in $\packing$ are indeed vertex-disjoint. Furthermore, no graph in $\packing$ intersects the central vertices and the cost of $\packing$ is independent of the partial solution chosen on the central vertices. Finally, notice that the cost of $\decode_{C}(U,v) - U$ is fully explained by the $(\ncols - |C|) + 1$ disjoint deletion edges and hence does not rely on any \LC constraints simulated by the central clique $F$.

This concludes the construction of $G = G(\formula, \ncols, \vgrpsize)$, cf.\ \cref{fig:dtc_overview_dense}. We proceed by showing the correctness of the reduction. From now on, we might omit the range of the indices for the sake of readability, but we keep the meaning of the indices consistent throughout. By $i$ we denote a group of variables, $\coloring_i$ denotes a partial solution in $\scolorings_i$, and $U$ denotes a true twinclass in $\ttcset_i$ with $\coloring_i(U) \neq [\ncols]$.

\begin{mythm}
 \label{thm:sat_to_dtc}
 Let $\formula$ be a $\clss$-\SAT instance with $\nvars$ variables and $\nclss$ clauses. Let $G = G(\formula, \ncols, \vgrpsize)$ and $\packing$ be the graph and packing as constructed above and let $\budget = \cost_\packing + \ngrps \ncols \grpsize / 2$. If $\formula$ has a satisfying assignment $\tassign$, then there is a solution $\coloring$ of the \DTC instance $(G, \budget)$.
\end{mythm}

\begin{proof}
  We start with the construction of $\coloring$ on the central vertices. We color the central clique $F$ with $\coloring(f_s) = s$ for $s \in [\ncols]$, i.e., $\coloring$ will be a normalized solution. For each group $i$ of variables, let $\tassign_i \in \{0,1\}^\vgrpsize$ be the partial truth assignment on group $i$ induced by $\tassign$. For each group $i$, we set $\coloring\big|_{\bigcup(\ttcset_i)} = \coloring_i := \embedding_i(\tassign_i) \in \scolorings_i$. By \cref{thm:scoloring_deletions}, this results in exactly $\ngrps \ncols \grpsize / 2$ deletions. Hence, only the budget $\cost_\packing$ for the packing $\packing$ remains, so for every graph in $\packing$ we have to match the lower bound on the number of deletions for this graph. 
  
  Whenever we delete the tail $u$ or head $v$ of a thin arrow from $u$ to $v$, then we use the active solution on this arrow; otherwise, we use the passive solution. Either type of solution results in exactly one deletion in the deletion edge incident to $v$ which belongs to $\packing$. 
  
  Similarly, for any thick $\ell$-arrow $\arrow_\ell(U,v)$, we extend to the solution with $|V(\arrow_\ell(U,v) - U) \cap \delset| = \ell$ as given by \cref{thm:arrow_behavior}. If $|U \cap \delset| \geq \ell$, then this solution is active and satisfies $\coloring(v) = \del$. 
  
  For every color-set-gadget $\decode_C(U,v)$, we extend to the solution with $|V(\decode_C(U,v) - U) \cap \delset| = (\ncols - |C|) + 1$ as given by \cref{thm:decode_behavior}. If $\coloring(U) \setminus \{\del\} \subseteq C$, then this solution is active and satisfies $\coloring(v) = \del$. 
  
  For every decoding gadget $Y^j_{i, \coloring_i}$, we delete the distinguished vertex $\hat{y}^j_{i,\coloring_i}$. For every $j$-th clause, group $i$, and solution $\psi_i \in \scolorings_i \setminus \{\coloring_i\}$, we pick one of the vertices in the $K_\ncols$ of $Y^j_{i, \psi_i}$ and delete it. This deletes exactly one vertex in the $K_{\ncols + 1}$ formed by $\hat{y}^j_{i, \coloring_i}$ or $\hat{y}^j_{i, \psi_i}$ and the $K_\ncols$. 
  
  This concludes the description of the deletions; for each graph in the packing $\packing$, we match the lower bound on the number of deletions for this graph. It remains to show that the remainder of $G$ can be properly $\ncols$-colored. This can be easily seen for the thin arrows. For thick $\ell$-arrows $\arrow_\ell(U,v)$ it follows from \cref{thm:arrow_behavior}. For the color-set-gadgets $W^j_{i, \psi_i, U}$, where $\psi_i \in \scolorings_i$, it follows from \cref{thm:decode_behavior}. It remains to handle the structure gadgets $L_{i,\ell,S}$, the decoding gadgets $Y^j_{i, \psi_i}$, $\psi_i \in \scolorings_i$, and the clause gadgets $Z^j$. 
  
  By \cref{thm:arrow_behavior} and \cref{thm:decode_behavior}, we can appropriately extend solutions to $\arrow_\ell(U,v)$ and $\decode_C(U,v)$ for every color choice on the vertex $v$. Hence, when handling the aforementioned gadgets, it does not matter how we color the undeleted vertices. It is sufficient to verify that enough deletions occur and that the remainder of the gadget can be $\ncols$-colored. 
  
  Consider some group $i$, some $\ell \in [\ncols]$, some $\setfam \subseteq \ttcset_i$ with $|\setfam| = \binom{\ncols}{\geq (\ncols - \ell + 1)}\frac{\grpsize}{2^{\ncols}} + 1$, and some copy of $L_{i, \ell, \setfam}$. Since $L_{i, \ell, \setfam}$ is an $(\ncols + 1)$-critical graph that does not belong to $\packing$, we must perform at least one deletion in $L_{i, \ell, \setfam}$ and this deletion has to be propagated to $L_{i, \ell, \setfam}$ by some thick $\ell$-arrow $\arrow_\ell(U,v)$, $U \in \ttcset_i$. Since $\coloring_i \in \scolorings_i$, we have that $\left|\left\{U \in \ttcset_i \sep \coloring_i(U) \setminus \{\del\} \in \binom{[\ncols]}{\leq (\ncols - \ell)}\right\}\right| = \binom{\ncols}{\leq (\ncols - \ell)} \frac{\grpsize}{2^\ncols}$, i.e., there are exactly $\binom{\ncols}{\leq (\ncols - \ell)} \frac{\grpsize}{2^\ncols}$ twinclasses in $\ttcset_i$ where $\coloring_i$ deletes at least $\ell$ vertices in each. Due to $\binom{\ncols}{\leq (\ncols - \ell)} \frac{\grpsize}{2^\ncols} + |\setfam| = \grpsize + 1 > \grpsize = |\ttcset_i|$, there must be at least one twinclass $U \in \setfam$ with $\coloring_i(U) \setminus \{\del\} \in \binom{[\ncols]}{\leq (\ncols - \ell)}$, i.e., where $\coloring_i$ deletes at least $\ell$ vertices. Therefore, the thick $\ell$-arrow $\arrow_\ell(U,v)$ leading to this copy of $L_{i,\ell,\setfam}$ propagates a deletion to $L_{i,\ell,\setfam}$ by \cref{thm:arrow_behavior} and thereby resolves this $(\ncols + 1)$-critical graph. 
  
  Consider some $Y^j_{i, \psi_i}$, where $\psi_i \in \scolorings_i$, if $\psi_i \neq \coloring_i$, then one of the vertices in the $K_\ncols$ is deleted and this $(\ncols + 1)$-partite graph is resolved. If $\psi_i = \coloring_i$, then we claim that the large independent set is fully deleted. In the construction of $\coloring$, we distributed a deletion to $\hat{y}^j_{i, \coloring_i}$. All other vertices $v$ of the large independent set are hit by some $W^j_{i, \coloring_i, U}$, where $\coloring_i(U) \setminus \{\del\} \neq [\ncols]$. By \cref{thm:decode_behavior}, we see that $W^j_{i, \coloring_i, U}  = \decode_{\coloring_i(U) \setminus \{\del\}}(U,v)$ propagates a deletion to $v$ due to \cref{thm:decode_behavior}. Due to $\coloring_i \in \scolorings_i$, there are $\binom{\ncols}{\leq(\ncols - 1)} \frac{\grpsize}{2^\ncols} = (1 - 2^{-\ncols})\grpsize$ twinclasses $U \in \ttcset_i$ with $\coloring_i(U) \setminus \{\del\} \neq [\ncols]$, thus matching the size of the large independent set of $Y^j_{i, \coloring_i}$ with the exception of $\hat{y}^j_{i, \coloring_i}$. This shows that the large independent set is fully deleted and the $(\ncols + 1)$-partite graph $Y^j_{i, \coloring_i}$ is resolved.
  
  Finally, consider some $Z^j$, then there is some group $i$ such that $\tassign_i$ satisfies the $j$-th clause because $\tassign$ is a satisfying assignment of $\formula$. By construction $\coloring_i = \embedding_i(\tassign_i)$ and the vertex $\hat{y}^j_{i, \coloring_i}$ is deleted, so the thin arrow from $\hat{y}^j_{i, \coloring_i}$ to $Z^j$ is active and propagates a deletion to $Z^j$. Therefore, the $(\ncols + 1)$-critical graph $Z^j$ is resolved as well.
\end{proof}

\begin{mythm}
  \label{thm:enforced_structure}
  Let $\formula$ be a $\clss$-\SAT instance with $\nvars$ variables and $\nclss$ clauses. Let $G = G(\formula, \ncols, \vgrpsize)$ be the graph as constructed above and let $\budget = \cost_\packing + \ngrps \ncols \grpsize / 2$. If $\coloring$ is a solution of the \DTC instance $(G,\budget)$, then $|\delset| = \budget$. Furthermore, there is a normalized solution $\psi$ with $|\psi^{-1}(\del)| = |\delset| = \budget$ and $\psi\big|_{\bigcup(\ttcset_i)} \in \scolorings_i$ for all $i \in [\ngrps]$.
\end{mythm}

\begin{proof}
  As argued in the construction of $\packing$, the packing $\packing$ forces $\coloring$ to spend at least $\cost_\packing$ units of budget outside of the central vertices. We claim that for any group $i \in [\ngrps]$ and number of deletions $\ell \in [\ncols]$, there must be at least $\binom{\ncols}{\leq (\ncols - \ell)} \frac{\grpsize}{2^\ncols}$ twinclasses $U \in \ttcset_i$ such that $|U \cap \delset| \geq \ell$, i.e., $\coloring$ deletes at least $\ell$ vertices in each of those $U$. We first define $\setfam_i^\ell(\coloring) = \{U \in \ttcset_i \sep |U \cap \delset| = \ell\}$ and $\setfam_i^{\geq \ell}(\coloring) = \setfam_i^\ell(\coloring) \cup \setfam_i^{\ell + 1}(\coloring) \cup \cdots \cup \setfam_i^\ncols(\coloring)$ for all $i \in [\ngrps]$ and $\ell \in \{0\} \cup [\ncols]$.
  
  Suppose that there is some group $i$ and some number of deletions $\ell \in [\ncols]$ with $|\setfam_i^{\geq \ell}(\coloring)| < \binom{\ncols}{\leq (\ncols - \ell)} \frac{\grpsize}{2^\ncols}$. Then there exists a family $\setfam \subseteq \ttcset_i$ of twinclasses with $|\setfam| = \binom{\ncols}{\geq (\ncols - \ell + 1)} \frac{\grpsize}{2^\ncols} + 1$ and $\setfam_i^{\geq \ell}(\coloring) \cap \setfam = \emptyset$. Consider any copy of $L_{i, \ell, \setfam}$ and the thick $\ell$-arrows $\arrow_\ell(U,v)$ from $\setfam \subseteq \ttcset_i$ to $L_{i, \ell, \setfam}$. Due to $\setfam_i^{\geq \ell}(\coloring) \cap \setfam = \emptyset$, the solution $\coloring$ deletes at most $\ell - 1$ vertices in each $U \in \setfam$, and by \cref{thm:arrow_behavior} this implies that no vertex in the copy of $L_{i, \ell, \setfam}$ is deleted, unless we pay for at least one extra deletion not accounted for by the packing $\packing$ per copy of $L_{i, \ell, \setfam}$. However, since $L_{i, \ell, \setfam}$ is an $(\ncols + 1)$-critical graph, we must delete at least one vertex in each copy. This would exceed our remaining budget, because there are $1 + \ngrps \ncols \grpsize / 2 > \budget - \cost_\packing$ copies of $L_{i, \ell, \setfam}$. Hence, we must have that $|\setfam_i^{\geq \ell}(\coloring)| \geq \binom{\ncols}{\leq(\ncols - \ell)} \frac{\grpsize}{2^\ncols}$ for all groups $i \in [\ngrps]$ and deletions $\ell \in [\ncols]$. 
  
  We will now argue that we must have $|\setfam_i^\ell(\coloring)| = \binom{\ncols}{\ncols - \ell} \frac{\grpsize}{2^\ncols}$ for all $i \in [\ngrps]$ and $\ell \in [\ncols]$ due to our budget constraint. Consider some group $i$, using the previous inequalities we can derive the following bound on the total number of deletions performed on $\ttcset_i$ by $\coloring$:
  \begin{align*}
    \sum_{\ell = 1}^\ncols \ell |\setfam_i^\ell(\coloring)| & = \ncols |\setfam_i^{\ncols}(\coloring)| + \sum_{\ell = 1}^{\ncols - 1} \ell (|\setfam_i^{\geq \ell}(\coloring)| - |\setfam_i^{\geq (\ell + 1)}(\coloring)|) = \sum_{\ell = 1}^{\ncols} |\setfam_i^{\geq \ell}(\coloring)| \\
    & \geq \sum_{\ell = 1}^\ncols \binom{\ncols}{\leq(\ncols - \ell)} \frac{\grpsize}{2^\ncols} = \sum_{\ell = 0}^{\ncols - 1} (\ncols - \ell) \binom{\ncols}{\ell} \frac{\grpsize}{2^\ncols} = \ncols \grpsize / 2, 
  \end{align*}
  where the last inequality follows from the computation in the proof of \cref{thm:scoloring_deletions}. Summing over the lower bound for all groups, we see that this uses up the whole remaining budget and hence the inequality must be tight for every group. Therefore, also $|\setfam_i^{\geq \ell}(\coloring)| = \binom{\ncols}{\leq(\ncols - \ell)} \frac{\grpsize}{2^\ncols}$ and $|\setfam_i^{\ell}(\coloring)| = |\setfam_i^{\geq \ell}(\coloring)| - |\setfam_i^{\geq (\ell + 1)}(\coloring)| = \binom{\ncols}{\ncols - \ell} \frac{\grpsize}{2^\ncols}$ for all $i \in [\ngrps]$ and $\ell \in [\ncols]$. Since $\sum_{\ell = 0}^\ncols |\setfam_i^{\ell}(\coloring)| = \grpsize$, this also implies that $|\setfam_i^0(\coloring)| = \binom{\ncols}{\ncols} \frac{\grpsize}{2^\ncols} = \frac{\grpsize}{2^\ncols}$ for all groups $i$.
  
  No vertex of the central clique $F$ can be deleted, because $\coloring$ spends its whole budget on the packing $\packing$ and the families of twinclasses $\ttcset_i$, $i \in [\ngrps]$. Hence, we can assume by permuting the colors that $\coloring(f_s)  = s$, $s \in [\ncols]$, i.e., $\coloring$ is a normalized solution. 
  
  Finally, the equations $|\setfam_i^{\ell}(\coloring)| = \binom{\ncols}{\ncols - \ell} \frac{\grpsize}{2^\ncols} = \binom{\ncols}{\ell} \frac{\grpsize}{2^\ncols}$, $i \in [\ngrps]$, $\ell \in \{0\} \cup [\ncols]$, imply that for every group $i$ there exists some $\psi_i \in \scolorings_i$ with $\coloring(U) = \psi_i(U)$ for all $U \in \ttcset_i$. Since all $U \in \ttcset_i$ are twinclasses, exchanging $\coloring\big|_{\bigcup(\ttcset_i)}$ with $\psi_i$ for every group $i$ still yields a solution $\psi$ for \DTC with the desired properties. 
\end{proof}

\begin{mythm}
 \label{thm:dtc_to_sat}
 Let $\formula$ be a $\clss$-\SAT instance with $\nvars$ variables and $\nclss$ clauses. Let $G = G(\formula, \ncols, \vgrpsize)$ be the graph as constructed above and let $\budget = \cost_\packing + \ngrps \ncols \grpsize / 2$. If the \DTC instance $(G, \budget)$ has a solution $\coloring$, then $\formula$ has a satisfying assignment $\tassign$.
\end{mythm}

\begin{proof}
 By \cref{thm:enforced_structure}, we know that $|\delset| = \budget$ and can assume that $\coloring$ is a normalized solution with $\coloring_i := \coloring\big|_{\bigcup(\ttcset_i)} \in \scolorings_i$ for all groups $i$. Hence, the deletions are fully explained by the packing $\packing$ and the properties of solutions in $\scolorings_i$. 
 
 For every thin arrow from $u$ to $v$, thick $\ell$-arrow $\arrow_\ell(U,v)$, color-set-gadget $\decode_C(U,v)$, by \cref{thm:arrow_behavior} and \cref{thm:decode_behavior}, we can assume that $\coloring$ chooses the active solution on this gadget if the appropriate condition at $u$ or $U$ is satisfied. 
 
 Consider the $j$-th clause and some group $i$, we begin by arguing that $\coloring(\hat{y}^j_{i, \psi_i}) = \del$ only if $\psi_i = \coloring_i$. First, observe that $\hat{y}^j_{i, \psi_i}$ may only be deleted if all other vertices of the large independent set in $Y^j_{i, \psi_i}$ are deleted, otherwise $Y^j_{i, \psi_i}$ as a complete $(\ncols + 1)$-partite graph is not resolved or we exceed the packing budget on the complete graph induced by $\hat{y}^j_{i, \psi_i}$ and the $K_\ncols$. Consider some $\psi_i \in \scolorings_i \setminus \{\coloring_i\}$. We will distinguish three cases.
 \begin{enumerate} 
  \item Suppose that there is some $U \in \ttcset_i$ such that $|\psi_i(U) \setminus \{\del\}| < |\coloring_i(U) \setminus \{\del\}|$. This implies that $\psi_i(U) \setminus \{\del\} \neq [\ncols]$, hence the gadget $W^j_{i, \psi_i, U} = \decode_{\psi_i(U) \setminus \{\del\}}(U,v)$, where $v$ is in the large independent set of $Y^j_{i, \psi_i}$, exists. The cardinality inequality also implies that $\coloring_i(U) \setminus \{\del\} \not\subseteq \psi_i(U) \setminus \{\del\}$, so by \cref{thm:decode_behavior} and the budget being tight on the graphs of the packing $\packing$, we see that $v$ cannot be deleted. Hence, $\hat{y}^j_{i, \psi_i}$ cannot be deleted either.
  \item Suppose that there is some $U \in \ttcset_i$ such that $|\psi_i(U) \setminus \{\del\}| > |\coloring_i(U) \setminus \{\del\}|$, then the first case must also apply for some $U' \in \ttcset_i$, as otherwise $\psi_i$ would perform too few deletions in $\ttcset_i$ and hence $\psi_i \notin \scolorings_i$.
  \item Since the first two cases do not apply, it follows that $|\psi_i(U) \setminus \{\del\}| = |\coloring_i(U) \setminus \{\del\}|$ for all $U \in \ttcset_i$. Due to $\psi_i \neq \coloring_i$, there must be some $U \in \ttcset_i$ such that $\psi_i(U) \setminus \{\del\} \neq \coloring_i(U) \setminus \{\del\}$. This implies that $\psi_i(U) \setminus \{\del\} \neq [\ncols]$ and hence the gadget $W^j_{i, \psi_i, U} = \decode_{\psi_i(U) \setminus \{\del\}}(U,v)$, where $v$ is in the large independent set of $Y^j_{i, \psi_i}$, exists. Since $|\psi_i(U) \setminus \{\del\}| = |\coloring_i(U) \setminus \{\del\}|$ and $\psi_i(U) \setminus \{\del\} \neq \coloring_i(U) \setminus \{\del\}$, we must have that $\coloring_i(U) \setminus \{\del\} \not\subseteq \psi_i(U) \setminus \{\del\}$, hence, as in the first case, $v$ cannot be deleted and neither can $\hat{y}^j_{i, \psi_i}$.
 \end{enumerate}
 This proves the claim regarding the deletion of $\hat{y}^j_{i, \psi_i}$. 
 
 Now, consider the gadget $Z^j$. To resolve $Z^j$, there has to be a thin arrow from some $\hat{y}^j_{i, \psi_i}$ to $Z^j$ that is active since $Z^j$ is not part of the packing $\packing$. By the previous claim, this implies that $\psi_i = \coloring_i$. Furthermore, by construction of $G$ such a thin arrow only exists if the partial truth assignment $\tassign_i = \embedding_i^{-1}(\coloring_i)$ satisfies the $j$-th clause. Note that the definition of $\tassign_i$ is independent of the considered clause.
 
 For some groups $i$, the partial solution $\coloring_i$ might not be in the image of $\embedding_i$, in that case $\tassign_i$ is an arbitrary partial truth assignment for the $i$-th variable group. By the previous argument, the truth assignment $\tassign = \bigcup_{i = 1}^{\ngrps} \tassign_i$ must satisfy all clauses of $\formula$.
\end{proof}

\begin{figure}
 \centering
 \tikzfig{pictures/dtc_clause_component}
 \caption{A high-level overview of the component corresponding to the $j$-th clause after removing the TCM $\tcm$. The arrows leading to $Z^j$ denote thin arrows.}
 \label{fig:dtc_clause_component}
\end{figure}

\begin{mythm}
 \label{thm:dtc_tw_bound}
 Let $G = G(\formula, \ncols, \vgrpsize)$ be the graph as constructed above. The set $\tcm = (\bigcup_{i = 1}^\ngrps \ttcset_i) \cup \{\{f_s\} \sep s \in [\ncols]\}$ is a TCM for $G$ of size $\ngrps \grpsize + \ncols$ to treewidth $\ncols$, i.e., $|\tcm| =  \ngrps \grpsize + \ncols$ and $\tw(G - \bigcup(\tcm)) \leq \ncols$.
\end{mythm}

\begin{proof}
 It follows from the construction of $G$ that $\tcm$ has size $\sum_{i = 1}^\ngrps |\ttcset_i| + |F| = \ngrps \grpsize + \ncols$ and only consists of twinclasses. The fact that all sets in $\tcm$ are twinclasses and not just sets of twins can be seen by considering the various color-set-gadgets $B_C(U,v)$ in the construction. It remains to argue that the treewidth of $G' := G - \bigcup(\tcm)$ is at most $\ncols$. We will show this by using the omniscient-cops-and-robber-game. The remaining graph $G'$ consists of several connected components, namely one connected component per copy of $L_{i, \ell, \setfam}$ and one connected component for every clause. Observe that the heads of thick $\ell$-arrows $\arrow_\ell(U,v)$, of color-set-gadgets $\decode_C(U,v)$, of thin arrows, and also the tails of thin arrows consist of only a single vertex. So, if the robber escapes through one of these gadgets and there is no other path back, then a single cop suffices to prevent the robber from going back.
 
 We begin by handling the connected components corresponding to a copy of $L_{i, \ell, \setfam}$. On a high level, these components look like stars; the center is $L_{i, \ell, \setfam}$ and for every $U \in \setfam$ the remainder $\arrow_\ell(U,v) - U$ of a thick $\ell$-arrow is attached to $L_{i, \ell, \setfam}$. By \cref{thm:critical_family}, $L_{i, \ell, \setfam}$ has pathwidth $\ncols$. With $\ncols + 1$ cops we sweep from left to right through the bags of the pathwidth decomposition of $L_{i, \ell, \setfam}$ until the robber escapes to one of the attached $\arrow_\ell(U,v) - U$. When that happens we remove all cops, except the cop on the head $v$ of the considered thick $\ell$-arrow from $L_{i, \ell, \setfam}$. We can then capture the robber using the strategy given by \cref{thm:arrow_behavior}.
 
 Next, we handle the connected components corresponding to a clause. \cref{fig:dtc_clause_component} gives a high-level overview how these components look, notice that at this level there are no cycles. Our strategy makes use of this hierarchy, we start at $Z^j$ and chase the robber downwards, making sure that the robber cannot go back upwards. By \cref{thm:critical_family}, $Z^j$ has pathwidth $\ncols$. Again, we sweep with $\ncols + 1$ cops from left to right through the bags of the pathwidth decomposition of $Z^j$, until the robber escapes through some thin arrow coming from some $Y^j_{i, \coloring_i}$. Then, we remove all cops except the one on the head of the thin arrow and place one cop on the tail $\hat{y}^j_{i, \coloring_i}$ of the thin arrow. Either the robber escapes to the thin arrow, where we can capture it easily using $\ncols - 1$ further cops or the robber escapes to $Y^j_{i, \coloring_i}$.
 
 If the robber escapes to $Y^j_{i, \coloring_i}$, then we remove the cop from the head of the thin arrow and place $\ncols$ cops on the $K_\ncols$ in $Y^j_{i, \coloring_i}$. This leaves us with one connected component $W^j_{i, \coloring_i, U} = \decode_{C}(U,v)$ for each $U \in \ttcset_i$ with $\coloring_i(U) \neq [\ncols]$. Consider the connected component the robber escaped to and move the cop from $\hat{y}^j_{i, \coloring_i}$ to $v$. Since there are still cops on the $K_\ncols$ in $Y^j_{i, \coloring_i}$, the robber cannot go back. We can now remove these cops on $K_\ncols$ and capture the robber using the strategy given by \cref{thm:decode_behavior}.
 
 This concludes the strategy. Since we have never placed more than $\ncols + 1$ cops simultaneously, we see by \cref{thm:cops_robber} that $G'$ has treewidth at most $\ncols$.
\end{proof}

\begin{proof}[Proof of \cref{thm:dtc_lower_bound}]
 Suppose there is some $\ncols \geq 2$ and $\eps > 0$ such that \DTC can be solved in time $\Oh^*((2^\ncols - \eps)^{|\tcm|})$, where $\tcm$ is a TCM to treewidth $\ncols$. We will show that there exists a $\delta < 1$ such that for any clause size $\clss$, we can solve $\clss$-\SAT in time $\Oh^*(2^{\delta \nvars})$, where $\nvars$ is the number of variables. This contradicts \SETH by \cref{thm:seth_hitting_set} and hence implies the desired lower bound.
 
 Given a $\clss$-\SAT instance $\formula$, we construct the graph $G = G(\formula, \ncols, \vgrpsize)$, where $\vgrpsize$ depends only on $\ncols$ and $\eps$ and will be chosen later. We can consider $\ncols$, $\eps$, and $\clss$ as constants, hence also $\vgrpsize = \vgrpsize(\ncols, \eps)$ and $\grpsize = \grpsize(\vgrpsize)$ are constant. First, we will argue that $G$ has polynomial size. The number of vertices in the various gadgets can be estimated as follows, where we have an additional summand $\ncols$ for the $(\ncols + 1)$-critical graphs, since \cref{thm:critical_family} only constructs them in increments of $\ncols$:
 \begin{itemize}
  \item $|V(G[\bigcup(\tcm)])| = \sum_{i = 1}^\ngrps |\ttcset_i|\ncols + |F| = \ngrps \grpsize \ncols + \ncols = \lceil \frac{\nvars}{\vgrpsize} \rceil \grpsize \ncols + \ncols = \Oh(\nvars)$.
  \item $|V(L_{i, \ell, \setfam})| \leq |\setfam| + \ncols \leq |\ttcset_i| + \ncols = \grpsize + \ncols = \Oh(1)$.
  \item $|V(Z^j)| \leq \clss 2^{\vgrpsize} + \ncols = \Oh(1)$.
  \item $|V(W^j_{i, \coloring_i, U})| \leq (\ncols + 1)^2 = \Oh(1)$.
  \item $|V(\arrow_\ell(U,v))| \leq \ncols^3 + 3\ncols + 1 = \Oh(1)$.
  \item Every thin arrow consists of $2\ncols + 1$ vertices. 
 \end{itemize}
 Next, we bound how often each gadget appears:
 \begin{itemize}
  \item $i$ can take on $\ngrps = \lceil \frac{\nvars}{\vgrpsize} \rceil = \Oh(n)$ values.      
  \item There are at most $\Oh(\nvars^\clss)$ clauses, hence $j$ can take on $\Oh(\nvars^\clss)$ values.
  \item $\ell$ can take on $\ncols = \Oh(1)$ values.
  \item For fixed $i$, $\setfam \subseteq \ttcset_i$, so $\setfam$ can take on $2^{|\ttcset_i|} = 2^\grpsize = \Oh(1)$ values.
  \item We create $1 + \ngrps \ncols \grpsize / 2 = \Oh(n)$ copies of each $L_{i, \ell, \setfam}$.
  \item For fixed $i$, $\coloring_i \in \scolorings_i$ and $|\scolorings_i| \leq (2^\ncols)^\grpsize = \Oh(1)$.
  \item For fixed $i$, $U \in \ttcset_i$, so $U$ can take on $|\ttcset_i| = \grpsize = \Oh(1)$ values.
  \item We create at most one (thin or thick) arrow per vertex in some $Z^j$ or $L_{i, \ell, \setfam}$. 
 \end{itemize}
 Together, all these bounds show that $G$ has polynomial size and following the construction of $G$, one can easily see that $G$ can also be constructed in polynomial time. Furthermore, from the proof of \cref{thm:dtc_tw_bound} we can obtain a tree decomposition of $G' := G - \bigcup(\tcm)$ of width $\ncols$ in polynomial time.
 
 To analyze the running time resulting from applying the reduction and running the assumed algorithm for \DTC, we first bound $\grpsize$ as follows:
 \begin{equation}
  \label{eq:grpsize_bound}
  \grpsize \leq \frac{\vgrpsize}{\ncols} + 2^{\ncols + 1} \left\lceil \log_{2^\ncols}\left({\vgrpsize}/{\ncols}\right) \right\rceil + 2^\ncols + 1.
 \end{equation}
 
 In the construction of $G = G(\formula, \ncols, \vgrpsize)$, we chose $\grpsize$ as the smallest integer such that $\grpsize$ is divisible by $2^\ncols$ and such that the quantity  from \cref{thm:scolorings_bound}, which we denote by $x$, is larger than $2^\vgrpsize$. The summand $2^\ncols$ in \cref{eq:grpsize_bound} accounts for the divisibility. It remains to show that the second property is satisfied, for this we work with $p = \frac{\vgrpsize}{\ncols} + 2^{\ncols + 1} \lceil \log_{2^\ncols}({\vgrpsize}/{\ncols}) \rceil + 1$. 
 
 We first observe that $(2^\ncols - 1)! / 2^{(2^\ncols)} \geq \frac{1}{4}$ for all $\ncols \geq 2$. Hence, $x \geq (2^\ncols)^\grpsize \grpsize^{-(2^\ncols)} / 4 =: x'$. Furthermore, observe that we have $2 {\vgrpsize}/{\ncols} \geq \grpsize$ for sufficiently large $\vgrpsize$. We proceed by showing that $x' \geq 2^\vgrpsize$:
 \begin{align*}
   x' & \geq (2^\ncols)^{\frac{\vgrpsize}{\ncols} + 2^{\ncols + 1} \lceil \log_{2^\ncols}({\vgrpsize}/{\ncols}) \rceil} \grpsize^{-(2^\ncols)}
   \geq 2^\vgrpsize \left(\frac{\vgrpsize}{\ncols}\right)^{(2^{\ncols + 1})} \grpsize^{-(2^\ncols)} \geq 2^\vgrpsize \left(\frac{\vgrpsize}{\ncols}\right)^{(2^{\ncols + 1})} \left(2 \frac{\vgrpsize}{\ncols}\right)^{-(2^\ncols)} \\
   & \geq 2^\vgrpsize \left({\vgrpsize}/{\ncols}\right)^{(2^{\ncols})} 2^{-(2^\ncols)} \geq 2^\vgrpsize,
 \end{align*}
 where we use ${\vgrpsize}/{\ncols} \geq 2$ for sufficiently large $\vgrpsize$ in the last inequality.
 
 Running the assumed algorithm for \DTC on $G$ decides, by \cref{thm:sat_to_dtc} and \cref{thm:dtc_to_sat}, the satisfiability of $\formula$. Since $G$ can be constructed in polynomial time and has parameter value $|\tcm| = \lceil \frac{\nvars}{\vgrpsize} \rceil \grpsize + \ncols$ by \cref{thm:dtc_tw_bound}, we can solve $\clss$-\SAT in time 
 \begin{align*}
  \Oh^*((2^\ncols - \eps)^{|\tcm|}) & = \Oh^*((2^\ncols - \eps)^{\lceil \frac{\nvars}{\vgrpsize} \rceil \grpsize + \ncols}) \leq \Oh^*((2^\ncols - \eps)^{\frac{\nvars}{\vgrpsize} \grpsize}) \\
  & \leq \Oh^*((2^\ncols - \eps)^{\frac{\nvars}{\ncols}} (2^\ncols - \eps)^{\frac{\nvars}{\vgrpsize}(2^{\ncols + 1} \lceil \log_{2^\ncols}({\vgrpsize}/{\ncols}) \rceil + 2^\ncols + 2)}).
 \end{align*}
 We have that $\Oh^*((2^\ncols - \eps)^{\frac{\nvars}{\ncols}}) \leq \Oh^*(2^{\delta_1 \nvars})$ for some $\delta_1 < 1$. It remains to upper bound the second factor, we will again use that $\vgrpsize$ can be chosen sufficiently large:
 \begin{alignat*}{4}
  \phantom{\leq} \quad & \Oh^*\left((2^\ncols - \eps)^{\frac{\nvars}{\vgrpsize}(2^{\ncols + 1} \lceil \log_{2^\ncols}({\vgrpsize}/{\ncols}) \rceil + 2^\ncols + 2)}\right) \quad & \leq\quad & \Oh^*\left((2^\ncols - \eps)^{2^{\ncols + 2} \frac{\nvars}{\vgrpsize} \log_{2^\ncols}({\vgrpsize}/{\ncols})}\right) \\
  \leq \quad & \Oh^*\left((2^\ncols - \eps)^{2^{\ncols + 2} \frac{\nvars}{\vgrpsize} \log_{2^\ncols}(\vgrpsize)}\right)\quad  & \leq \quad & \Oh^*\left(\left(\left(2^\ncols - \eps\right)^{2^{\ncols + 2} \frac{\log_{2^\ncols}(\vgrpsize)}{\vgrpsize}}\right)^{\nvars}\right) \\
  \leq \quad & \Oh^*\left(2^{\delta_2 \nvars}\right), & &
 \end{alignat*}
 where $\delta_2$ can be chosen to be arbitrarily close to $0$ by making $\vgrpsize$ sufficiently large. By choosing $\delta_2$ small enough so that $\delta := \delta_1 + \delta_2 < 1$, we obtain that $\clss$-\SAT can be solved in time $\Oh^*(2^{\delta \nvars})$. Since the choice of $\vgrpsize$ is independent of $\clss$, this running time holds for all $\clss$ and therefore \SETH would be false.
\end{proof}
 
\begin{cor}
 If \DTC can be solved in time $\Oh^*((2^\ncols - \eps)^{\tctd(G)})$ for some $\ncols \geq 2$ and $\eps > 0$, then \SETH is false.
\end{cor}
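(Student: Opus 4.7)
The plan is to derive this corollary directly from \cref{thm:dtc_lower_bound} by invoking \cref{thm:tcm_tctd}, which relates twinclass-treedepth to the size of a TCM to treewidth $\ncols$. Concretely, I would assume for contradiction that there exist $\ncols \geq 2$ and $\eps > 0$ with a \DTC algorithm running in time $\Oh^*((2^\ncols - \eps)^{\tctd(G)})$, and show that this algorithm immediately yields an algorithm ruled out by \cref{thm:dtc_lower_bound}.

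The key step is the observation that for any graph $G$ with a TCM $\tcm$ to treewidth $\ncols$, \cref{thm:tcm_tctd} gives $\tctd(G) \leq |\tcm| + (\ncols+1) \log_2 |V(G/\tcpartition(G))|$, and moreover this decomposition can be computed in polynomial time. Since we treat $\ncols$ and $\eps$ as constants, we obtain
\begin{equation*}
  (2^\ncols - \eps)^{\tctd(G)} \leq (2^\ncols - \eps)^{|\tcm|} \cdot (2^\ncols - \eps)^{(\ncols + 1) \log_2 |V(G/\tcpartition(G))|},
\end{equation*}
where the second factor is polynomial in $|V(G)|$ and hence absorbed by the $\Oh^*$-notation. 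Thus the assumed algorithm solves \DTC in time $\Oh^*((2^\ncols - \eps)^{|\tcm|})$ whenever a TCM $\tcm$ to treewidth $\ncols$ is given.

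Finally, I would apply this derived algorithm to the instances $(G, \budget)$ produced by the reduction in the proof of \cref{thm:dtc_lower_bound}, which explicitly supplies a TCM $\tcm$ to treewidth $\ncols$ along with $G$. By \cref{thm:dtc_lower_bound}, having such an algorithm contradicts \SETH. There is no genuine obstacle beyond checking that the bookkeeping works: one only has to note that $|V(G/\tcpartition(G))|$ is polynomial in the size of the $\clss$-\SAT instance, which is immediate since the reduction itself runs in polynomial time, so the extra $(\ncols+1)\log_2 |V(G/\tcpartition(G))|$ term contributes only a polynomial factor and does not affect the base of the exponential.
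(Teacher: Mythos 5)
Your proposal is correct and matches the paper's proof, which simply cites \cref{thm:dtc_lower_bound} together with \cref{thm:tcm_tctd}; you have merely spelled out the absorption of the $(\ncols+1)\log_2|V(G/\tcpartition(G))|$ term into the $\Oh^*$-notation, which is exactly what the cited lemma is designed to justify.
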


\begin{proof}
 Follows from \cref{thm:dtc_lower_bound} and \cref{thm:tcm_tctd}.
\end{proof}

\subsection{Sparse Setting}
\label{sec:dtc_lb_td}

We again view \emph{solutions} to \DTC as functions $\coloring \colon V(G) \rightarrow [\ncols] \cup \{\del\}$ with the property discussed in the outline, cf.~\cref{sec:outline}.

In this subsection, we show how to adapt the lower bound for the dense setting to the sparse setting. We will give the construction in full detail again, but since the principle of constructions is so similar we will only explain how to adapt the previous proofs. One can see that the lower bound is tight by a routine application of dynamic programming on tree decompositions.

\begin{mythm}
 \label{thm:dtc_lower_bound_td}
 If \DTC can be solved in time $\Oh^*((\ncols + 1 - \eps)^{|\modulator|})$ for some $\ncols \geq 2$ and $\eps > 0$, where $\modulator$ is a modulator to treewidth $\ncols$, then \SETH is false.
\end{mythm}

The remainder of this section is devoted to proving \cref{thm:dtc_lower_bound_td}. Assume that we can solve \DTC in time $\Oh^*((\ncols + 1 - \eps)^{|\modulator|})$ for some $\ncols \geq 2$ and $\eps > 0$. We provide for all clause sizes $\clss$, a reduction from $\clss$-\SAT with $\nvars$ variables to \DTC with a modulator to treewidth $\ncols$ of size approximately $\nvars \log_{\ncols + 1}(2)$. Together with the assumed faster algorithm, this implies a faster algorithm for $\clss$-\SAT, thus violating \SETH. We will consider $\ncols$ to be fixed from now on.

\paragraph*{Construction.} Consider a $\clss$-\SAT instance $\formula$ with $\nvars$ variables and $\nclss$ clauses. We enumerate the clauses and refer to them by their number. We pick an integer $\vgrpsize$ which only depends on $\eps$ and $\ncols$; we will describe how to choose $\vgrpsize$ later. We partition the variables of $\formula$ into groups of size at most $\vgrpsize$, resulting in $\ngrps = \lceil \nvars / \vgrpsize \rceil$ groups which will be indexed by $i$. Next, we choose the smallest integer $\grpsize$ such that $\grpsize$ is divisible by $\ncols + 1$ and $(\ncols + 1)^\grpsize \frac{\grpsize! \ncols!}{(\grpsize + \ncols)!} \geq 2^\vgrpsize$. We will now describe the construction of the \DTC instance $G = G(\formula, \ncols, \vgrpsize)$. 

\subparagraph*{Comparison to dense setting.}
The principle behind construction is essentially the same as for the dense setting, cf. \cref{sec:dtc_lb_cw}, but the gadgets can be simplified. The central twinclasses will simply be single vertices now. Hence, we will not have to distinguish between different numbers of deletions in a twinclass, instead we simply distinguish between whether a vertex is deleted or not. Most notably, this allows us to use thin arrows instead of thick $\ell$-arrows and the structure of the considered partial solutions on the central vertices simplifies significantly. The remaining gadgets structurally stay the same, but their size may change.

\subparagraph*{Construction of central vertices.} The central vertices of $G$ form the modulator to treewidth $\ncols$. For each variable group $i \in [\ngrps]$, we create an independent set $U_i$ consisting of $\grpsize$ vertices. Furthermore, we again have a central clique $F = \{f_s \sep s \in [\ncols]\}$ consisting of $\ncols$ vertices that is used to simulate \LC constraints. A solution $\coloring \colon V \rightarrow [\ncols] \cup \{\del\}$ satisfying $\delset \cap F = \emptyset$ and $\coloring(f_s) = s$ for all $s \in [\ncols]$ is called a \emph{normalized solution}.

As described in the outline, we only want to consider solutions that delete a fixed number of vertices per group $U_i$. By using slightly more vertices $\grpsize$ than enforced by the base conversion, sufficiently many solutions remain if we define $\scolorings_i$ as follows:
\begin{equation*}
 \scolorings_i = \{\coloring \colon U_i \rightarrow [\ncols] \cup \{\del\} \sep |\delset| = \grpsize / (\ncols + 1) \}
\end{equation*}

\begin{lem}
  We have that $|\scolorings_i| \geq (\ncols + 1)^\grpsize \frac{\grpsize! \ncols!}{(\grpsize + \ncols)!}$ for all $i \in [\ngrps]$.
\end{lem}

\begin{proof}
  Since $\{\coloring \colon U_i \rightarrow [\ncols] \sep |\coloring^{-1}(c)| = \grpsize / (\ncols + 1) \text{ for all } c \in [\ncols] \cup \{\bot\}\} \subseteq \scolorings_i$, we see that $|\scolorings_i| \geq \binom{\grpsize}{\frac{\grpsize}{\ncols + 1}, \ldots, \frac{\grpsize}{\ncols + 1}} = x$, where $x$ is the central multinomial coefficient. It can be seen that $x$ is a maximum of the function $(c_1, \ldots, c_{\ncols + 1}) \mapsto \binom{\grpsize}{c_1, \ldots, c_{\ncols + 1}}$. The number of summands in the multinomial theorem $\sum_{c_1 + \cdots c_{\ncols + 1} = \grpsize} \binom{\grpsize}{c_1, \ldots, c_{\ncols + 1}} = (\ncols + 1)^\grpsize$ is $\binom{\grpsize + \ncols}{\grpsize} = \frac{(\grpsize + \ncols)!}{\grpsize! \ncols!}$, which corresponds to the number of weak compositions of $\grpsize$ into $\ncols + 1$ parts, and $x$ is one of them. Hence, we see that $|\scolorings_i| \geq x \geq (\ncols + 1)^\grpsize \frac{\grpsize! \ncols!}{(\grpsize + \ncols)!}$.
\end{proof}

Hence, by the choice of $\grpsize$, we can pick for each group $i \in [\ngrps]$ an efficiently computable injective mapping $\embedding_i \colon \{0,1\}^\vgrpsize \rightarrow \scolorings_i$ that maps truth assignments of the $i$-th variable group to partial solutions on $U_i$ with the desired structure.

\subparagraph*{Budget.} The budget $\budget = |\packing| + \ngrps \grpsize / (\ncols + 1)$ consists of two parts again. The first part $|\packing|$ is allocated to a vertex-disjoint packing $\packing$ of $(\ncols + 1)$-critical graphs and the second part $\ngrps \grpsize / (\ncols + 1)$ is allocated to the central vertices. 

\subparagraph*{Enforcing structure on central vertices.} For every group $i \in [\ngrps]$ and subset $S \subseteq U_i$ with $|S| = \frac{\ncols}{\ncols + 1} \grpsize + 1$, we add an $(\ncols + 1)$-critical graph, denoted $L_{i,S}$, consisting of at least $|S|$ vertices. For every $u \in S$, we pick a private vertex $v$ in $L_{i,S}$ and add a thin arrow from $u$ to $v$. We create $1 + \ngrps \grpsize / (\ncols + 1) = 1 + (\budget - |\packing|)$ copies of $L_{i,S}$ and the incident arrows. This concludes the construction of the structure gadget.

\subparagraph*{Decoding gadgets.} For the $j$-th clause, group $i \in [\ngrps]$, partial solution $\coloring_i \in \scolorings_i$, we construct a decoding gadget $Y^j_{i, \coloring_i}$ as follows. The gadget $Y^j_{i, \coloring_i}$ consists of a large independent set joined to a $K_\ncols$, i.e., adding all edges between both sets. In other words, $Y^j_{i, \coloring_i}$ is a complete $(\ncols + 1)$-partite graph with one large independent set and all other independent sets in the partition are singletons. The large independent set consists of $\grpsize + 1$ vertices, where one of these vertices is distinguished and denoted by $\hat{y}^j_{i, \coloring_i}$. This concludes the construction of $Y^j_{i, \coloring_i}$. 

For the $j$-th clause, group $i \in [\ngrps]$, partial solution $\coloring_i \in \scolorings_i$, vertex $u \in U_i$, we pick a private vertex $v \neq \hat{y}^j_{i, \coloring_i}$ in the large independent set of $Y^j_{i, \coloring_i}$ and add the decoding gadget $\decode_{\{\coloring_i(u)\} \setminus \{\del\}}(\{u\},v)$ and denote this instance of the decoding gadget by $W^j_{i, \coloring_i, u}$. Observe that in the dense setting we did not attach decoding gadgets to twinclasses $U$ with $\coloring_i(U) \neq [\ncols]$, whereas there is no such exception in the sparse setting.

\subparagraph*{Clause gadgets.} For the $j$-th clause, we add an $(\ncols + 1)$-critical graph, denoted $Z^j$, consisting of at least $\clss 2^\vgrpsize$ vertices. For every group $i \in [\ngrps]$ and partial solution $\coloring_i \in \scolorings_i$ such that $\embedding^{-1}(\coloring_i)$ is a partial truth assignment satisfying the $j$-th clause, we pick a private vertex $v$ in $Z^j$ and add a thin arrow from $\hat{y}^j_{i, \coloring_i}$ to $v$. We ensured that $Z^j$ is large enough so that we can always pick such a private vertex. This concludes the construction of $G(\formula, \ncols, \vgrpsize)$, cf.\ \cref{fig:dtc_decode_sparse}.

\begin{figure}[h]
  \centering
  \tikzfig{pictures/dtc_overview_sparse}
  \caption{An overview of the construction of the graph $G(\formula, \ncols, \vgrpsize)$ for the case $\ncols = 2$.}
  \label{fig:dtc_decode_sparse}
\end{figure}

\subparagraph*{Packing.} We construct a vertex-disjoint packing $\packing$ of $(\ncols + 1)$-critical graphs that fully explains the budget outside of the central vertices. For every thin arrow from $u$ to $v$ in the construction, we add the $K_{\ncols + 1}$ induced by the deletion edge incident to the head $v$ of the arrow to the packing $\packing$. Every $\decode_{\{\coloring_i(u)\} \setminus \{\del\}}(\{u\},v)$ contains, depending on whether $\coloring_i(u) \neq \del$ or not, $\ncols$ or $\ncols + 1$ disjoint $K_{\ncols + 1}$ corresponding to deletion edges, one of them being incident to $v$, and we add all these $K_{\ncols + 1}$ to the packing $\packing$. Finally, for every $Y^j_{i, \coloring_i}$, we add the $K_{\ncols + 1}$ induced by $\hat{y}^j_{i, \coloring_i}$ and the $K_\ncols$ of $Y^j_{i, \coloring_i}$ to the packing $\packing$. Since, we have added only $(\ncols + 1)$-critical graphs to $\packing$, the cost of $\packing$ is simply $|\packing|$. 

\begin{mythm}
 Let $\formula$ be a $\clss$-\SAT instance with $\nvars$ variables and $\nclss$ clauses. Let $G = G(\formula, \ncols, \vgrpsize)$ and $\packing$ be the graph and packing as constructed above and let $\budget = |\packing| + \ngrps \grpsize / (\ncols + 1)$. If $\formula$ has a satisfying truth assignment $\tassign$, then there is a solution $\coloring$ of the \DTC instance $(G, \budget)$.
\end{mythm}

\begin{proof}[Proof sketch]
 The proof is very similar to the proof of \cref{thm:sat_to_dtc}. Let $\tassign_i$ be the partial truth assignment of variable group $i$ induced by $\tassign$. We set $\coloring\big|_{U_i} = \embedding_i(\tassign_i) \in \scolorings_i$ for all $i$ and $\coloring(f_s) = s$ for all $s \in [\ncols]$. By definition of $\scolorings_i$, only the budget $|\packing|$ for the packing $\packing$ remains, hence on every graph in $\packing$, we can perform exactly one deletion and nowhere else. 
 
 We propagate deletions along thin arrows and extend the solution across decoding gadgets $\decode_{\{\coloring_i(u)\} \setminus \{\del\}}(\{u\},v)$ as before. For every $Y^j_{i, \coloring_i}$, we delete the distinguished vertex $\hat{y}^j_{i, \coloring_i}$ and for every $j$-th clause, group $i$, and solution $\coloring_i \neq \psi_i \in \scolorings_i$, we pick one of the vertices in the $K_\ncols$ of $Y^j_{i, \coloring_i}$ and delete it. This concludes the description of the deletions. 
 
 It remains to show that the remaining graph can be properly $\ncols$-colored. Comparing to the proof of \cref{thm:sat_to_dtc}, the arguments change slightly for $L_{i,S}$ and $Y^j_{i, \psi_i}$. Consider some copy of some $L_{i,S}$, due to $|S| + |\delset \cap U_i| = (\frac{\ncols}{\ncols + 1} \grpsize + 1) + \grpsize / (\ncols + 1) = \grpsize + 1 > |U_i|$ there is a least one $u \in S$ with $\coloring(u) = \del$. Therefore, the thin arrow from $u$ to $L_{i,S}$ propagates a deletion to $L_{i,S}$ and this $(\ncols + 1)$-critical graph is resolved. 
 
 Consider some $Y^j_{i, \psi_i}$ with $\psi_i \in \scolorings_i$. If $\psi_i \neq \coloring_i$, the arguments works as in the proof of \cref{thm:sat_to_dtc}. If $\psi_i = \coloring_i$, then we claim that the large independent set of $Y^j_{i, \coloring_i}$ is fully deleted. The large independent set has size $\grpsize + 1 = |U_i| + 1$. We distributed one deletion to $\hat{y}^j_{i, \coloring_i}$ and all other vertices $v$ in the large independent set are hit by some $W^j_{i, \coloring_i, u}$, $u \in U_i$. Since $\coloring_i(u) \in \{\coloring_i(u)\}$, we see that $W^j_{i, \coloring_i, u} = \decode_{\{\coloring_i(u)\} \setminus \{\del\}}(\{u\},v)$ propagates a deletion to $v$ due to \cref{thm:decode_behavior}. So, the large independent set is indeed fully deleted and thereby the complete $(\ncols + 1)$-partite graph is resolved.
\end{proof}

\begin{mythm}
 \label{thm:enforced_structure_td}
 Let $\formula$ be a $\clss$-\SAT instance with $\nvars$ variables and $\nclss$ clauses. Let $G = G(\formula, \ncols, \vgrpsize)$ and $\packing$ be the graph and packing as constructed above and let $\budget = |\packing| + \ngrps \grpsize / (\ncols + 1)$. If $\coloring$ is a solution of the \DTC instance $(G, \budget)$, then $|\delset| = \budget$. Furthermore, there is a normalized solution $\psi$ with $|\psi^{-1}(\del)| = |\delset| = \budget$ and $\psi\big|_{U_i} \in \scolorings_i$ for all $i \in [\ngrps]$.
\end{mythm}

\begin{proof}
 Since the packing $\packing$ consists only of $(\ncols + 1)$-critical graphs, at least $|\packing|$ deletions must be performed by $\coloring$ on these graphs. The remainder of the deletions is performed on the central vertices. Suppose that there is some group $i \in [\ngrps]$ such that $|\delset \cap U_i| < \grpsize / (\ncols + 1)$, then there exists an $S \subseteq U_i$ with $|S| = \frac{\ncols}{\ncols + 1} \grpsize + 1$ and $S \cap \delset = \emptyset$. Consider some copy of the corresponding $L_{i,S}$ and notice that all thin arrows leading to $L_{i,S}$ are passive unless we pay for extra deletions. Since $L_{i,S}$ is an $(\ncols + 1)$-critical graph that does not belong to $\packing$, we must perform one additional deletion per copy of $L_{i,S}$. There are $1 + \ngrps \grpsize/(\ncols + 1) > \budget - |\packing|$ copies of $L_{i,S}$, hence this would exceed the available budget. So, we can conclude that $|\delset \cap U_i | \geq \grpsize / (\ncols + 1)$ for all groups $i \in [\ngrps]$.
 
 Together with $|\delset| \leq \budget = |\packing| + \ngrps \grpsize / (\ncols + 1)$, we see that $|\delset| = \budget$ and $|\delset \cap U_i | = \grpsize / (\ncols + 1)$ for all $i \in [\ngrps]$. Hence, $\coloring$ cannot delete any vertex of the central clique $F$ and by permuting the colors, we obtain the desired normalized solution $\psi$ with $\psi\big|_{U_i} \in \scolorings_i$.
\end{proof}

\begin{mythm}
 Let $\formula$ be a $\clss$-\SAT instance with $\nvars$ variables and $\nclss$ clauses. Let $G = G(\formula, \ncols, \vgrpsize)$ and $\packing$ be the graph and packing as constructed above and let $\budget = |\packing| + \ngrps \grpsize / (\ncols + 1)$. If the \DTC instance $(G, \budget)$ has a solution $\coloring$, then $\formula$ has a satisfying truth assignment $\tassign$.
\end{mythm}

\begin{proof}[Proof sketch]
 This proof is very similar to the proof of \cref{thm:dtc_to_sat}. We first invoke \cref{thm:enforced_structure_td}, which implies that $|\delset| = \budget$ and allows us to assume that $\coloring$ is a normalized solution with $\coloring_i := \coloring\big|_{U_i} \in \scolorings_i$. We only diverge from the proof of \cref{thm:dtc_to_sat} when proving that $\coloring(\hat{y}^j_{i, \psi_i}) = \del$ only if $\psi_i = \coloring_i$. 
 
 As before, $\hat{y}^j_{i, \psi_i}$ may only be deleted if the large independent set of $Y^j_{i, \psi_i}$ is fully deleted. Now, consider some $\psi_i \in \scolorings_i \setminus \{\coloring_i\}$. We will distinguish two cases.
 \begin{enumerate}
   \item Suppose that $\psi_i^{-1}(\del) \neq \coloring_i^{-1}(\del)$. Due to $\coloring_i, \psi_i \in \scolorings_i$, there exists some $u \in U_i$ with $\psi_i(u) = \del$ and $\coloring_i(u) \neq \del$. The associated $W^j_{i, \psi_i, u} = \decode_{\emptyset}(\{u\},v)$ cannot propagate a deletion to $v$ in this case by \cref{thm:decode_behavior} and hence the large independent set of $Y^j_{i, \psi_i}$ is not fully deleted.
   \item Suppose that $\psi_i^{-1}(\del) = \coloring_i^{-1}(\del)$, then $\psi_i \neq \coloring_i$ implies that there exists some $u \in U_i \setminus \psi_i^{-1}(\del) = U_i \setminus \coloring_i^{-1}(\del)$ with $\psi_i(u) \neq \coloring_i(u)$. Again, the associated $W^j_{i, \psi_i, u} = \decode_{\{\psi_i(u)\}}(\{u\},v)$ cannot propagate a deletion to $v$ in this case by \cref{thm:decode_behavior} and, again, the large independent set of $Y^j_{i, \psi_i}$ is not fully deleted. 
 \end{enumerate}
 This proves the claim regarding the deletion of $\hat{y}^j_{i, \psi_i}$.
\end{proof}

\begin{mythm}
 \label{thm:dtc_tw_bound_td}
 Let $G = G(\formula, \ncols, \vgrpsize)$ be the graph as constructed above. The set $\modulator = \left(\bigcup_{i = 1}^\ngrps U_i\right) \cup \{f_s \sep s \in [\ncols]\}$ is a modulator for $G$ of size $\ngrps \grpsize + \ncols$ to treewidth $\ncols$, i.e., $|\modulator| = \ngrps \grpsize + \ncols$ and $\tw(G - \modulator) \leq \ncols$.
\end{mythm}

\begin{proof}[Proof sketch]
 The proof is very similar to the proof of \cref{thm:dtc_tw_bound}. The notable difference is that all thick arrows are replaced by thin arrows, but this does not affect the treewidth. The connected components corresponding to a clause have for the sake of the  omniscient cops-and-robber-game the same structure as before, only the number of $W^j_{i, \coloring_i, u}$ incident to some $Y^j_{i, \coloring_i}$ changes. 
\end{proof}

\begin{proof}[Proof of \cref{thm:dtc_lower_bound_td}]
 Suppose there is some $\ncols \geq 2$ and $\eps > 0$ such that \DTC can be solved in time $\Oh^*((\ncols + 1 - \eps)^{|\modulator|})$, where $\modulator$ is a modulator to treewidth $\ncols$. We will show that there exists a $\delta < 1$, such that for any $\clss$, we can solve $\clss$-\SAT in time $\Oh^*(2^{\delta \nvars})$ where $\nvars$ is the number of variables. This contradicts \SETH and hence implies the desired lower bound.
 
  Given a $\clss$-\SAT instance $\formula$, we construct the graph $G = G(\formula, \ncols, \vgrpsize)$, where $\vgrpsize$ depends only on $\ncols$ and $\eps$ and will be chosen later. We can consider $\ncols$, $\eps$, and $\clss$ as constants, hence also $\vgrpsize = \vgrpsize(\ncols, \eps)$ and $\grpsize = \grpsize(\vgrpsize)$ are constant. As in the proof of \cref{thm:dtc_lower_bound}, we can again see that $G$ has polynomial size and can be constructed in polynomial time. Furthermore, we can also obtain a tree decomposition of $G' := G - \modulator$ from the proof of \cref{thm:dtc_tw_bound_td} in polynomial time.
  
  In the construction of $G(\formula, \ncols, \vgrpsize)$, we chose $\grpsize$ as the smallest integer that is divisible by $(\ncols + 1)$ and satisfies $(\ncols + 1)^\grpsize \frac{\grpsize! \ncols!}{(\grpsize + \ncols)!} \geq 2^\vgrpsize$. We let $\gamma = \lceil \log_{\ncols + 1}(2^{\vgrpsize}) \rceil$ and show that $\grpsize \leq \gamma + 3 \ncols \lceil \log_{\ncols + 1} \gamma \rceil + (\ncols + 1)$. The summand $(\ncols + 1)$ will ensure divisibility. We compute that
  \begin{align*}
    (\ncols + 1)^\grpsize \frac{\grpsize! \ncols!}{(\grpsize + \ncols)!} & \geq (\ncols + 1)^\grpsize (2 \grpsize)^{-\ncols} = \frac{(\ncols + 1)^\gamma \gamma^{3\ncols}}{2^\ncols (\gamma + 3 \ncols \lceil \log_{\ncols + 1} \gamma \rceil + (\ncols + 1))^\ncols} \geq \frac{(\ncols + 1)^\gamma \gamma^{3\ncols}}{2^\ncols (5 \ncols \gamma)^\ncols} \\
    & = (\ncols + 1)^\gamma \frac{\gamma^{2\ncols}}{(10 \ncols)^\ncols} \geq (\ncols + 1)^\gamma \geq 2^\vgrpsize,
  \end{align*}
 where we use throughout that $\vgrpsize$ and hence $\gamma$ is sufficiently large.
  
  Running the assumed algorithm for \DTC on $G$ decides, by \cref{thm:sat_to_dtc} and \cref{thm:dtc_to_sat}, the satisfiability of $\formula$. Since $G$ can be constructed in polynomial time, we can solve $\clss$-\SAT in time
  \begin{align*}
   \Oh^*((\ncols + 1 - \eps)^{|\modulator|}) & = \Oh^*((\ncols + 1 - \eps)^{\lceil \frac{\nvars}{\vgrpsize} \rceil \grpsize + \ncols}) \leq \Oh^*((\ncols + 1 - \eps)^{\frac{\nvars}{\vgrpsize} \grpsize}) \\
   & \leq \Oh^*((\ncols + 1 - \eps)^{\frac{\nvars}{\vgrpsize} \log_{\ncols + 1}(2^\vgrpsize)} (\ncols + 1 - \eps)^{\frac{\nvars}{\vgrpsize}(3 \ncols \lceil \log_{\ncols + 1}(\gamma) \rceil + \ncols + 2)}).
  \end{align*}
  For the first factor, we see that 
  \begin{equation*}
   \Oh^*((\ncols + 1 - \eps)^{\frac{\nvars}{\vgrpsize} \log_{\ncols + 1}(2^\vgrpsize)}) = \Oh^*((\ncols + 1 - \eps)^{\nvars \log_{\ncols + 1}(2)}) \leq \Oh^*(2^{\delta_1 \nvars}),
  \end{equation*}
  where $\delta_1 < 1$. We bound the exponent in the second factor as follows
  \begin{align*}
   \frac{\nvars}{\vgrpsize}(3 \ncols \lceil \log_{\ncols + 1}(\gamma) \rceil + \ncols + 2) & \leq \frac{\nvars}{\vgrpsize}(6 \ncols \log_{\ncols + 1}(\gamma)) \leq \frac{\nvars}{\vgrpsize}(6 \ncols \log_{\ncols + 1}(p_0^2)) \\
   & = \frac{\nvars}{\vgrpsize}(12 \ncols \log_{\ncols + 1}(\vgrpsize)) = 12 \ncols \nvars \frac{\log_{\ncols + 1}(\vgrpsize)}{\vgrpsize},
  \end{align*} 
  where we use $\gamma \leq p_0^2$ in the second inequality. This bound shows that 
  \begin{equation*}
   \Oh^*\left((\ncols + 1 - \eps)^{\frac{\nvars}{\vgrpsize}(3 \ncols \lceil \log_{\ncols + 1}(\gamma) \rceil + \ncols + 2)}\right) \leq \Oh^*\left(\left((\ncols + 1 - \eps)^{12 \ncols \frac{\log_{\ncols + 1}(\vgrpsize)}{\vgrpsize}}\right)^\nvars\right) \leq \Oh^*(2^{\delta_2 \nvars}),
  \end{equation*}
  where we can choose $\delta_2$ arbitrarily close to $0$ by making $\vgrpsize$ large enough.
  
By choosing $\delta_2$ so that $\delta := \delta_1 + \delta_2 < 1$, we obtain that $\clss$-\SAT can be solved in time $\Oh^*\left(2^{\delta \nvars}\right)$. Since the choice of $\vgrpsize$ is independent of $\clss$, we obtain this running time for all $\clss$ and hence \SETH has to be false.
\end{proof}

\begin{cor}
 If \DTC can be solved in time $\Oh^*((\ncols + 1 - \eps)^{\td(G)})$ for some $\ncols \geq 2$ and $\eps > 0$, then \SETH is false.
\end{cor}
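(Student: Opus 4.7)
The plan is to derive this corollary directly from \cref{thm:dtc_lower_bound_td} by exploiting the relationship between treedepth and the modulator-to-treewidth parameter established in \cref{thm:mod_to_tw_implies_td}. Concretely, the hypothesis of \cref{thm:dtc_lower_bound_td} is an algorithm running in time $\Oh^*((\ncols+1-\eps)^{|\modulator|})$ whenever $\modulator$ is a modulator to treewidth $\ncols$, so the goal is to reduce the hypothetical treedepth-parameterized algorithm to one of this shape.

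First, I would invoke \cref{thm:mod_to_tw_implies_td}: given any instance $G$ of \DTC together with a modulator $\modulator$ witnessing $\tw(G-\modulator)\leq \ncols$, we can in polynomial time construct an elimination forest of depth at most $|\modulator|+(\ncols+1)\log_2 |V(G)|$, and consequently $\Oh^*(c^{\td(G)})\leq \Oh^*(c^{|\modulator|})$ for every constant $c$ (the logarithmic overhead in the exponent is absorbed into the polynomial factor hidden by $\Oh^*$, cf.\ \cref{footnote:mod_tw_td}). Next, I would suppose for contradiction that for some $\ncols\geq 2$ and $\eps>0$ there is an algorithm solving \DTC in time $\Oh^*((\ncols+1-\eps)^{\td(G)})$. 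Composing this algorithm with the treedepth construction above yields, for the same $\ncols$ and $\eps$, an algorithm solving \DTC in time $\Oh^*((\ncols+1-\eps)^{|\modulator|})$ whenever $\modulator$ is a modulator to treewidth $\ncols$ (we can either compute such an elimination forest from the given modulator or simply feed the inequality into the runtime bound).

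This contradicts \cref{thm:dtc_lower_bound_td}, which asserts that no such algorithm exists unless \SETH fails; hence \SETH must be false, completing the argument.

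There is essentially no technical obstacle: the only thing to be careful about is that the translation is uniform in $\ncols$ and $\eps$ (so that the same $\ncols,\eps$ that would refute the treedepth bound also refute the modulator bound), and that the polynomial blow-up from the $(\ncols+1)\log_2 |V(G)|$ overhead really is absorbed by $\Oh^*$ since $\ncols$ is a fixed constant. Both points are immediate from the statement of \cref{thm:mod_to_tw_implies_td} and its accompanying footnote, so the corollary follows in one line.
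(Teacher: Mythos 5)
Your proposal is correct and matches the paper's proof exactly: the paper likewise derives this corollary in one line by combining \cref{thm:dtc_lower_bound_td} with \cref{thm:mod_to_tw_implies_td}, using the same observation that the $(\ncols+1)\log_2|V(G)|$ additive overhead in treedepth is absorbed by the $\Oh^*$ notation.
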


\begin{proof}
 Follows from \cref{thm:dtc_lower_bound_td} and \cref{thm:mod_to_tw_implies_td}.
\end{proof}

\section{Lower Bound for Vertex Cover}
\label{sec:vc}

In \cref{sec:vc_lb} we prove the lower bound for \VC parameterized by a modulator to pathwidth 2 and in \cref{sec:further_consequences} we show that this lower bound also implies lower bounds for \MC and \KFD. Note that \VC is just another name for \textsc{Deletion to $1$-Colorable}, but the case $\ncols = 1$ is not covered by the previous lower bounds.

\subsection{Lower Bound}
\label{sec:vc_lb}

This section is devoted to establishing the lower bound for \VC when parameterized by a modulator to pathwidth 2, i.e., \cref{thm:vc_lb}. For \VC, we do not need to convert between different bases in the running time. By additionally reducing from $\clss$-\HS instead of $\clss$-\SAT, we obtain a significantly simplified reduction that does not require the trick of Cygan et al.~\cite{CyganDLMNOPSW16}. We construct a graph so that each vertex in the modulator corresponds to an element in the universe of the $\clss$-\HS instance. We construct gadgets of pathwidth at most 2 that simulate the $\clss$-\HS constraints on the modulator. Using \cref{thm:seth_hitting_set}, this implies the desired lower bound for \VC if \SETH is true. \VC can be solved in time $\Oh^*(2^{\cw(G)})$ by an algorithm of Courcelle et al.~\cite{CourcelleMR00} and by \cref{thm:mod_to_tw_implies_td} and \cref{thm:hierarchy_cliquewidth} this implies that the obtained lower bound is tight. So, in the exceptional case of \VC, the complexity when parameterized by clique-width is already explained by the sparse setting and we do not need to consider twinclasses.

\begin{mythm}
 \label{thm:vc_lb}
 If \VC can be solved in time $\Oh^*((2 - \eps)^{|\modulator|})$ for some $\eps > 0$, where $\modulator$ is a modulator to pathwidth at most 2, then \SETH is false.
\end{mythm}

\newcommand{\central}{W}
\newcommand{\PP}[1]{{P}^{#1}}

 Suppose that we can solve \VC in time $\Oh^*((2 - \eps)^{|\modulator|})$ for some $\eps > 0$. Fix an integer $\clss$ and let $(\UU = \{u_1, \ldots, u_n\}, \family)$ be a $\clss$-\HS instance with $|\family| = \nclss$ sets of size at most $\clss$ each, and budget $t$. The elements of the $j$-th set in $\family$, $j \in [m]$, are denoted by $\{u^j_1, u^j_2, \ldots, u^j_{p_j}\}$, $p_j \leq \clss$. We will reduce $(\UU, \family)$ to a \VC instance $G = G(\UU, \family)$ with budget $\budget$, which will be defined later, and which admits  a modulator to pathwidth $2$ of size $\nvars$.
 
\subparagraph*{Construction.}
 We create an independent set of $n$ central vertices $\central = \{w_1, \ldots, w_n\}$. The vertices in $\central$ that are chosen by a vertex cover will correspond to the hitting set. For the $j$-th set $\{u^j_1, u^j_2, \ldots, u^j_{p_j}\} \in \family$, $p_j \leq \clss$, we create a triangle path, denoted by $\PP{j}$: The triangle path $\PP{j}$ consists of $p_j$ vertices $a_{s}^j$, where $s \in [p_j]$, and $2p_j + 2$ vertices $b_{s}^j$, where $s \in [2p_j + 2]$, such that $a_{s}^j, b_{2s}^j, b_{2s + 1}^j$ form a triangle for all $s \in [p_j]$ and $b_1^j, b_2^j \ldots, b_{2p_j + 2}^j$ form a path, see \cref{fig:vertex_cover}. We connect $\PP{j}$ to $\central$ by adding, for all $s \in [p_j]$, an edge between $a_{s}^j$ and $w_{s'}$, where $s' \in [n]$ so that $u_{s'} = u^j_s$. The key property of the triangle path $\PP{j}$ is that it costs more to cover $\PP{j}$ if no vertex in $N(\PP{j}) \subseteq \central$ is inside the vertex cover. Since the budget constraint will be tight, this implies that a vertex cover has to take at least one vertex in each $N(\PP{j})$.
 
 \begin{figure}[h]
  \centering
  \tikzfig{pictures/vertexcover}
  \caption{The triangle path $\PP{j}$, where the $j$-th set consists of $u_1$, $u_3$, and $u_7$.}
  \label{fig:vertex_cover}
 \end{figure}
 
 \begin{lem}
 \label{thm:vc_triangle_path}
 Let $j \in [m]$ and $\cover$ be a vertex cover of $G$, then $|\cover \cap \PP{j}| \geq 2p_j$ and:
 \begin{enumerate}
   \item If $\cover \cap N(\PP{j}) = \emptyset$, then $|\cover \cap \PP{j}| \geq 2p_j + 1$.
   \item If $|\cover \cap N(\PP{j})| \geq 1$, then there is a vertex cover $\cover'$ with $\cover \setminus \PP{j} = \cover' \setminus \PP{j}$ and $|\cover' \cap \PP{j}| \leq 2p_j$.
 \end{enumerate}
 \end{lem}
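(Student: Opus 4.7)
The plan is to exploit the structure of $\PP{j}$ as $p_j$ vertex-disjoint triangles $\{a_s^j, b_{2s}^j, b_{2s+1}^j\}$ threaded along a spine path $b_1^j - b_2^j - \cdots - b_{2p_j+2}^j$, where the only edges between $\PP{j}$ and the rest of $G$ are the $p_j$ edges from the apices $a_s^j$ to their corresponding central vertices in $\central$. The unconditional bound $|\cover \cap \PP{j}| \geq 2p_j$ is immediate: any vertex cover must contain at least two vertices of each triangle, and since the $p_j$ triangles are pairwise vertex-disjoint and entirely contained in $\PP{j}$, summing yields $2p_j$.

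For the first case, suppose $\cover \cap N(\PP{j}) = \emptyset$. Then each external edge $\{a_s^j, w\}$ must be covered from the $\PP{j}$ side, which forces $a_s^j \in \cover$ for every $s \in [p_j]$. Since no $a_s^j$ is incident to any spine-path edge, the set $\cover \cap \{b_t^j \sep t \in [2p_j+2]\}$ must be a vertex cover of the path $b_1^j - \cdots - b_{2p_j+2}^j$, and any vertex cover of a path on $2p_j + 2$ vertices has size at least $p_j + 1$. Combining these two disjoint contributions gives $|\cover \cap \PP{j}| \geq p_j + (p_j+1) = 2p_j + 1$.

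For the second case, choose $s^* \in [p_j]$ so that the central vertex $w_{s'} \in N(\PP{j})$ adjacent to $a_{s^*}^j$ lies in $\cover$ (such $s^*$ exists by hypothesis). Define $\cover'$ by $\cover' \setminus \PP{j} = \cover \setminus \PP{j}$ and
\[
  \cover' \cap \PP{j} = \{a_s^j \sep s \in [p_j] \setminus \{s^*\}\} \cup \{b_{2s^*}^j, b_{2s^*+1}^j\} \cup X_1 \cup X_2,
\]
where $X_1$ and $X_2$ are minimum vertex covers of the sub-paths $b_1^j - \cdots - b_{2s^*-1}^j$ and $b_{2s^*+2}^j - \cdots - b_{2p_j+2}^j$, of sizes $s^*-1$ and $p_j - s^*$ respectively. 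One then verifies: external edges are covered (by $a_s^j$ for $s \neq s^*$, and by $w_{s'}$ for $s = s^*$); triangle edges are covered (by $a_s^j$ for $s \neq s^*$, and by both $b_{2s^*}^j$ and $b_{2s^*+1}^j$ for triangle $s^*$); and every spine-path edge is either incident to one of $b_{2s^*}^j, b_{2s^*+1}^j$ or lies entirely within one of the two sub-paths and is thus covered by $X_1 \cup X_2$. The resulting cardinality is $(p_j-1) + 2 + (s^*-1) + (p_j-s^*) = 2p_j$.

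The main subtlety is checking tightness between the two cases: the single-vertex savings in the second case comes precisely from being able to drop $a_{s^*}^j$ at the cost of including both of its triangle partners $b_{2s^*}^j, b_{2s^*+1}^j$, which fortuitously shortens the residual spine-path covering cost by exactly one, matching the improvement in the first case.
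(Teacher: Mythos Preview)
Your proof is correct and follows essentially the same approach as the paper: disjoint triangles give the $2p_j$ bound, forced apices plus a path cover give $2p_j+1$ in case~1, and in case~2 you drop $a_{s^*}^j$, pick up both $b_{2s^*}^j,b_{2s^*+1}^j$, and cover the two residual spine sub-paths optimally. The paper makes the same choice but writes out an explicit selection of $b$-vertices (even indices up to $2s^*$, odd indices from $2s^*+1$ on) rather than invoking abstract minimum path covers $X_1,X_2$; the arithmetic and the verification are identical.
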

 
 \begin{proof}
  Let $\cover$ be a vertex cover of $G$. Consider some $\PP{j}$ and notice that the triangles $\{a^j_{s}, b^j_{2s}, b^j_{2s + 1}\}$, $s \in [p_j]$, are vertex-disjoint and any vertex cover has to contain at least 2 vertices in each triangle. Hence, $|\cover \cap \PP{j}| \geq 2p_j$ for all $j \in [m]$.
  
  Now, suppose that $\cover \cap N(\PP{j}) = \emptyset$ for some $j \in [m]$. By assumption, the $p_j$ edges between $\PP{j}$ and $\central$ are not covered by $\cover \cap \central$ and hence $\cover$ must contain $\{a^j_{1} , a^j_{2}, \ldots, a^j_{p_j}\}$. Hence, $\cover$ contains all $a$-vertices of $\PP{j}$ and after removing these from $\PP{j}$ a path on $2p_j + 2$ vertices remains. Therefore, $\cover$ has to contain at least $p_j + 1$ vertices of this path. In total, $\cover$ contains at least $2p_j + 1$ vertices of $\PP{j}$. 
  
  Lastly, suppose that $\cover$ is a vertex cover with $|\cover \cap N(\PP{j})| \geq 1$ for some $j \in [m]$. Let $s^* \in [p_j]$ be the smallest integer such that the neighbor in $\central$ of $a^j_{s^*}$ belongs to $\cover$. We define $\cover'_{P} = \{a^j_{s} \sep s \in [p_j] \setminus \{s^*\} \} \cup \{b^j_{2s} \sep s \in [s^*]\} \cup \{b^j_{2s + 1} \sep s \in [p_j] \setminus [s^* - 1]\}$ and claim that $\cover'_P$ is a vertex cover of $G[\PP{j}]$. Notice that $\cover'_P$ contains $2$ vertices in each triangle $\{a^j_{s}, b^j_{2s}, b^j_{2s + 1}\}$, $s \in [p_j]$, and the edges $\{b^j_{s}, b^j_{s+1}\}$ are covered by the $b^j$ with an even subscript if $s \leq 2s^*$ and otherwise by the $b^j$ with an odd subscript. Hence, $\cover'_P$ is a vertex cover of $G[\PP{j}]$. We define $\cover' = (\cover \setminus \PP{j}) \cup \cover'_P$. All edges between $\PP{j}$ and $\central$ are covered by $\cover' \cap \PP{j}$ except the edge incident to $a^j_{s^*}$ which is covered by $\cover' \cap \central$ by assumption. Therefore, $\cover'$ is a vertex cover of $G$ with $\cover \setminus \PP{j} = \cover' \setminus \PP{j}$ and $|\cover' \cap \PP{j}| = |\cover'_P| = (p_j - 1) + s^* + (p_j + 1 - s^*) = 2p_j$.
 \end{proof}
 
 \begin{lem}
  \label{thm:vc_reduction_correct}
  Let $\clss \in \NN$ and $(\UU, \family)$ be a $\clss$-\HS instance with $|\UU| = \nvars$ and $|\family| = \nclss$. There is a hitting set for $(\UU, \family)$ of size at most $\hsbudget$ if and only if $G = G(\UU, \family)$ has a vertex cover of size at most $\budget = \hsbudget + 2 \sum_{j \in [m]} p_j$.
 \end{lem}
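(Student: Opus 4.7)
The plan is a direct two-direction argument where all the combinatorial content has been isolated into \cref{thm:vc_triangle_path}; the present lemma is essentially an accounting statement on top of it.

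For the forward direction, given a hitting set $H \subseteq \UU$ with $|H| \leq \hsbudget$, I would start with the (wasteful) vertex cover $\cover_0 = \{w_s : u_s \in H\} \cup \bigcup_{j \in [m]} V(\PP{j})$. Clearly $\cover_0$ covers all edges of $G$, and the central part contributes $|H|$ vertices. For each $j \in [m]$, the hitting property guarantees that there is some $s^* \in [p_j]$ with $u^j_{s^*} \in H$, so the central neighbor of $a^j_{s^*}$ lies in $\cover_0 \cap N(\PP{j})$. Therefore the second item of \cref{thm:vc_triangle_path} applies, and we can locally replace $\cover_0 \cap V(\PP{j})$ by a smaller set of size exactly $2p_j$ without destroying the vertex-cover property outside of $\PP{j}$. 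Applying this replacement once per $j$ (the modifications are disjoint since each triangle path is touched independently) yields a vertex cover $\cover$ of size $|H| + 2\sum_{j} p_j \leq \hsbudget + 2 \sum_j p_j = \budget$.

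For the backward direction, given a vertex cover $\cover$ with $|\cover| \leq \budget$, the natural candidate hitting set is $H_0 = \{u_s : w_s \in \cover \cap \central\}$, and the task is to show that $H_0$ can be extended to a hitting set of size at most $\hsbudget$. Let $J = \{j \in [m] : H_0 \text{ does not hit the } j\text{-th set}\} = \{j : \cover \cap N(\PP{j}) = \emptyset\}$. By \cref{thm:vc_triangle_path}, $|\cover \cap \PP{j}| \geq 2p_j + 1$ for $j \in J$ and $|\cover \cap \PP{j}| \geq 2p_j$ for $j \notin J$, so
\begin{equation*}
    \budget \geq |\cover| \geq |\cover \cap \central| + \sum_{j \in [m]} |\cover \cap \PP{j}| \geq |H_0| + 2\sum_{j} p_j + |J|,
\end{equation*}
which rearranges to $|H_0| + |J| \leq \hsbudget$. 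Now pick, for each $j \in J$, an arbitrary element of the $j$-th set and add it to $H_0$ to obtain $H$. By construction $H$ hits every set, and $|H| \leq |H_0| + |J| \leq \hsbudget$.

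The only part that requires a bit of care is making sure that the ``wasteful'' starting cover $\cover_0$ in the forward direction really is a vertex cover before invoking \cref{thm:vc_triangle_path} (trivial since $\cover_0$ contains every endpoint of every edge except those purely inside $\central$, and $\central$ is independent) and that the replacements for different $\PP{j}$ do not interfere (also trivial, since the triangle paths are vertex-disjoint and only meet $\central$, and the replacement leaves $\cover \setminus \PP{j}$ fixed). I do not expect any serious obstacle; the lemma is essentially a corollary of \cref{thm:vc_triangle_path} together with the fact that the budget is set exactly to $2 \sum_j p_j$ above $\hsbudget$, making the inequality $|H_0| + |J| \leq \hsbudget$ tight in the right places.
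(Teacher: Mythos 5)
Your forward direction is the same as the paper's. Your backward direction is correct but takes a genuinely different and, in my view, cleaner route. The paper defines $J_\cover = \{j : |\cover \cap \PP{j}| > 2p_j\}$ (indices where the triangle-path budget is exceeded), observes $|J_\cover| + |\cover \cap \central| \leq \hsbudget$, and then runs an iterative ``repair'' loop: as long as some $j \in J_\cover$ remains, add one vertex of $N(\PP{j})$ to the cover and apply the second part of \cref{thm:vc_triangle_path} to shrink $\cover \cap \PP{j}$ back to $2p_j$, arguing that the total size does not increase and $J_\cover$ strictly shrinks; only once $J_\cover = \emptyset$ is the hitting set read off. You instead define $J = \{j : \cover \cap N(\PP{j}) = \emptyset\}$, i.e., directly the sets that $H_0$ fails to hit, and note that part 1 of \cref{thm:vc_triangle_path} charges each $j \in J$ at least $2p_j+1$ vertices, giving $|H_0| + |J| \leq \hsbudget$ in one inequality; you then simply add one hitting element per $j \in J$. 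This avoids the iteration and the verification that the modified cover stays within budget at each step, and extracts the hitting set from the given cover without modifying the cover at all. Both proofs rest on exactly the same two parts of \cref{thm:vc_triangle_path}; yours trades the paper's exchange-style argument for a purely arithmetic one. No gaps: the identity $J = \{j : H_0 \text{ misses the } j\text{-th set}\}$, the disjointness used in the budget inequality, and the extension step are all correct as stated.
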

 
 \begin{proof}
 Let $H$ be a hitting set of $(\UU, \family)$ of size $|H| \leq \hsbudget$. Let $\cover_\central = \{w_i \in \central \sep u_i \in H\}$ and consider $\cover = \cover_\central \cup \bigcup_{j = 1}^m \PP{j}$. The set $\cover$ is certainly a vertex cover of $G$ and by applying the second part of  \cref{thm:vc_triangle_path} for every $j \in [m]$, which is possible since $H$ was a hitting set, we obtain a vertex cover $\cover'$ with $\cover' \cap \central = \cover_\central$ and $|\cover' \cap \PP{j}| \leq 2p_j$ for all $j \in [m]$. Hence, $\cover'$ is a vertex cover of $G$ of size at most $\hsbudget + 2 \sum_{j \in [m]} p_j$.
 
 For the other direction, let $\cover$ be a vertex cover of $G$ of size at most $\hsbudget + 2 \sum_{j \in [m]} p_j$. Due to \cref{thm:vc_triangle_path}, we have that $|\cover \cap \PP{j}| \geq 2p_j$ for all $j \in [m]$. Define $J_\cover = \{j \in [m] \sep |\cover \cap \PP{j}| > 2p_j\}$ and $\cover_\central = \cover \cap \central$. The size constraint implies that $|J_\cover| + |\cover_\central| \leq \hsbudget$. If $J_\cover = \emptyset$, then by the first part of \cref{thm:vc_triangle_path} we must have that $\cover \cap N(\PP{j}) \neq \emptyset$ for all $j \in [m]$. By the construction of $G$ this means that $H = \{u_i \in \UU \sep w_i \in \cover_\central\}$ must be a hitting set for $(\UU, \family)$ of size at most $\hsbudget$.
 
 If $J_\cover \neq \emptyset$, then take some $j \in J_\cover$ and some $w_i \in N(\PP{j})$ and construct the vertex cover $\cover' = \cover \cup \{w_i\}$. By the second part of \cref{thm:vc_triangle_path}, there is a vertex cover $\cover''$ with $\cover'' \setminus \PP{j} = \cover' \setminus \PP{j}$ and $|\cover'' \cap \PP{j}| \leq 2p_j$. This implies that $|\cover''| = |\cover| + 1 + |\cover'' \cap \PP{j}| - |\cover \cap \PP{j}| \leq |\cover| + 1 + 2p_j - (2p_j + 1) = |\cover|$ and $J_{\cover''} = J_\cover \setminus \{j\}$. We replace $\cover$ by $\cover''$ and repeat this argument until $J_\cover = \emptyset$ and then obtain the desired hitting set by the previous argument.
 \end{proof} 
 
 \begin{lem}
  \label{thm:vc_param_bound}
  It holds that $\pw(\PP{j}) = \tw(\PP{j}) = 2$ for all $j \in [m]$.
 \end{lem}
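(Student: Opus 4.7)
The plan is to show both inequalities by sandwiching. Since $\pw(P^j) \geq \tw(P^j)$ always holds, it suffices to establish $\tw(P^j) \geq 2$ on the one hand, and $\pw(P^j) \leq 2$ on the other.

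For the lower bound, observe that $P^j$ contains the triangle on $\{a^j_s, b^j_{2s}, b^j_{2s+1}\}$ for any $s \in [p_j]$ (note that $p_j \geq 1$ since we may assume sets are nonempty). Since treewidth is monotone under taking subgraphs, we get $\tw(P^j) \geq \tw(K_3) = 2$, and hence $\pw(P^j) \geq \tw(P^j) \geq 2$.

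For the upper bound on pathwidth, I will construct an explicit path decomposition of width $2$ by sweeping left to right along the path $b^j_1, b^j_2, \ldots, b^j_{2p_j+2}$. Concretely, use the sequence of bags
\begin{equation*}
  \bag_0 = \{b^j_1, b^j_2\}, \quad \bag_s^{\triangle} = \{a^j_s, b^j_{2s}, b^j_{2s+1}\} \text{ for } s \in [p_j], \quad \bag_s^{=} = \{b^j_{2s+1}, b^j_{2s+2}\} \text{ for } s \in [p_j],
\end{equation*}
arranged in the linear order $\bag_0, \bag_1^{\triangle}, \bag_1^{=}, \bag_2^{\triangle}, \bag_2^{=}, \ldots, \bag_{p_j}^{\triangle}, \bag_{p_j}^{=}$. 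Each triangle bag $\bag_s^{\triangle}$ covers the three edges of the triangle on $\{a^j_s, b^j_{2s}, b^j_{2s+1}\}$ (including the path edge $\{b^j_{2s}, b^j_{2s+1}\}$), while $\bag_s^{=}$ covers the path edge $\{b^j_{2s+1}, b^j_{2s+2}\}$ connecting two consecutive triangles, and $\bag_0$ covers the initial path edge $\{b^j_1, b^j_2\}$. The running intersection property holds because each $a^j_s$ appears only in the single bag $\bag_s^{\triangle}$, and each $b^j_t$ appears in a contiguous block of at most two consecutive bags (namely the two bags naturally straddling its position in the sequence). All bags have size at most $3$, so the decomposition has width $2$, giving $\pw(P^j) \leq 2$.

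The main obstacle here is essentially none: this is a routine sanity check verifying that the gadget introduced in the construction has the stated pathwidth. The only thing to be mildly careful about is the boundary behavior at the two ends of the path of $b^j$-vertices, which is handled by the dedicated bags $\bag_0$ and $\bag_{p_j}^{=}$. Combining the two bounds yields $\pw(P^j) = \tw(P^j) = 2$, completing the proof.
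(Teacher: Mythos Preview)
Your proof is correct and essentially identical to the paper's. The path decomposition you construct coincides, bag for bag, with the one in the paper (your $\bag_0$ and $\bag_s^{=}$ are exactly the paper's odd-indexed bags $\{b^j_{2s-1}, b^j_{2s}\}$, and your $\bag_s^{\triangle}$ are its even-indexed bags), and your lower-bound argument via the contained triangle is the same observation as the paper's remark that $\PP{j}$ is not a forest.
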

 
  \begin{proof}
 We construct a path decomposition $\TT$ for $\PP{j}$ of width $2$. Let $\TT$ be a path on $2p_j + 1$ vertices $t_1, t_2, \ldots t_{2p_j + 1}$. For $s \in [p_j + 1]$, define the bag $\bag_{t_{2s - 1}} = \{b_{2s-1}^j, b_{2s}^j\}$, and for $s \in [p_j]$ define bags $\bag_{t_{2s}} = \{a_s^j, b_{2s}^j, b_{2s + 1}^j\}$. Using these bags, $\TT$ is a path decomposition of width $2$ for $\PP{j}$. Since $\PP{j}$ is neither a linear forest nor a forest, we have that $\pw(\PP{j}) \geq 2$ and $\tw(\PP{j}) \geq 2$.
 \end{proof} 
 
 \begin{proof}[Proof of \cref{thm:vc_lb}]
  Suppose that \textsc{Vertex Cover} can be solved in time $\Oh^*((2 - \eps)^{|\modulator|})$ for some $\eps > 0$, where $\modulator$ is a modulator to pathwidth at most 2. We argue that we can then solve $\clss$-\HS in time $\Oh^*((2 - \eps)^n)$ for all $\clss$, thus violating \SETH by \cref{thm:seth_hitting_set}. 
  
  Fix an arbitrary integer $\clss \geq 1$ and a $\clss$-\HS instance $(\UU, \family)$ with $|\UU| = n$ and $|\family| = m$. From $(\UU, \family)$ construct the graph $G = G(\UU, \family) = (V, E)$ as described above in polynomial time. Let $\modulator = \central$, then $G - \modulator$ is the disjoint union of the $\PP{j}$, $j \in [m]$, and by \cref{thm:vc_param_bound} the pathwidth of $G - \modulator$ is 2. So, $\modulator$ is a modulator to pathwidth at most $2$ of size $|\modulator| = |\central| = n$. Running the \textsc{Vertex Cover} algorithm on $G$ with budget $\budget = \hsbudget + 2 \sum_{j \in [m]} p_j$ and modulator $\modulator$ solves $\clss$-\HS by \cref{thm:vc_reduction_correct}. As the size of $G$ is polynomial in $|\UU| = n$ and $|\family| = m$, the resulting running time is $\Oh^*((2 - \eps)^n)$ and hence \SETH must be false.  
 \end{proof}

\begin{cor}
 If \textsc{Vertex Cover} can be solved in time $\Oh^*((2 - \eps)^{\td(G)})$ for some $\eps > 0$, then \SETH is false. 
\end{cor}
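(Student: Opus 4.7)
The plan is to derive this corollary directly from the preceding \cref{thm:vc_lb}, which rules out time $\Oh^*((2-\eps)^{|\modulator|})$ for \VC where $\modulator$ is a modulator to pathwidth at most $2$. Since $\pw(G - \modulator) \leq 2$ implies $\tw(G - \modulator) \leq 2$, the hypothesis of \cref{thm:mod_to_tw_implies_td} is satisfied with $\ncols = 2$, so on the graphs $G$ produced by the reduction we have $\td(G) \leq |\modulator| + 3\log_2 |V(G)|$ and, consequently, $\Oh^*(c^{\td(G)}) = \Oh^*(c^{|\modulator|})$ for every constant $c \geq 1$.

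Concretely, I would assume for contradiction that \VC admits an algorithm running in time $\Oh^*((2-\eps)^{\td(G)})$ for some $\eps > 0$. Given a $\clss$-\HS instance, I would construct the graph $G = G(\UU, \family)$ from the proof of \cref{thm:vc_lb} together with its modulator $\modulator$ of size $n$ to pathwidth $2$. Using the algorithm from \cref{thm:mod_to_tw_implies_td} (which transforms a modulator decomposition into an elimination forest in polynomial time), I would produce an elimination forest of $G$ of depth at most $|\modulator| + 3 \log_2 |V(G)|$, then invoke the hypothesized treedepth algorithm. The resulting running time is
\begin{equation*}
  \Oh^*\bigl((2-\eps)^{|\modulator| + 3\log_2 |V(G)|}\bigr) = \Oh^*\bigl((2-\eps)^{|\modulator|}\bigr),
\end{equation*}
since $|V(G)|$ is polynomial in the input size. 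By \cref{thm:vc_reduction_correct}, this decides the $\clss$-\HS instance in time $\Oh^*((2-\eps)^n)$, and the argument at the end of the proof of \cref{thm:vc_lb} then contradicts \SETH via \cref{thm:seth_hitting_set}.

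There is no real obstacle here: the entire content lies in combining \cref{thm:vc_lb} with \cref{thm:mod_to_tw_implies_td}, and the only thing to verify is that the reduction in \cref{thm:vc_lb} produces a graph of polynomial size (which it does, as $G$ consists of the $n$ central vertices together with $\Oh(\clss m)$ triangle-path vertices). Hence the proof can be stated in essentially one sentence invoking these two results.
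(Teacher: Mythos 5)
Your proposal is correct and takes exactly the same route as the paper, which simply cites \cref{thm:vc_lb} together with \cref{thm:mod_to_tw_implies_td}; you have merely unfolded that one-line argument into its constituent steps. All the details you supply (pathwidth-$2$ modulator implies treewidth-$2$ modulator, polynomial size of $G(\UU,\family)$, absorbing the $3\log_2|V(G)|$ additive term into the $\Oh^*$) are accurate and match the paper's intent.
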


\begin{proof}
  Follows from \cref{thm:vc_lb} and \cref{thm:mod_to_tw_implies_td}
\end{proof}

\subsection{Further Consequences}
\label{sec:further_consequences}

\begin{cor}
 \label{thm:mc_lb}
 If \MC can be solved in time $\Oh^*((2 - \eps)^{|\modulator|})$ for some $\eps > 0$, where $\modulator$ is a modulator to treewidth at most 2, then \SETH is false.
\end{cor}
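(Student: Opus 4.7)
The plan is to reduce from the Vertex Cover instances produced by \cref{thm:vc_lb} (which have a modulator $\modulator$ of size $n$ to pathwidth~$2$) to Max Cut, increasing the modulator only by a single apex vertex. Given such a VC instance $(G,k)$, I would construct $G'$ by adding one new vertex $z^*$ and, for each edge $e=\{u,v\}\in E(G)$, a private ``corner'' vertex $x_e$; on top of the edges of $G$ I add, for every $v\in V(G)$, the edge $\{z^*,v\}$, and for every $e=\{u,v\}\in E(G)$ the three edges $\{u,x_e\},\{v,x_e\},\{z^*,x_e\}$. The Max Cut threshold will be $B=3m+|V(G)|-k$, where $m=|E(G)|$.

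Correctness will follow from a short local analysis. Fixing WLOG $z^*\in A$, the apex edges $\{z^*,v\}$ contribute exactly $|V(G)|-|A\cap V(G)|$ to the cut. For each ``edge gadget'' $\{u,v\},\{u,x_e\},\{v,x_e\},\{z^*,x_e\}$, a case distinction on the sides of $u,v$ shows that the optimal placement of $x_e$ yields contribution $3$ when $u\in A$ or $v\in A$ (i.e., $e$ is ``covered'') and only $2$ when $u,v\in B$. Summing, the max cut equals $2m+g(A\cap V(G))$ where $g(S)=|V(G)|-|S|+|\{e\in E(G):V(e)\cap S\neq\emptyset\}|$. A standard exchange argument (moving a non-covering vertex into $S$ does not increase $|S|+|E(G-S)|$) yields $\max_S g(S)=|V(G)|+m-k^*$, where $k^*$ is the minimum vertex cover size of $G$, so $\mathrm{MC}(G')\geq B$ iff $k^*\leq k$.

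The parameter bound is then immediate: $\modulator'=\modulator\cup\{z^*\}$ is a modulator of $G'$ of size $|\modulator|+1$, and $G'-\modulator'$ is obtained from $G-\modulator$ (pathwidth $2$) by adding, for each edge $e\in E(G)$, either a ``triangle corner'' $x_e$ (when both endpoints of $e$ lie outside $\modulator$) or a pendant $x_e$ (otherwise). Both operations preserve $\tw\leq 2$, since any tree decomposition of $G-\modulator$ of width $2$ can be augmented by one extra leaf bag per new $x_e$. Running the hypothetical $\Oh^*((2-\eps)^{|\modulator'|})$ algorithm for Max Cut on $G'$ would then decide VC on $G$ in time $\Oh^*((2-\eps)^{|\modulator|+1})=\Oh^*((2-\eps)^{|\modulator|})$, contradicting \cref{thm:vc_lb} and hence \SETH.

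The only non-routine step is the max-cut identity $\max_S g(S)=|V(G)|+m-k^*$, and I expect this to be the main obstacle; the gadget-level case analysis and the treewidth bookkeeping are both straightforward. Once the identity is in hand, the reduction is a clean one-shot construction that inherits polynomial-time computability and parameter-preservation directly from the VC-instance side.
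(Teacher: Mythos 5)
Your reduction is correct, but it is genuinely different from the paper's. The paper invokes the classic Garey--Johnson--Stockmeyer VC-to-MC reduction: it deletes the original edges of $G$, adds two gadget vertices $e_u,e_v$ per edge $e=\{u,v\}$ with edges $\{x,e_u\},\{x,e_v\},\{e_u,e_v\},\{e_u,u\},\{e_v,v\}$, and an apex $x$ joined to $V$. After removing the augmented modulator $\modulator\cup\{x\}$, the leftover graph is exactly a double subdivision of $G-\modulator$, so the treewidth bound is immediate from the standard fact that subdivision preserves treewidth. You instead keep the original edges, add a single triangle-corner $x_e$ per edge (adjacent to $u$, $v$, and the apex $z^*$), and prove the equivalent identity $\mathrm{MC}(G')=3m+|V(G)|-k^*$ via a direct case analysis and the exchange argument $\min_S\bigl(|S|+|E(G-S)|\bigr)=k^*$; your treewidth argument then needs the slightly less black-box observation that a leaf bag $\{u,v,x_e\}$ can be grafted onto any bag containing $\{u,v\}$. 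Both constructions grow the modulator by exactly one vertex and keep the remainder at treewidth $2$, so either yields the stated corollary; the paper's route is perhaps a line shorter on the treewidth bookkeeping because it can cite the subdivision fact, while yours is self-contained and uses one gadget vertex per edge instead of two.

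One minor presentation point: you should state the middle identity $\min_S\bigl(|S|+|E(G-S)|\bigr)=k^*$ explicitly (take any $S$, adjoin one endpoint of each uncovered edge to obtain a vertex cover of size at most $|S|+|E(G-S)|$; conversely any vertex cover $S$ has $|E(G-S)|=0$), since you flag it as the main obstacle — it is in fact routine, and making it explicit would close the only open step.
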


\begin{proof}
 We use a well-known reduction from \VC to \MC by Garey et al.~\cite{GareyJS76}. Let $G = (V, E)$ be a graph. We construct another graph $G' = (V', E')$ as follows: The vertex set is given by $V' = V \cup \{x\} \cup \{e_u, e_v \sep e = \{u,v\} \in E\}$. The vertex $x$ is adjacent to all $v \in V$ and for every edge $e = \{u, v\} \in E$, we add the five edges $\{x, e_u\}$, $\{x, e_v\}$, $\{e_u, e_v\}$, $\{e_u, u\}$, and $\{e_v, v\}$ to $E'$. Now, $G$ contains a vertex cover of size at most $|V(G)| - \budget$ if and only if $G'$ contains a cut of size at least $4|E(G)| + \budget$. 
 
 Let $\modulator$ be a modulator to treewidth at most 2 for $G$, i.e., $\tw(G - \modulator) \leq 2$. We claim that $\modulator' = \modulator \cup \{x\}$ is a modulator to treewidth at most 2 for $G'$, i.e., $\tw(G' - \modulator') \leq 2$. After removing $\modulator'$ only edges of the form $\{e_u, e_v\}$, $\{e_u, u\}$, and $\{e_v, v\}$ remain. Notice that $G' - \modulator'$ can be obtained from $G - \modulator$ by subdividing each edge twice, i.e., replacing each edge by a path of length 3. It is well-known that subdividing edges does not affect the treewidth, hence we have that $\tw(G' - \modulator') = \tw(G - \modulator) \leq 2$. 
 
 If we can solve \MC in time $\Oh^*((2 - \eps)^{|\modulator|})$ for some $\eps > 0$, where $\modulator$ is a modulator to treewidth at most 2, then we can also solve \VC in time $\Oh^*((2-\eps)^{|\modulator|})$ by using the discussed polynomial-time reduction and noting that any modulator to pathwidth at most 2 is also a modulator to treewidth at most 2. Hence, \SETH must be false by \cref{thm:vc_lb}.
\end{proof}

\begin{cor}
 If \MC can be solved in time $\Oh^*((2 - \eps)^{\td(G)})$ for some $\eps > 0$, then \SETH is false.
\end{cor}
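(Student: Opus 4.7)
The plan is to chain the two-step deduction that was already used to derive the analogous treedepth corollary for \VC: reduce a hypothetical treedepth-parameterized algorithm for \MC to a modulator-parameterized algorithm, and then invoke \cref{thm:mc_lb}. Specifically, I would assume an algorithm solving \MC in time $\Oh^*((2-\eps)^{\td(G)})$ for some $\eps>0$ and apply it to any instance that comes equipped with a modulator $\modulator$ to treewidth at most $2$.

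By \cref{thm:mod_to_tw_implies_td}, instantiated with $r=2$ and $c = 2-\eps \geq 1$ (we may assume $\eps < 1$, since otherwise the claim is vacuous), such an instance satisfies $\Oh^*((2-\eps)^{\td(G)}) = \Oh^*((2-\eps)^{|\modulator|})$, and the required elimination forest can be obtained in polynomial time from a tree decomposition of $G - \modulator$ of width at most $2$. Thus the hypothetical treedepth algorithm also yields an $\Oh^*((2-\eps)^{|\modulator|})$-time algorithm for \MC parameterized by a modulator to treewidth at most $2$. By \cref{thm:mc_lb}, this contradicts \SETH, completing the deduction.

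The only potential subtlety is ensuring that an appropriate treedepth decomposition is available as required by the hypothetical algorithm, but this is handled in full generality by \cref{thm:mod_to_tw_implies_td}, which explicitly transforms a modulator-based decomposition into an elimination forest of the claimed depth in polynomial time. There is no additional obstacle: the corollary is a one-line consequence of \cref{thm:mc_lb} and \cref{thm:mod_to_tw_implies_td}, exactly parallel to the \VC treedepth corollary proved earlier.

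\begin{proof}
Follows from \cref{thm:mc_lb} and \cref{thm:mod_to_tw_implies_td}.
\end{proof}
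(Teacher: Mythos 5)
Your proof is correct and matches the paper's own argument exactly: both combine the modulator lower bound for \MC (\cref{thm:mc_lb}) with \cref{thm:mod_to_tw_implies_td} to lift the lower bound to treedepth, and your closing one-line proof is verbatim what the paper writes. The extra elaboration about obtaining the elimination forest in polynomial time and the harmless assumption $\eps < 1$ is accurate but was not strictly necessary, since the paper treats this as an immediate corollary.
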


\begin{proof}
 Follows from \cref{thm:mc_lb} and \cref{thm:mod_to_tw_implies_td}.
\end{proof}

\begin{cor}
 \label{thm:clique_del_lb}
  Let $r \geq 3$. If \KFD can be solved in time $\Oh^*((2 - \eps)^{|\modulator|})$ for some $\eps > 0$, where $\modulator$ is a modulator to treewidth at most $r - 1$, then \SETH is false.
\end{cor}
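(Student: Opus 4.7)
The plan is to reduce \VC parameterized by a modulator to pathwidth 2 (\cref{thm:vc_lb}) to \KFD parameterized by a modulator to treewidth $r - 1$, preserving both the parameter value and the solution budget.

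Given a \VC instance $(G, \budget)$ together with a modulator $\modulator$ to pathwidth 2 as produced by the construction in the proof of \cref{thm:vc_lb}, I will build $G'$ by introducing, for each edge $e = \{u,v\} \in E(G)$, a fresh set $W_e$ of $r - 2$ private new vertices and adding all edges needed to turn $\{u,v\} \cup W_e$ into a $K_r$. I then set $\modulator' := \modulator$ and leave the budget $\budget$ unchanged. The equivalence of the two instances hinges on the observation that every $K_r$ in $G'$ either already lies in $G$ or contains both endpoints of some edge $e = \{u,v\} \in E(G)$, because each $W_e$ is a private clique of size only $r - 2 < r$ and has no edges to other $W_{e'}$. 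Thus a vertex cover of $G$ is immediately a $K_r$-free deletion set of $G'$; and conversely, since each $w \in W_e$ participates in exactly one $K_r$ of $G'$, an exchange argument lets us replace any $w \in T \cap W_e$ by $u$ or $v$ without increasing $|T|$, yielding a solution $T \subseteq V(G)$ that covers every edge of $G$. For $r = 3$ the graph $G$ contains triangles inside the triangle paths $\PP{j}$, but these are automatically hit by any vertex cover; for $r \geq 4$, $G$ is $K_r$-free, so no additional handling is needed.

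The next step is to bound $\tw(G' - \modulator') \leq r - 1$. I will start from a path decomposition of $G - \modulator$ of width 2 and, for each edge $e = \{u,v\}$ of $G$ with both endpoints outside $\modulator$, attach a new leaf bag $\{u,v\} \cup W_e$ of size $r$ to some bag containing both $u$ and $v$; for each edge $e = \{u,v\}$ with $u \notin \modulator$ and $v \in \modulator$, attach a leaf bag $\{u\} \cup W_e$ of size $r - 1$ to some bag containing $u$. Since $\modulator = \central$ is independent in the \VC construction, there are no edges with both endpoints in $\modulator$ to worry about. The resulting tree decomposition of $G' - \modulator'$ has width $\max(r - 1, 2) = r - 1$ for $r \geq 3$.

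Chaining the reduction with the hypothetical $\Oh^*((2-\eps)^{|\modulator'|})$-time algorithm for \KFD then yields an $\Oh^*((2-\eps)^{|\modulator|})$-time algorithm for \VC parameterized by a modulator to pathwidth 2, contradicting \cref{thm:vc_lb} and hence \SETH. The main thing to be careful about is the special case $r = 3$, where $G$ is not $K_r$-free; everything else (the exchange argument, the treewidth bound, the polynomial-time construction of $G'$) is routine.
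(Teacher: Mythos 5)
Your proposal is correct and follows essentially the same route as the paper: replace each edge of the \VC instance by a fresh $K_r$, keep $\modulator$ and the budget unchanged, argue correctness by an exchange argument pushing deletions back onto $V(G)$, and bound $\tw(G' - \modulator)$ by attaching one leaf bag per new clique to a width-$2$ decomposition of $G - \modulator$. You are in fact a bit more careful than the paper's write-up in treating edges with exactly one endpoint in $\modulator$ (the paper only mentions edges of $E(G - \modulator)$) and in flagging the $r = 3$ case where $G$ itself contains triangles, but both points are, as you note, easily absorbed, and the core argument is identical.
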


\begin{proof} 
 Let $G = (V, E)$ be a \VC instance. We construct $G' = (V', E')$ by using a classical trick and replacing each edge of $G$ by a clique on $r$ vertices which is just like the deletion edges for the \DTC lower bounds. It can be seen that $G$ has a vertex cover of size at most $\budget$ if and only if there is a set $\cover \subseteq V'$, $|\cover| \leq \budget$, such that $G' - \cover$ is $K_r$-free. 
 
 If $\modulator$ is a modulator to treewidth at most 2 for $G$, i.e., $\tw(G - \modulator) \leq 2$, then we claim that $\modulator$ is also a modulator to treewidth at most $r - 1$ for $G'$. Let $\TT$ be a tree decomposition of width 2 for $G - \modulator$ with bags $\bag_t$, $t \in V(\TT)$. We construct a tree decomposition $\TT'$ of width $r - 1$ for $G' - \modulator$ as follows. We start with the tree decomposition $\TT$ of $G - \modulator$. For every edge $e \in E(G - \modulator)$ there is a $t \in V(\TT)$ such that $e \subseteq \bag_{t}$ and to $t$ we attach a degree-1 node $t^*$ and the bag of $t^*$ contains the $r$-clique that replaced edge $e$. Hence, every edge of $G' - \modulator$ is captured in some bag of $\TT'$ and $\TT'$ is a tree decomposition of width $r - 1$.
 
 Since the construction of $G'$ can be performed in polynomial time and since the size of the modulator does not change, the result follows by \cref{thm:vc_lb}.
\end{proof}

\begin{cor}
 \label{thm:clique_del_td_lb}
 Let $r \geq 3$. If \textsc{$K_r$-free Deletion} can be solved in time $\Oh^*((2 - \eps)^{\td(G)})$ for some $\eps > 0$, then \SETH is false.
\end{cor}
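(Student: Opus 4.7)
The plan is to chain together the two results that the paper has already set up: the modulator-based lower bound for \KFD (\cref{thm:clique_del_lb}) and the general relationship between modulator size and treedepth (\cref{thm:mod_to_tw_implies_td}). Since the statement is phrased as a corollary and the corresponding corollaries for \VC, \MC, and \DTC in the same paper follow exactly this pattern, I expect essentially a one-line deduction.

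In more detail, I would argue as follows. Fix $r \geq 3$ and suppose for contradiction that there is some $\eps > 0$ and an algorithm $\algo$ solving \KFD in time $\Oh^*((2-\eps)^{\td(G)})$. By \cref{thm:mod_to_tw_implies_td}, for any modulator $\modulator \subseteq V(G)$ with $\tw(G - \modulator) \leq r - 1$ we have $\td(G) \leq |\modulator| + r \log_2 |V(G)|$, and therefore
\begin{equation*}
  \Oh^*\bigl((2-\eps)^{\td(G)}\bigr) \;\leq\; \Oh^*\bigl((2-\eps)^{|\modulator|} \cdot |V(G)|^{r \log_2(2-\eps)}\bigr) \;=\; \Oh^*\bigl((2-\eps)^{|\modulator|}\bigr),
\end{equation*}
since the extra polynomial factor is absorbed by the $\Oh^*$-notation. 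Hence $\algo$ would also solve \KFD in time $\Oh^*((2-\eps)^{|\modulator|})$ whenever a modulator $\modulator$ to treewidth at most $r-1$ is provided, which is precisely the regime ruled out by \cref{thm:clique_del_lb} under \SETH.

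The only subtlety I would watch out for is making sure that the algorithmic consequence of \cref{thm:mod_to_tw_implies_td} applies here: the corollary there is stated for an arbitrary constant base $c \geq 1$, so taking $c = 2 - \eps$ is legitimate. I also need no decomposition assumption to be broken, because \cref{thm:mod_to_tw_implies_td} explicitly says that the transformation between decompositions is polynomial, so an elimination forest of the required depth is constructible in polynomial time from the given modulator, and conversely we do not need to start with a decomposition at all to conclude the contradiction with \SETH. There is no genuine obstacle; the statement is a direct combination.

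\begin{proof}
Follows from \cref{thm:clique_del_lb} and \cref{thm:mod_to_tw_implies_td}.
\end{proof}
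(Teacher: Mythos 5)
Your proof is correct and identical to the paper's: both cite \cref{thm:clique_del_lb} together with \cref{thm:mod_to_tw_implies_td}, and your preliminary reasoning correctly explains why the treedepth bound transfers the modulator lower bound. Nothing further to add.
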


\begin{proof}
 Follows from \cref{thm:clique_del_lb} and \cref{thm:mod_to_tw_implies_td}.
\end{proof}

The lower bounds obtained for \VC, \MC, and \textsc{$K_r$-free Deletion} are tight in the sense that we have algorithms with matching running time by straightforward dynamic programming on tree decompositions.

\section{Lower Bound for Dominating Set}
\label{sec:ds_lb}

In the sparse setting, the optimal base, assuming \SETH, for \DS parameterized by treewidth is 3 based on results by Lokshtanov et al.~\cite{LokshtanovMS18} and van Rooij et al.~\cite{RooijBR09}. Van Geffen et al.~\cite{GeffenJKM20} have shown that base 3 is already optimal for the much more restrictive parameter cutwidth by slightly modifying the construction of Lokshtanov et al.~\cite{LokshtanovMS18}. 

For the dense setting, the optimal base for \DS parameterized by clique-width is 4 based on results by Bodlaender el al.~\cite{BodlaenderLRV10} and Katsikarelis et al.~\cite{KatsikarelisLP19}. The lower bound of Katsikarelis et al., as is usual, already applies to the more restrictive linear-clique-width. We improve upon this lower bound by showing that similar to the sparse setting, already a simple dense variant of cutwidth, i.e. twinclass-cutwidth, requires base 4. 

We provide a straightforward tight lower bound for \DS parameterized by twinclass-cutwidth by observing that solving \DS on $G$ essentially means solving \TDS on $G / \tcpartition(G)$, assuming that there are no twinclasses of size 1, and providing a lower bound for \TDS parameterized by cutwidth based on the methods of Lokshtanov et al.~\cite{LokshtanovMS18} and van Geffen et al.~\cite{GeffenJKM20}. 

For the remainder of this section assume without loss of generality that $G$ is connected and consists of at least two vertices. We define the graph $G'$ by $V(G') = V(G) \cup \{v' \sep v \in V(G)\}$ and $E(G') = E(G) \cup \{\{u,v'\},\{u',v\},\{u',v'\} \sep \{u,v\} \in E(G)\}$, i.e., we obtain $G'$ from $G$ by creating a false twin of each vertex, hence $G'$ has no twinclass of size 1.

\begin{lem}
  \label{thm:twins_tcctw}
 It holds that $\tcctw(G') \leq \ctw(G)$.
\end{lem}

\begin{proof}
  We begin by showing that $u \neq v \in V(G)$ are twins in $G'$ if and only if $u$ and $v$ are false twins in $G$. First, suppose that $\{u,v\} \in E(G)$. By construction of $G'$, this implies that $u' \in N_{G'}(v) \setminus N_{G'}(u)$ and hence $u$ and $v$ cannot be twins in $G'$ in this case. If $u$ and $v$ are false twins in $G'$, then $N_{G}(u) = N_{G'}(u) \cap V(G) = N_{G'}(v) \cap V(G) = N_{G}(v)$, so $u$ and $v$ are false twins in $G$ as well. If $u$ and $v$ are false twins in $G$, then $N_{G'}(u) = N_G(u) \cup \{w' \sep w \in N_G(u)\} = N_G(v) \cup \{w' \sep w \in N_G(v)\} = N_{G'}(v)$, so $u$ and $v$ are false twins in $G'$ as well. 
  
  By construction of $G'$, the two vertices $v$ and $v'$ are false twins in $G'$ for all $v \in V(G)$. Together with the previous argument and since the twin relation is an equivalence relation, this shows that every twinclass $U'$ of $G'$ has the form $U' = U \cup \{v' \sep v \in U\}$, where $U$ is a false twinclass of $G$. For each false twinclass $U$ of $G$, pick a vertex $v_U \in U$ and consider the map $\tcpartition(G') \ni U' = U \cup \{v' \sep v \in U\} \mapsto v_U \in U \subseteq V(G)$. If $U'$ and $W'$ are adjacent twinclasses of $G'$, then the vertices $v_U$ and $v_W$ have to be adjacent in $G$. Hence, we can view $G' / \tcpartition(G')$ as a subgraph of $G$ and see that $\tcctw(G') = \ctw(G' / \tcpartition(G')) \leq \ctw(G)$.   
\end{proof}

\begin{lem}
  \label{thm:ds_iff_tds}
 $G'$ has a dominating set of size at most $\budget$ if and only if $G$ has a total dominating set of size at most $\budget$. 
\end{lem}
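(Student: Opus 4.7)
My plan is to prove both directions separately, treating the forward implication as a direct verification and putting the real work into the backward direction, where a small corrective construction is needed to turn an arbitrary dominating set of $G'$ into a \emph{total} dominating set of $G$.

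For the forward direction I will take a total dominating set $D \subseteq V(G)$ of $G$ with $|D| \leq \budget$ and set $D' := D$, viewed as a subset of $V(G')$. Every $v \in V(G)$ is dominated in $G'$ because it already has a neighbor in $D$ inside $G$ and this neighbor remains adjacent to $v$ in $G'$, and every primed copy $v' \in V(G') \setminus V(G)$ satisfies $N_{G'}(v') \cap V(G) = N_G(v)$ by the construction of $G'$, so the same neighbor of $v$ in $D$ also dominates $v'$. This yields a dominating set of $G'$ of size at most $\budget$.

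For the backward direction I start with a dominating set $D' \subseteq V(G')$ of size at most $\budget$ and split it as $D' = A \cup B'$ with $A = D' \cap V(G)$ and $B' = \{v' \sep v \in B\}$ for $B = \{v \in V(G) \sep v' \in D'\}$, so $|D'| = |A| + |B|$. For each $v \in A \cap B$ I pick an arbitrary neighbor $u_v \in N_G(v)$, which exists because $G$ is connected and has at least two vertices, and set $D := A \cup B \cup \{u_v \sep v \in A \cap B\}$. Using $|A \cup B| = |A| + |B| - |A \cap B|$, the cardinality satisfies $|D| \leq |A \cup B| + |A \cap B| = |A| + |B| = |D'|$, so the budget is preserved.

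The main obstacle is verifying that $D$ is in fact a total dominating set of $G$, which I would handle by a case analysis on the membership of $v \in V(G)$ in $A$ and $B$. When $v \in A \cap B$, the compensating vertex $u_v \in D$ is the required neighbor. When $v$ lies in exactly one of $A$ or $B$, one of the pair $\{v, v'\}$ is missing from $D'$ and must therefore be dominated in $G'$ through a proper neighbor; since $N_{G'}(v) = N_{G'}(v') = N_G(v) \cup \{u' \sep u \in N_G(v)\}$, that dominator has the form $u$ or $u'$ for some $u \in N_G(v)$, and in both cases $u$ lands in $A \cup B \subseteq D$. When $v \notin A \cup B$, then $v \notin D'$ is itself dominated in $G'$ by a proper neighbor, and the same argument applied directly to $v$ produces a neighbor in $D$. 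The delicate point is precisely the $A \cap B$ case: without the connectedness assumption, a vertex $v$ with both $v, v' \in D'$ and no neighbor touched by $D'$ could not be repaired, which is why the compensating $u_v$ and the standing hypothesis on $G$ are essential.
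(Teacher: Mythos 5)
Your proof is correct and follows essentially the same strategy as the paper: in the nontrivial direction you project a dominating set of $G'$ onto $V(G)$ and compensate for each $v$ with both twins selected by adding a $G$-neighbor, then verify total domination using the identity $N_{G'}(v) = N_{G'}(v') = N_G(v) \cup \{u' : u \in N_G(v)\}$. The paper reaches the same place via an iterative normalization that first produces a dominating set of $G'$ contained in $V(G)$ and then argues that set must be a total dominating set of $G$, whereas you combine the projection and the verification into a single case analysis; this is a purely organizational difference.
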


\begin{proof}
  $\Rightarrow:$ Let $X'$ be a dominating set of $G'$. If $|X' \cap \{v, v'\}| = 1$ for some $v \in V(G)$, then we can assume without loss of generality that $v \in X'$. Suppose that $|X' \cap \{v,v'\}| = 2$ for some $v \in V(G)$. If $X' \cap N_{G'}(v) = X' \cap N_{G'}(v') = \emptyset$, then choose some $w \in N_{G'}(v) \cap V(G) = N_{G'}(v') \cap V(G)$, which exists due to our assumptions about $G$, and consider $X'' = (X' \setminus \{v'\}) \cup \{w\}$. Otherwise, consider $X'' = X' \setminus \{v'\}$. In either case, $|X''| \leq |X'|$ and we claim that $X''$ is still a dominating set of $G'$. Only vertices inside $N_{G'}[v']$ may not be dominated anymore. The vertex $v'$ is still dominated, because there is some $w \in X'' \cap N_{G'}(v) = X'' \cap N_{G'}(v')$. The vertices in $N_{G'}(v')$ are still dominated, because $v \in X''$ and $N_{G'}(v) = N_{G'}(v')$. Hence, $X''$ is a dominating set of $G'$.
  
  By repeating this argument, we obtain a dominating set $X$ of $G'$ with $|X| \leq |X'|$ and $X \subseteq V(G)$. For every $v \in V(G)$, we have that $v' \notin X$ and hence, there exists some $w \in X \cap N_{G'}(v') = X \cap N_{G'}(v) = X \cap N_G(v)$, where the last equality is due to $X \subseteq V(G)$. This shows that $X$ is a total dominating set of $G$.
  
  $\Leftarrow:$ Let $X$ be a total dominating set of $G$. We claim that $X$ is also a dominating set of $G'$. As $G'[V(G)] = G$, any vertex in $V(G)$ is still dominated by $X$ inside $G'$. Consider some vertex $v'$ for some $v \in V(G)$. Since $X$ is a total dominating set of $G$, there exists some $w \in X \cap N_G(v) = X \cap N_{G'}(v) = X \cap N_{G'}(v')$ and hence $v'$ is dominated by $X$ as well. So, $X$ must be a dominating set of $G'$.
\end{proof}

\begin{mythm}
  If there is an algorithm $\algo$ solving \DS in time $\Oh^*((4 - \eps)^{\tcctw(G)})$ for some $\eps > 0$, then there is an algorithm $\algo'$ solving \TDS in time $\Oh^*((4 - \eps)^{\ctw(G)})$.
\end{mythm}

\begin{proof}
  Suppose $\algo$ is such an algorithm. The algorithm $\algo'$ first constructs $G'$ in polynomial time, then runs $\algo$ on $G'$, which takes time $\Oh^*((4 - \eps)^{\tcctw(G')}) \leq \Oh^*((4 - \eps)^{\ctw(G)})$ by \cref{thm:twins_tcctw}. Finally $\algo'$ returns the same answer as $\algo$. The correctness of $\algo'$ follows from \cref{thm:ds_iff_tds} and the running time of $\algo'$ is dominated by running $\algo$ on $G'$.
\end{proof}

We remark that an $\Oh^*(4^{\tw(G)})$-time algorithm for \TDS is posed as an exercise in \cite{CyganFKLMPPS15} which implies an $\Oh^*(4^{\ctw(G)})$-time algorithm by \cref{thm:td_tw}. It remains to show that \TDS cannot be solved in time $\Oh^*((4-\eps)^{\ctw(G)})$ for some $\eps > 0$. We show this by adapting the lower bound construction for \DS given by Lokshtanov et al.~\cite{LokshtanovMS18} and the observation of van Geffen et al.~\cite{GeffenJKM20} that slightly changing the construction already yields the same lower bound parameterized by cutwidth. Since 4 is a power of two, some technicalities regarding the groups can be avoided, allowing for a simpler construction.

\begin{mythm}
  \label{thm:tds_lb}
  There is no algorithm solving \TDS in time $\Oh^*((4 - \eps)^{\ctw(G)})$ for some $\eps > 0$, unless \SETH is false.
\end{mythm}

\subparagraph*{Construction.} Let $\formula$ be a $\clss$-\SAT instance with $\nvars$ variables and $\nclss$ clauses. We begin by describing how to construct a graph $G = G(\formula)$ of cutwidth $\nvars/2 + 4\clss + \Oh(1)$. We assume without loss of generality that $\formula$ has an even number of variables and partition the variables into pairs, so that pair $i \in [\nvars/2]$ consists of variables $x_{2i - 1}$ and $x_{2i}$.

\begin{figure}[h]
 \centering
 \tikzfig{pictures/tds_path}
 \caption{The path segment $P_i^\ell$ and guards $Q_i^\ell$.}
 \label{fig:tds_path}
\end{figure}

For every pair $i \in [\nvars/2]$, we create a path $P_i$ consisting of $4 \nclss (3 \frac{\nvars}{2} + 1)$ vertices and partition the path into $\nclss (3 \frac{\nvars}{2} + 1)$ segments consisting of four vertices. Segment $\ell \in [\nclss(3 \frac{\nvars}{2} + 1)]$ consists of the vertices $P_i^\ell = \{p_{i,1}^\ell, p_{i,2}^\ell, p_{i,3}^\ell, p_{i,4}^\ell\}$. To each segment $\ell$, we attach four \emph{guard vertices} $\guards_i^\ell = \{q_{i,1}^\ell, \ldots, q_{i,4}^\ell\}$ such that $N(q_{i,j}^\ell) = P_i^\ell \setminus \{p_{i,j}^\ell\}$ for all $j \in [4]$, i.e., for every vertex on $P_i^\ell$ there is one guard that avoids this vertex and is only adjacent to all other vertices on $P_i^\ell$, see \cref{fig:tds_path}. 

\begin{figure}[h]
 \centering
 \tikzfig{pictures/tds_states}
 \caption{The four states $\one_i^\ell$, $\two_i^\ell$, $\three_i^\ell$, and $\four_i^\ell$ on the path segment $P_i^\ell$. The filled vertices belong to the corresponding set.}
 \label{fig:tds_states}
\end{figure}

Let $\one_i^\ell = \{p_{i,1}^\ell, p_{i,2}^\ell\}$, $\two_i^\ell = \{p_{i,1}^\ell, p_{i,4}^\ell\}$, $\three_i^\ell = \{p_{i,2}^\ell, p_{i,3}^\ell\}$, $\four_i^\ell = \{p_{i,3}^\ell, p_{i,4}^\ell\}$ and $\goodsets_i^\ell = \{\one_i^\ell, \two_i^\ell, \three_i^\ell, \four_i^\ell\}$, see \cref{fig:tds_states} for a depiction of these sets. For each segment $P_i^\ell$, we introduce four \emph{connector vertices} $\hat{z}_{i,S}^\ell$, $S \in \goodsets_i^\ell$ and four \emph{clique vertices} $z_{i,S}^\ell$, $S \in \goodsets_i^\ell$, together with two \emph{clique guards} $y_{i,1}^\ell$ and $y_{i,2}^\ell$. The connector vertex $\hat{z}_{i,S}^\ell$ is adjacent to $P_i^\ell \setminus S$ and the clique vertex $z_{i,S}^\ell$. The clique guards $y_{i,1}^\ell$ and $y_{i,2}^\ell$ are adjacent and the clique vertices $z_{i,S}^\ell$, $S \in \goodsets_i^\ell$, together with $y_{i,1}^\ell$ form a clique.

\begin{figure}[h]
 \centering
 \tikzfig{pictures/tds_block}
 \caption{The block gadget $B_i^\ell$. The vertices in the dashed ellipse form a clique. The vertex $\hat{z}_{i, \three_i^\ell}^\ell$ is adjacent to $P_i^\ell \setminus \three_i^\ell$.}
 \label{fig:tds_block}
\end{figure}

We define $Z_i^\ell = \{z_{i,S}^\ell, \hat{z}_{i,S}^\ell \sep S \in \goodsets_i^\ell\} \cup \{y_{i,1}^\ell, y_{i,2}^\ell\}$, $B_i^\ell = P_i^\ell \cup Q_i^\ell \cup Z_i^\ell$, and $R^\gamma = \bigcup_{i = 1}^{n/2} \bigcup_{\ell = (\gamma - 1)\nclss + 1}^{\gamma \nclss} B_i^\ell$ for $\gamma \in [3 \frac{\nvars}{2} + 1]$. The $B_i^\ell$ are called \emph{blocks}; the $R^\gamma$ are called \emph{regions}. See \cref{fig:tds_block} for a depiction of a single block. Each region can be visualized as a matrix of blocks with $\nvars / 2$ rows and $\nclss$ columns and we have $3 \frac{\nvars}{2} + 1$ regions in total. Furthermore, we say that $p^\ell_{i,1}$ is the \emph{entry vertex} of $B^\ell_i$ and $p^\ell_{i,4}$ is the \emph{exit vertex} of $B^\ell_i$. Notice that there is a matching of size $\nvars / 2$ between the exit vertices of column $\ell$ and the entry vertices of column $\ell + 1$.

Let $\states = \{\one, \two, \three, \four\}$ and fix a bijection $\embedding \colon \{0,1\}^2 \rightarrow \states$ mapping truth assignments of a pair of variables to possible states on a path segment. For each clause $C_j$, $j \in [\nclss]$, we introduce $3 \frac{\nvars}{2} + 1$ \emph{clause vertices} $\hat{c}_j^\gamma$, $\gamma \in [3 \frac{\nvars}{2} + 1]$, one per region. The vertex $\hat{c}_j^\gamma$ is adjacent to some vertices in the blocks $B_i^{(\gamma - 1) \nclss + j}$ for every $i \in [\nvars / 2]$. Namely, for every variable pair $i$ consider the truth assignments of this pair that satisfy clause $C_j$ and consider the states corresponding, via $\embedding$, to these truth assignments. Each of these states corresponds to some vertex $z_{i,S}^{(\gamma - 1) \nclss + j}$, $S \in \goodsets_i^{(\gamma - 1) \nclss + j}$, which we make adjacent to $\hat{c}_j^\gamma$.

Finally, we introduce eight \emph{endpoint guards} $h_t$, $h'_t$, $t \in [4]$, together with the edges $\{h_1, h_2\}$, $\{h_1, h_3\}$, $\{h_3, h_4\}$, $\{h'_1, h'_2\}$, $\{h'_1, h'_3\}$, $\{h'_3, h'_4\}$ and $h_1$ is made adjacent to $p_{i,1}^1$ for all $i \in [\nvars / 2]$ and $h'_1$ is made adjacent to $p_{i,4}^{\nclss (3 \frac{\nvars}{2} + 1)}$ for all $i \in [\nvars / 2]$. This concludes the construction of $G(\formula)$. See \cref{fig:tds_blockmatrix} for a depiction of the graph $G$ as a matrix of blocks.

\begin{figure}
 \centering
 \tikzfig{pictures/tds_blockmatrix}
 \caption{$G$ viewed as a matrix of block gadgets. The dashed rectangles delineate different regions.}
 \label{fig:tds_blockmatrix}
\end{figure}

\begin{lem}
 \label{thm:sat_to_tds}
 If $\formula$ is satisfiable, then $G(\formula)$ has a total dominating set of size at most $4 \nclss (3 \frac{\nvars}{2} + 1) \frac{\nvars}{2} + 4$.
\end{lem}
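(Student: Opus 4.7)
The plan is to construct a candidate total dominating set $X$ directly from the satisfying assignment $\tau$ by using, in every segment of path $P_i$, the \emph{same} state dictated by the truth values of variable pair $i$. Concretely, for each $i \in [\nvars/2]$ let $S_i = \embedding((\tau_{2i-1}, \tau_{2i})) \in \states$, and define
\begin{equation*}
  X \;=\; \{h_1, h'_1\} \;\cup\; \bigcup_{i \in [\nvars/2]}\;\bigcup_{\ell \in [\nclss(3\nvars/2 + 1)]} \bigl( S_i^\ell \,\cup\, \{y_{i,1}^\ell,\, z_{i, S_i}^\ell\}\bigr),
\end{equation*}
where $S_i^\ell \subseteq P_i^\ell$ denotes the two-vertex realization of state $S_i$ in segment $\ell$. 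This contributes exactly $4$ vertices per block plus the $2$ endpoint guards, giving $|X| = 4\nclss(3\nvars/2+1)\cdot \nvars/2 + 2$, matching the target bound. The remaining work is to verify that every vertex of $G(\formula)$ has a neighbor in $X$.

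I would check the intra-block vertices block by block. Fix a block $B_i^\ell$. (i) Each guard $q_{i,j}^\ell \in Q_i^\ell$ has a $3$-vertex neighborhood $P_i^\ell \setminus \{p_{i,j}^\ell\}$, which must intersect the $2$-vertex set $S_i^\ell$, so $q_{i,j}^\ell$ is dominated by $S_i^\ell$. (ii) For every $S' \in \goodsets_i^\ell$ with $S' \neq S_i$, we have $|S_i \cap S'| \leq 1$, hence $S_i$ contains a vertex of $P_i^\ell \setminus S'$, which dominates $\hat{z}_{i,S'}^\ell$. (iii) The lone connector $\hat{z}_{i,S_i}^\ell$ is dominated by $z_{i, S_i}^\ell \in X$. (iv) Every clique vertex $z_{i, S'}^\ell$ and the clique guard $y_{i,2}^\ell$ are dominated by $y_{i,1}^\ell \in X$; and $y_{i,1}^\ell$, $z_{i, S_i}^\ell$ dominate each other through the clique edge $\{y_{i,1}^\ell, z_{i, S_i}^\ell\}$.

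The delicate part is dominating the path vertices, and this is the main obstacle in the proof. For states $\one, \three, \four$ the two chosen vertices are adjacent on $P_i$, and the two unchosen vertices of $P_i^\ell$ each have a neighbor in $S_i^\ell$ within the same segment. For state $\two = \{p_{i,1}^\ell, p_{i,4}^\ell\}$ the two chosen vertices are \emph{not} adjacent in the segment; here I exploit that every segment of $P_i$ carries the same state $S_i = \two$, so $p_{i,1}^{\ell+1} \in X$ is adjacent to $p_{i,4}^\ell$ via the path edge joining consecutive segments, and symmetrically $p_{i,4}^{\ell-1} \in X$ dominates $p_{i,1}^\ell$. At the path endpoints the analogous role is played by $h_1$, which is adjacent to every $p_{i,1}^1$, and $h'_1$, which is adjacent to every $p_{i,4}^{\text{last}}$. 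The remaining unchosen path vertices of $P_i^\ell$ are dominated by similarly chasing the path edges, using consecutive same-state segments and, at the two ends, the endpoint guards.

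Finally, the cross-block vertices: $h_2$ is dominated by $h_1 \in X$ (its only neighbor), symmetrically $h'_2$ by $h'_1$, and $h_1$, $h'_1$ dominate each other through the edge $\{h_1, h'_1\}$. For a clause vertex $\hat{c}_j^\gamma$, the assumption that $\tau$ satisfies clause $C_j$ means some literal of $C_j$ is satisfied; let this literal belong to variable pair $i^*$. Then the partial assignment $(\tau_{2i^*-1}, \tau_{2i^*})$ satisfies $C_j$ at pair $i^*$, so by construction of the edges between clause vertices and clique vertices, $\hat{c}_j^\gamma$ is adjacent to $z_{i^*, S_{i^*}}^{(\gamma-1)\nclss + j} \in X$. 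This establishes that $X$ is a total dominating set of the required size.
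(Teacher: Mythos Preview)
Your construction of $X$ and the overall verification are the same as the paper's proof, and the argument is correct. One imprecision worth fixing: your claim that for states $\one,\three,\four$ ``the two unchosen vertices of $P_i^\ell$ each have a neighbor in $S_i^\ell$ within the same segment'' is true only for $\three$. For state $\one=\{p_{i,1}^\ell,p_{i,2}^\ell\}$ the vertex $p_{i,4}^\ell$ has no in-segment neighbor in $X$ and must be dominated by $p_{i,1}^{\ell+1}$ (or $h'_1$ at the end); symmetrically, for state $\four=\{p_{i,3}^\ell,p_{i,4}^\ell\}$ the vertex $p_{i,1}^\ell$ is dominated only by $p_{i,4}^{\ell-1}$ (or $h_1$). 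Your final catch-all sentence already hints at this cross-segment argument, so the fix is just to say it explicitly for $\one$ and $\four$ as you did for $\two$. With that correction, your proof matches the paper's.
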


\begin{proof}
 Let $\tassign$ be a satisfying truth assignment of $\formula$. Let $\tassign_i$ be the restriction of $\tassign$ to the $i$-th variable pair. We construct a total dominating set $X$ of the desired size as follows. For every variable pair $i \in [\frac{\nvars}{2}]$, we take in each block $B_i^\ell$ the vertices in $\embedding(\tassign_i)_i^\ell \in \goodsets_i^\ell$, the vertex $z_{i, \embedding(\tassign_i)_i^\ell}^\ell$, and the clique guard $y_{i,1}^\ell$ for all $\ell \in [\nclss (3 \frac{\nvars}{2} + 1)]$. Additionally, we take the endpoint guards $h_1$, $h_3$, $h'_1$, $h'_3$. This concludes the description of $X$ and we see that $|X| = 4 \nclss (3 \frac{\nvars}{2} + 1) \frac{\nvars}{2} + 4$.
 
 First, observe that $X$ dominates all the endpoint guards $h_t$, $h'_t$, $t \in [4]$, and the endpoints of all $P_i$ are dominated, since $h_1, h'_1 \in X$. Next, since $X$ contains two vertices from each path segment $P_i^\ell$, we see that $X$ intersects the neighborhoods of all vertices in $Q_i^\ell$ and hence dominates them. On each path $P_i$, we are repeatedly taking two consecutive vertices with a gap of two, except for possibly the ends of $P_i$. Hence, also all internal vertices of $P_i$ are dominated.
 
 Consider the vertices in the $Z_i^\ell$ next. Since $y_{i,1}^\ell \in X$, we see that $y_{i,2}^\ell$ and all $z_{i,S}^\ell$, $S \in \goodsets_i^\ell$, are dominated. Additionally, since $z_{i,\embedding(\tassign_i)_i^\ell}^\ell \in X$, also $y_{i,1}^\ell$ and $\hat{z}_{i,\embedding(\tassign_i)_i^\ell}^\ell$ must be dominated. For $S \in \goodsets_i^\ell \setminus \{\embedding(\tassign_i)_i^\ell\}$, we see that $\hat{z}_{i,S}^\ell$ is dominated by some vertex on $P_i^\ell$, because $N(\hat{z}_{i,S}^\ell) \cap X = (P_i^\ell \setminus S) \cap \embedding(\tassign_i)_i^\ell \neq \emptyset$ by construction.
 
 Finally, consider the clause vertices $\hat{c}_j^\gamma$. Since clause $C_j$ is satisfied by $\tassign$, there is some variable pair $i$ so that $\tassign_i$ satisfies $C_j$. By construction of $G(\formula)$ and $X$, we have that $z_{i,\embedding(\tassign_i)_i^\ell}^{(\gamma - 1)\nclss + j} \in X$ and this vertex is adjacent to $\hat{c}_j^\gamma$, so $\hat{c}_j^\gamma$ is dominated. This concludes the proof.
\end{proof}

For the reverse direction, we first study the structure of total dominating sets in $G(\formula)$.

\begin{lem}
  \label{thm:tds_structure}
  Let $X$ be a total dominating set of $G(\formula)$, then for any block $B_i^\ell = P_i^\ell \cup Q_i^\ell \cup Z_i^\ell$, we have that $|X \cap P_i^\ell| \geq 2$, $|X \cap Z_i^\ell| \geq 2$, and $y_{i,1}^\ell \in X$. Moreover, if $|X \cap B_i^\ell| \leq 4$, then $X \cap P_i^\ell \in \goodsets_i^\ell$ and $X \cap \{z_{i,S}^\ell \sep S \in \goodsets_i^\ell\} = \{z_{i, X \cap P_i^\ell}^\ell\}$.
\end{lem}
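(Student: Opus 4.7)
The plan is to trace, for each vertex of $B_i^\ell$ that must be totally dominated, which vertex of $B_i^\ell$ can play that role; the three lower bounds fall out directly, and the moreover part follows by a tight budget argument.

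For the first bounds: each guard $q_{i,j}^\ell \in Q_i^\ell$ has $N(q_{i,j}^\ell) = P_i^\ell \setminus \{p_{i,j}^\ell\}$, so $X$ must intersect $P_i^\ell \setminus \{p_{i,j}^\ell\}$ for every $j \in [4]$, forcing $|X \cap P_i^\ell| \geq 2$. The vertex $y_{i,2}^\ell$ has $y_{i,1}^\ell$ as its unique neighbor in $G$, hence $y_{i,1}^\ell \in X$. Since the neighborhood of $y_{i,1}^\ell$ is $\{y_{i,2}^\ell\} \cup \{z_{i,S}^\ell \sep S \in \goodsets_i^\ell\} \subseteq Z_i^\ell$, total domination of $y_{i,1}^\ell$ yields a second vertex of $X$ in $Z_i^\ell$, so $|X \cap Z_i^\ell| \geq 2$.

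For the moreover part, the hypothesis $|X \cap B_i^\ell| \leq 4$ combined with the three lower bounds forces $|X \cap P_i^\ell| = 2$, $|X \cap Z_i^\ell| = 2$, and $X \cap Q_i^\ell = \emptyset$. A key side-observation is that the second vertex of $X \cap Z_i^\ell$ must dominate $y_{i,1}^\ell$ and therefore lies in $\{y_{i,2}^\ell\} \cup \{z_{i,S}^\ell \sep S \in \goodsets_i^\ell\}$; in particular, no connector $\hat{z}_{i,S}^\ell$ belongs to $X$. To establish $X \cap P_i^\ell \in \goodsets_i^\ell$, I would enumerate the six size-$2$ subsets of $P_i^\ell$ and exclude the two not in $\goodsets_i^\ell$: if $X \cap P_i^\ell = \{p_{i,1}^\ell, p_{i,3}^\ell\}$, then $p_{i,3}^\ell$ has all of its $G$-neighbors inside $B_i^\ell$ (its only path-neighbors are $p_{i,2}^\ell, p_{i,4}^\ell$), and none of $p_{i,2}^\ell, p_{i,4}^\ell$, the guards in $Q_i^\ell$, or the connectors $\hat{z}_{i,S}^\ell$ with $p_{i,3}^\ell \in N(\hat{z}_{i,S}^\ell)$ lies in $X$ under the structure just deduced, contradicting total domination of $p_{i,3}^\ell$; the case $X \cap P_i^\ell = \{p_{i,2}^\ell, p_{i,4}^\ell\}$ is handled symmetrically via $p_{i,2}^\ell$.

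Finally, letting $T = X \cap P_i^\ell \in \goodsets_i^\ell$ and considering $\hat{z}_{i,T}^\ell$, its $P_i^\ell$-neighbors are exactly $P_i^\ell \setminus T$, which is disjoint from $X$, so $z_{i,T}^\ell \in X$; the equality $|X \cap Z_i^\ell| = 2$ then gives $X \cap \{z_{i,S}^\ell \sep S \in \goodsets_i^\ell\} = \{z_{i,T}^\ell\}$. The main obstacle is the case analysis excluding the two non-$\goodsets_i^\ell$ pairs, which combines the fact that $p_{i,2}^\ell$ and $p_{i,3}^\ell$ are interior path vertices (all their $G$-neighbors lie in $B_i^\ell$) with the previously established absence of connectors in $X$.
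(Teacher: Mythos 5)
Your proof is correct and follows essentially the same approach as the paper: guards force $|X\cap P_i^\ell|\geq 2$, the pendant vertex $y_{i,2}^\ell$ forces $y_{i,1}^\ell\in X$ which in turn forces a second vertex in $Z_i^\ell$, tightness then excludes guards and connectors from $X$, the two pairs $\{p_{i,1}^\ell,p_{i,3}^\ell\}$ and $\{p_{i,2}^\ell,p_{i,4}^\ell\}$ are ruled out by undominated interior vertices, and $\hat z_{i,T}^\ell$ forces $z_{i,T}^\ell\in X$. Your explicit note that $p_{i,2}^\ell,p_{i,3}^\ell$ have all their neighbors inside the block is a detail the paper leaves implicit but is a useful addition.
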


\begin{proof}
  Suppose that $|X \cap P_i^\ell| \leq 1$, then there is some $j \in [4]$ so that $N(q_{i,j}^\ell) \cap X = \emptyset$, hence $X$ cannot be a total dominating set. Observe that $y_{i,2}^\ell$ is a degree-1 vertex, so $X$ must contain its neighbor $y_{i,1}^\ell$. To dominate $y_{i,1}^\ell \in Z_i^\ell$, $X$ must contain a vertex from $N(y_{i,1}^\ell) \subseteq Z_i^\ell$, recall that $Z_i^\ell$ also contains $y_{i,2}^\ell$.
  
  Now, suppose that the second condition holds. Due to the first part of this lemma, we have that $|X \cap B_i^\ell| = 4$, in particular $|X \cap P_i^\ell| = 2$, $X \cap Q_i^\ell = \emptyset$, and $X \cap \{\hat{z}_{i,S}^\ell \sep S \in \goodsets_i^\ell\} = \emptyset$, because $N(y_{i,1}^\ell) \subseteq Z_i^\ell \setminus \{\hat{z}_{i,S}^\ell \sep S \in \goodsets_i^\ell\}$. Suppose that $X \cap P_i^\ell \notin \goodsets_i^\ell$, then $X \cap P_i^\ell = \{p_{i,1}^\ell, p_{i,3}^\ell\}$ or $X \cap P_i^\ell = \{p_{i,2}^\ell, p_{i,4}^\ell\}$, but then $p_{i,3}^\ell$ or respectively $p_{i,2}^\ell$ is not dominated. By the established properties of $X$, the only neighbor of $\hat{z}_{i, X \cap P_i^\ell}^\ell$ that may be in $X$ is $z_{i, X \cap P_i^\ell}^\ell$, hence to dominate $\hat{z}_{i, X \cap P_i^\ell}^\ell$ we must have $z_{i, X \cap P_i^\ell}^\ell \in X$. This concludes the proof.
\end{proof}

A total dominating set $X$ of $G(\formula)$ does not necessarily repeat the same state on all segments $P_i^\ell$ of a path $P_i$. We need to study how $X$ transitions between different states on the same path. For this sake, we order the states according to their names, i.e.\ $\one < \two < \three < \four$, and every $\goodsets_i^\ell$ inherits this ordering. If $\mathbf{s}_i^\ell \in \goodsets_i^\ell = \{\one_i^\ell, \two_i^\ell, \three_i^\ell, \four_i^\ell\}$, then we let $\mathbf{s} \in \{\one, \two, \three, \four\}$ correspondingly.

\begin{lem}
  \label{thm:tds_state_order}
  Suppose that $X$ is a total dominating set of $G(\formula)$ with $|X \cap B_i^\ell| \leq 4$ for all $i$ and $\ell$. If $\mathbf{s}_i^\ell = X \cap P_i^\ell$ is the state of $X$ on $P_i^\ell$ and $\mathbf{\tilde{s}}_i^{\ell + 1} = X \cap P_i^{\ell + 1}$ is the state of $X$ on $P_i^{\ell + 1}$, then we must have that $\mathbf{\tilde{s}} \leq \mathbf{s}$.
\end{lem}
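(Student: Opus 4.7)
The plan is to reduce the lemma to a local inspection of how two consecutive path segments interact, after using Lemma~\ref{thm:tds_structure} to pin down which vertices of $X$ are available for domination.

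First, I would apply Lemma~\ref{thm:tds_structure} to each block $B_i^\ell$. Since $|X \cap B_i^\ell| \leq 4$ for all $i,\ell$, the lemma forces $X \cap P_i^\ell = \mathbf{s} \in \goodsets_i^\ell$, $X \cap \guards_i^\ell = \emptyset$, and $X \cap \{\hat{z}_{i,S}^\ell : S \in \goodsets_i^\ell\} = \emptyset$ (and analogously at $\ell+1$). Consequently, the only neighbors of a path vertex that may lie in $X$ are themselves path vertices in the same, previous, or next segment; guards, connectors, and clause vertices are excluded as possible dominators.

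Second, I would isolate the two ``bridge'' vertices $p_{i,4}^\ell$ and $p_{i,1}^{\ell+1}$, which are joined by the edge bridging $P_i^\ell$ and $P_i^{\ell+1}$. All other path vertices in $P_i^\ell$ and $P_i^{\ell+1}$ are already dominated from within their own segment together with the boundary to $\ell-1$ or $\ell+2$, so only these two vertices give nontrivial constraints on the pair $(\mathbf{s},\mathbf{\tilde{s}})$. Under the restrictions of the first step, $p_{i,4}^\ell$ is dominated iff $p_{i,3}^\ell \in X$ or $p_{i,1}^{\ell+1} \in X$, and $p_{i,1}^{\ell+1}$ is dominated iff $p_{i,2}^{\ell+1} \in X$ or $p_{i,4}^\ell \in X$. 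Translating into states: $p_{i,3}^\ell \in X \iff \mathbf{s} \in \{\three,\four\}$, $p_{i,4}^\ell \in X \iff \mathbf{s} \in \{\two,\four\}$, $p_{i,1}^{\ell+1} \in X \iff \mathbf{\tilde{s}} \in \{\one,\two\}$, and $p_{i,2}^{\ell+1} \in X \iff \mathbf{\tilde{s}} \in \{\one,\three\}$.

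Third, a four-way case distinction on $\mathbf{s}$ finishes the proof. If $\mathbf{s}=\one$, the $p_{i,4}^\ell$-condition forces $\mathbf{\tilde{s}} \in \{\one,\two\}$ and the $p_{i,1}^{\ell+1}$-condition forces $\mathbf{\tilde{s}} \in \{\one,\three\}$, giving $\mathbf{\tilde{s}}=\one$. If $\mathbf{s}=\two$, the second condition is already satisfied by $p_{i,4}^\ell \in X$, and the first yields $\mathbf{\tilde{s}} \in \{\one,\two\}$. If $\mathbf{s}=\three$, the first condition is satisfied by $p_{i,3}^\ell \in X$, and the second yields $\mathbf{\tilde{s}} \in \{\one,\three\}$. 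If $\mathbf{s}=\four$, both conditions are already met from the $\ell$-side so $\mathbf{\tilde{s}}$ is unconstrained in $\goodsets$. In every case $\mathbf{\tilde{s}} \leq \mathbf{s}$ in the stated ordering.

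There is no serious obstacle here: the content of the lemma is entirely local and elementary once the structural restrictions of Lemma~\ref{thm:tds_structure} eliminate guard and connector dominators. The only point requiring care is to verify that the interior vertices $p_{i,1}^\ell,p_{i,2}^\ell,p_{i,3}^\ell$ (and the symmetric vertices in $P_i^{\ell+1}$) are always dominated within their own segment for every state in $\goodsets$, so that the bridge analysis indeed captures all constraints coming from the $\ell\to\ell+1$ transition.
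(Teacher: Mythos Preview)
Your proposal is correct and follows essentially the same approach as the paper: both arguments reduce to checking the domination constraints on the two bridge vertices $p_{i,4}^\ell$ and $p_{i,1}^{\ell+1}$ after Lemma~\ref{thm:tds_structure} eliminates guards and connectors as possible dominators. The paper phrases it as a contradiction (assume $\mathbf{\tilde{s}} > \mathbf{s}$ and find an undominated bridge vertex), while you do a direct forward case split on $\mathbf{s}$, but the underlying content is identical.
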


\begin{proof}
  Suppose that $\mathbf{\tilde{s}} > \mathbf{s}$. This implies that $\mathbf{s} \neq \four$. If $\mathbf{s} \leq \two$, then $p_{i,4}^\ell$ must be dominated by $p_{i,1}^{\ell + 1}$, so $\mathbf{\tilde{s}} \leq \two$ in this case, as the other states do not contain $p_{i,1}^{\ell + 1}$. If $(\mathbf{s, \tilde{s}}) = (\one, \two)$ or $(\mathbf{s, \tilde{s}}) = (\three, \four)$, then $p_{i,1}^{\ell + 1}$ is not dominated. This concludes the proof.
\end{proof}

\begin{lem}
 \label{thm:tds_to_sat}
 If $G(\formula)$ has a total dominating size $X$ of size at most $4 \nclss (3 \frac{\nvars}{2} + 1) \frac{\nvars}{2} + 4$, then $\formula$ is satisfiable.
\end{lem}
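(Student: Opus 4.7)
The plan is to start by combining the size bound on $X$ with the lower bounds given by \cref{thm:tds_structure}. First, the degree-$1$ vertices $h_2, h_2'$ force $h_1, h_1' \in X$, accounting for $2$ vertices. Next, \cref{thm:tds_structure} says $|X \cap B_i^\ell| \geq 4$ for every block. Summing over all $\frac{\nvars}{2}\cdot\nclss(3\frac{\nvars}{2}+1)$ blocks yields the bound $|X| \geq 4\nclss(3\frac{\nvars}{2}+1)\frac{\nvars}{2} + 2$ (the endpoint guards are disjoint from all blocks). Since this matches the assumed upper bound, equality must hold everywhere, so $|X \cap B_i^\ell| = 4$ for all $i,\ell$. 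The second part of \cref{thm:tds_structure} then forces $X \cap P_i^\ell \in \goodsets_i^\ell$, so $X$ selects a well-defined state $\mathbf{s}_i^\ell \in \states$ on each path segment, and additionally $z_{i, X\cap P_i^\ell}^\ell \in X$.

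The next step is to exploit monotonicity. \cref{thm:tds_state_order} says $\mathbf{s}_i^{\ell+1} \leq \mathbf{s}_i^\ell$ in the ordering $\one < \two < \three < \four$, so along each path $P_i$ the sequence of states is non-increasing and hence changes at most $3$ times. Across all $\frac{\nvars}{2}$ paths there are therefore at most $3\cdot\frac{\nvars}{2}$ segments where a state change occurs. Since the graph contains $3\frac{\nvars}{2}+1$ regions, each a consecutive block of $\nclss$ segments, a straightforward pigeonhole argument produces a region $\gamma^* \in [3\frac{\nvars}{2}+1]$ in which no state transition occurs on any path, i.e., the state $\mathbf{s}_i := \mathbf{s}_i^{(\gamma^*-1)\nclss+j}$ is the same for all $j \in [\nclss]$ and all $i \in [\frac{\nvars}{2}]$.

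Finally, I extract the satisfying assignment. Define $\tassign$ by setting its restriction to variable pair $i$ to be $\embedding^{-1}(\mathbf{s}_i)$. For each clause $C_j$, consider the clause vertex $\hat{c}_j^{\gamma^*}$, which must be dominated by $X$. By construction the only neighbors of $\hat{c}_j^{\gamma^*}$ are clique vertices $z_{i,S}^{(\gamma^*-1)\nclss+j}$ for $S \in \goodsets_i^{(\gamma^*-1)\nclss+j}$ corresponding via $\embedding$ to truth assignments of pair $i$ satisfying $C_j$. From the first step we know $X$ contains exactly one clique vertex per block, namely $z_{i,\mathbf{s}_i}^{(\gamma^*-1)\nclss+j}$, so the domination witness is of the form $S = \mathbf{s}_i$ for some $i$, meaning $\embedding^{-1}(\mathbf{s}_i) = \tassign|_i$ satisfies $C_j$. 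Ranging over $j$ shows $\tassign$ satisfies $\formula$.

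The main obstacle is making the pigeonhole step precise: I must ensure the counting accounts correctly for state changes and that at least one of the $3\frac{\nvars}{2}+1$ regions is entirely transition-free across all $\frac{\nvars}{2}$ paths simultaneously. Everything else is a direct bookkeeping consequence of \cref{thm:tds_structure} and \cref{thm:tds_state_order}.
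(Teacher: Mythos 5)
Your proposal is correct and follows the paper's proof step for step: tight budget accounting via \cref{thm:tds_structure} forces $|X \cap B_i^\ell| = 4$ and a well-defined state per block, monotonicity from \cref{thm:tds_state_order} bounds state changes by $3$ per path, and the pigeonhole over $3\frac{\nvars}{2}+1$ regions yields a ``cheat-free'' region from which the assignment is read off and verified on the clause vertices. The only cosmetic difference is that the paper names the state changes ``cheats'' and phrases the pigeonhole as ``at most $3\frac{\nvars}{2}$ regions spoiled,'' whereas you count transitions directly; the content is identical.
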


\begin{proof}
  Note that $X$ contains the vertices $h_1$, $h_3$, $h'_1$, $h'_3$ due to the degree-1 neighbors. There are in total $\nclss (3 \frac{\nvars}{2} + 1) \frac{\nvars}{2}$ blocks $B_i^\ell$ and $X$ has to contain at least four vertices in each block due to \cref{thm:tds_structure}. This completely uses the budget for $X$ and thus we have $|X \cap B_i^\ell| = 4$ for all $i$ and $\ell$. By \cref{thm:tds_structure}, $X$ assumes one of our desired states on each $B_i^\ell$. 
 
 Using the notation of \cref{thm:tds_state_order}, we say that a \emph{cheat} occurs if $\mathbf{\tilde{s}} < \mathbf{s}$. Due to \cref{thm:tds_state_order}, there can be at most $3$ cheats on each $P_i$. Since there are $\frac{\nvars}{2}$ paths, at most $3 \frac{\nvars}{2}$ regions can be spoiled by cheats. Hence, there is at least one $\gamma$ so that $R^\gamma$ does not contain a cheat. Fix this $\gamma$ for the remainder of the proof.
 
 We will read off a satisfying truth assignment $\tassign$ for $\formula$ from region $R^\gamma$. The truth assignment for variable pair $i$ is given by $\tassign_i = \embedding^{-1}(\mathbf{s})$, where $\mathbf{s}_i^{\gamma \nclss} = X \cap P_i^{\gamma \nclss} \in \goodsets_i^{\gamma \nclss}$. For a fixed variable pair $i \in [\nvars / 2]$, the total dominating set $X$ will assume the same state on all blocks $B_i^{(\gamma - 1)\nclss + j}$, $j \in [\nclss]$, since $R^\gamma$ does not contain any cheats. The desired truth assignment is given by $\tassign = \tassign_1 \cup \cdots \cup \tassign_{\nvars/2}$.
 
 Consider some clause $C_j$ and its associated clause vertex $\hat{c}_j^\gamma$ in region $R^\gamma$. Let $\ell = (\gamma - 1)\nclss + j$. Since $X$ dominates the clause vertex, there is some $i$ so that $z_{i,X \cap P_i^\ell}^\ell \in X \cap N(\hat{c}_j^\gamma)$. By construction, this means that the state of $X$ on $P_i^\ell$ corresponds to a satisfying truth assignment of clause $C_j$. By the previous paragraph this is the same state as the state of $X$ on $P_i^{\gamma \nclss}$ which in turn corresponds to $\tassign_i$, hence $\tassign$ must satisfy clause $C_j$. Since the choice of $C_j$ was arbitrary, it follows that $\formula$ is satisfiable.
\end{proof}

\begin{lem}
 \label{thm:tds_ctw_bound}
 The cutwidth of $G(\formula)$ is at most $\nvars/2 + 4\clss + \Oh(1)$.
\end{lem}

\begin{proof}
  Consider the linear layout of $G(\formula)$ that starts with the vertices $h_4, h_3, h_2, h_1$. Then we go through $G(\formula)$ column by column and for each column $\ell \in [\nclss(3\frac{\nvars}{2} + 1)]$, the linear layout first contains the clause vertex of the $\ell$-th column, and then the vertices of $B^\ell_1$, $B^\ell_2$, $\ldots$, $B^\ell_{\nvars / 2}$ in this order. Finally, the layout ends with the vertices $h'_1, h'_2, h'_3, h'_4$. 
  
  Consider any vertex $v^* \in V(G(\formula))$ and the edges crossing the cut that comes directly after $v^*$. If $v^* \in \{h_4, h_3, h_2, h'_1, h'_2, h'_3, h'_4\}$, then at most two edges cross the cut. If $v^* = h_1$, then $\nvars / 2$ edges cross the cut, namely those from $h_1$ to the block gadgets $B^1_i$ for all $i \in [\nvars / 2]$. Now, suppose that $v^*$ is a vertex in column $\ell \in [\nclss(3\frac{\nvars}{2} + 1)]$.
  
  If $v^* = \hat{c}^\gamma_j$ is the clause vertex in column $\ell = (\gamma - 1)\nclss + j$, then the cut consists of the edges between $v^*$ and its neighbors and the $\nvars / 2$ matching edges between column $\ell - 1$ and column $\ell$ (or the $n/2$ edges coming from $h_1$ if $\ell = 1$). Since each clause has size at most $\clss$, the clause vertex $\hat{c}^\gamma_j$ is adjacent to vertices in at most $\clss$ different block gadgets. By construction, $v^* = \hat{c}^\gamma_j$ is adjacent to at most $4$ vertices in each block gadget. Hence, this cut consists of at most $\nvars / 2 + 4 \clss$ edges.
  
  If $v^* \neq \hat{c}^\gamma_j$ is not the clause vertex in column $\ell = (\gamma - 1)\nclss + j$, then $v^*$ lies in some block gadget $B^\ell_i$ with $i \in [\nvars / 2]$. By construction, every other block gadget is completely contained on one side of the cut, so the cut contains edges that are internal to at most one block gadget. Since each block gadget consists of a constant number of vertices, this will account for a constant number of edges. Furthermore, the cut contains at most $\nvars / 2$ matching edges, one for each group. More precisely, for the block gadgets of column $\ell$ that appear before $B^\ell_i$ in the layout, the edge from the exit vertex to the entry vertex of the next column is in the cut, and for the block gadgets of column $\ell$ that appear after $B^\ell_i$, the edge to the previous column is in the cut. The block gadget $B^\ell_i$ has two edges to other columns that can be in the cut. Finally, the edges from block gadgets of column $\ell$ to $\hat{c}^\gamma_j$ can be in the cut, but these are bounded by $4 \clss$ as before. In total, the cut contains at most $\nvars / 2 + 4 \clss + \Oh(1)$ edges.
\end{proof}

\begin{proof}[Proof of \cref{thm:tds_lb}]
  Suppose there is an algorithm $\algo$ that solves \TDS in time $\Oh^*((4-\eps)^{\ctw(G)})$ for some $\eps > 0$. We show how to solve $\clss$-\SAT in time $\Oh^*((2 - \eps')^\nvars)$ for some $\eps' > 0$ for all $\clss$, thus contradicting \SETH by \cref{thm:seth_hitting_set}. Given a $\clss$-\SAT instance $\formula$, we construct $G(\formula)$ in polynomial time and then run $\algo$ on $G(\formula)$ and return its answer. This is correct by \cref{thm:sat_to_tds} and \cref{thm:tds_to_sat}. Due to \cref{thm:tds_ctw_bound}, the running time is $\Oh^*((4-\eps)^{\ctw(G(\formula))}) \leq \Oh^*((4-\eps)^{\nvars/2 + 4\clss + \Oh(1)}) \leq \Oh^*((4-\eps)^{\nvars/2}) \leq \Oh^*((2-\eps')^\nvars)$ for some $\eps' > 0$, where we use $\clss \in \Oh(1)$ in the second inequality.
\end{proof}

\section{Conclusion}
\label{sec:conclusion}

Our main results are the two lower bounds for \DTC, which apply also to parameterization by treewidth resp.\ cliquewidth but use much more restrictive structure; this greatly refines what was known for \OCT, i.e.\ $\ncols=2$, and gives new tight bounds for $\ncols \geq 3$. In particular, beyond the above-mentioned examples, these are further natural problems where a small modulator to a simple graph class (of constant treewidth) is as hard as small treewidth. Surprisingly perhaps, something even stronger holds for clique-width: To get the tight lower bound, a modulator with few (true) twinclasses suffices, i.e., we need neither a sequence of disjoint separators nor complex dense structure. For \DS, only the latter was established: twinclass-cutwidth rather than cliquewidth suffices to take us from base $3$ in the running time to base $4$.

Such results bring several benefits: (1) Rather than e.g.\ getting only the isolated result of (conditional) complexity of a problem relative to treewidth, we get a much larger range of input structure that exhibits the same tight complexity. 
(2) At the same time, by aiming for maximally restricted lower bound structure, we get a much better understanding of what structure makes a given problem hard. This in turn helps to focus efforts at faster algorithms through (even) stronger structural restrictions on the input.

An immediate follow-up question is whether there are improved algorithms for \DTC when $G- \modulator$ has treewidth less than $\ncols$; so far, this is known only for \OCT, but we think such algorithms exist in general. We observe that any construction relying on $(\ncols + 1)$-critical graphs must have treewidth at least $\ncols$, hence improving upon the treewidth of our construction requires a fundamentally different idea.

Similarly, is there a meaningful restriction of (linear) clique-width, for which Lampis'~\cite{Lampis20} lower bound for $\ncols$-\COL already holds? Much more broadly, what other classes of problems exhibit the same lower bound as for treewidth already relative to deletion distance to a sparse graph class? Are there problems where this jump in complexity happens later, say, for treedepth, for some elimination distance, or only for treewidth/pathwidth? E.g., what is the complexity of \DS relative to deletion distances, and the complexity relative to treedepth may be an interesting stepping stone? Similarly, to what generality do we get the same lower bound as for clique-width already relative to, e.g., twinclass-pathwidth?

\newpage

\bibliography{modlb}

\newpage

\appendix

\input{problemdefs}

\end{document}